\tikzstyle{discarding}=[fill=white, draw=black, shape=circle, style=upground]
\tikzstyle{smalldiscarding}=[fill=white, draw=black, style=upground, scale=0.75]
\tikzstyle{backdiscard}=[fill=white, draw=black, shape=circle, style=downground, scale=0.5]
\tikzstyle{smallbackdiscard}=[fill=white, draw=black, shape=circle, style=downground, scale=0.5]
\tikzstyle{state}=[fill=white, draw=black, style=triang, tikzit shape=rectangle]
\tikzstyle{kstate}=[fill=white, draw=black, style=kpoint, tikzit shape=rectangle]
\tikzstyle{kstateconj}=[fill=white, draw=black, style=kpoint conjugate, tikzit shape=rectangle]
\tikzstyle{kstateBIG}=[fill=white, draw=black, style=big kpoint, tikzit shape=rectangle]
\tikzstyle{effect}=[fill=white, draw=black, style=triangdag]
\tikzstyle{keffect}=[fill=white, draw=black, style=kpoint adjoint]
\tikzstyle{keffectconj}=[fill=white, draw=black, style=kpoint transpose]
\tikzstyle{morphdag}=[style=mapdag]
\tikzstyle{morph}=[style=hadamard]
\tikzstyle{WIDEmorph}=[style=hadamard, minimum width=14mm]
\tikzstyle{morphtrans}=[style=maptrans]
\tikzstyle{morphconj}=[style=mapconj]
\tikzstyle{CPMmorph}=[style=dmap]
\tikzstyle{CPMmorphconj}=[style=dmapconj]
\tikzstyle{CPMmorphdag}=[style=dmapdag]
\tikzstyle{CPMmorphtrans}=[style=dmaptrans]
\tikzstyle{CPMstate}=[fill=white, draw=black, style=triang, doubled]
\tikzstyle{CPMstateBIG}=[fill=white, draw=black, style={triang_lesssep}, doubled]
\tikzstyle{CPMkstate}=[fill=white, draw=black, style=kpoint, tikzit shape=rectangle, doubled]
\tikzstyle{CPMkstateconj}=[fill=white, draw=black, style=kpoint conjugate, tikzit shape=rectangle, doubled]
\tikzstyle{CPMkstateBIG}=[fill=white, draw=black, style=big kpoint, tikzit shape=rectangle, doubled]
\tikzstyle{CPMkeffect}=[fill=white, draw=black, style=kpoint adjoint, doubled]
\tikzstyle{CPMkeffectconj}=[fill=white, draw=black, style=kpoint transpose, doubled]
\tikzstyle{UHfB}=[fill=white, draw=black, style=triangdag, doubled, inner sep=-2pt]
\tikzstyle{leak}=[style=tinypoint, regular polygon rotate=-90]
\tikzstyle{leakfill}=[style=tinypoint, regular polygon rotate=-90, fill=black]
\tikzstyle{Z}=[style=dot, fill=green]
\tikzstyle{X}=[style=dot, fill=red]
\tikzstyle{black_dot}=[style=dot, fill=black]
\tikzstyle{white_dot}=[style=dot, fill=white]
\tikzstyle{qblack_dot}=[style=ddot, fill=black]
\tikzstyle{qwhite_dot}=[style=ddot, fill=white]
\tikzstyle{whitephase}=[style=wphase dot, fill=white]
\tikzstyle{qredphase}=[style=phase dot, fill=red]
\tikzstyle{qgreenphase}=[style=phase dot, fill=green]
\tikzstyle{had}=[style=hadamard, doubled]
\tikzstyle{box}=[style=hadamard]
\tikzstyle{bigbox}=[style=hadamard, minimum height=4mm, minimum width=8mm]
\tikzstyle{classhad}=[style=hadamard]
\tikzstyle{antipode}=[style=anti]
\tikzstyle{dottededge}=[-, dash pattern=on 1pt off 0.7pt]
\tikzstyle{double edge}=[-, style=doubled, draw=black, tikzit draw={rgb,255: red,18; green,168; blue,191}]
\tikzstyle{new edge style 0}=[<-]
\tikzstyle{new edge style 1}=[-, draw={rgb,255: red,242; green,233; blue,206}, fill={rgb,255: red,242; green,233; blue,206}]
\tikzstyle{morphism_shade}=[-, draw=black, fill={rgb,255: red,242; green,233; blue,206}, line join=bevel]
\tikzstyle{supermap_shade}=[-, fill={rgb,255: red,216; green,215; blue,242}, draw=black, line join=bevel]
\tikzstyle{hole_shade}=[-, fill=white, draw=black,line join=bevel]
\tikzstyle{new edge style 2}=[-, draw={rgb,255: red,14; green,188; blue,83}]
\tikzstyle{new edge style 3}=[<-, draw={rgb,255: red,234; green,209; blue,255}]
\tikzstyle{new edge style 4}=[<-, draw={rgb,255: red,0; green,128; blue,128}]
\tikzstyle{new edge style 5}=[-, draw={rgb,255: red,214; green,110; blue,62}]
\tikzstyle{new edge style 6}=[-, draw={rgb,255: red,174; green,20; blue,174}]
\newcommand{\tikzfigscale}[2]{\scalebox{#1}{\tikzfig{#2}}}
\newcommand{\morph}[1]{\xrightarrow{#1}}
\newcommand{\pmorph}{\relbar\joinrel\mapstochar\joinrel\rightarrow}
\newcommand{\cat}[1]{\mathcal{#1}}
\newcommand{\opcat}[1]{#1^\textrm{op}}
\newcommand{\set}{\mathsf{Set}}
\newcommand{\cosmos}{\set}
\newcommand{\Cat}{\mathsf{Cat}}
\newcommand{\Prof}{\mathsf{Prof}}
\newcommand{\StProf}{\mathsf{StProf}}
\newcommand{\psh}[1]{[#1^\text{op},\cosmos]}
\newcommand{\copsh}[1]{[#1,\cosmos]}
\newcommand{\opt}{\mathsf{Optic}}
\newcommand{\comb}{\mathsf{Comb}}
\newcommand{\cptp}{\mathsf{CPTP}}
\newcommand{\caus}{\mathsf{Caus}}
\newcommand{\bl}{\mathord{=}}
\newcommand{\suphom}[2]{\cat{C}(#1 \otimes-, #2 \otimes\bl)}
\newcommand{\suphomrev}[2]{\cat{C}(- \otimes #1, \bl \otimes #2)}
\newcommand{\amp}{\mathrel{\rotatebox[origin=c]{180}{\&}}}
\newcommand{\seq}{\varogreaterthan}
\DeclareMathOperator*{\modtensor}{\scalebox{1.2}{$\bigotimes$}_\cat{C}}
\DeclareMathOperator*{\bigseq}{\scalebox{2}{$\ogreaterthan$}}
\DeclareFontFamily{U}{min}{}
\DeclareFontShape{U}{min}{m}{n}{<-> udmj30}{}
\newcommand{\xRightarrow}[2][]{\ext@arrow 0359\Rightarrowfill@{#1}{#2}}
\begin{document}

\theoremstyle{acmplain}
\newtheorem*{theorem*}{Theorem}
\theoremstyle{acmplain}
\newtheorem*{lemma*}{Lemma}
\theoremstyle{acmplain}
\newtheorem*{proposition*}{Proposition}

\theoremstyle{acmdefinition}
\newtheorem*{remark}{Remark}

\title{A Profunctorial Semantics for Quantum Supermaps}

\author{James Hefford}
\affiliation{%
	\institution{Université Paris-Saclay, CNRS, ENS Paris-Saclay, Inria, Laboratoire Méthodes Formelles}
	\city{91190, Gif-sur-Yvette}
	\country{France}}
\orcid{0000-0002-6664-8657}
\email{james.hefford@inria.fr}

\author{Matt Wilson}
\affiliation{%
	\institution{Programming Principles Logic and Verification Group, University College London}
	\city{London}
	\country{UK}}
	\orcid{0000-0001-5869-914X}
\email{matthew.wilson.18@ucl.ac.uk}

\keywords{quantum supermaps, higher-order quantum processes, quantum combs, strong profunctors, Tambara modules, coend optics}

\begin{abstract}
We identify morphisms of strong profunctors as a categorification of quantum supermaps.
These black-box generalisations of diagrams-with-holes are hence placed within the broader field of profunctor optics, as morphisms in the category of copresheaves on concrete networks.
This enables the first construction of abstract logical connectives such as tensor products and negations for supermaps in a totally theory-independent setting. 
These logical connectives are found to be all that is needed to abstractly model the key structural features of the quantum theory of supermaps: black-box indefinite causal order, black-box definite causal order, and the factorisation of definitely causally ordered supermaps into concrete circuit diagrams.
We demonstrate that at the heart of these factorisation theorems lies the Yoneda lemma and the notion of representability.
\end{abstract}

\maketitle

\section{Introduction}
Quantum supermaps have been a major focal point in the field of quantum foundations over the last couple of decades \cite{chiribella_supermaps}.
These supermaps are intended to capture the notion of a higher-order quantum process: a first-order process is a quantum channel evolving quantum states in time, while a second-order process is a map which sends first-order processes to first-order processes, that is, quantum channels to quantum channels.

Some simple examples of quantum supermaps include \textit{circuits-with-holes}, also known as \textit{combs} \cite{chiribella_circuits,chiribella_networks}.
These are given by incomplete circuits of quantum channels with holes which one may imagine filling with quantum channels to produce a complete circuit.
One physical conception of such a supermap could be an experimental setup (e.g.\ an arrangement of optical components such as beamsplitters) where one is able to alter some of the components.

Quantum supermaps also encompass substantially more general notions of higher-order transformation some of which have been demonstrated to exhibit exotic phenomena such as superpositions of causal order, and advantage in computational and information-theoretic tasks such as: communication capacity activation \cite{ebler_comms, switch_complex, Chiribella_2021, chiribella_wilson_chau_switch}, channel discrimination \cite{chiribella_switch2, araujo_switch_advantage}, metrology \cite{metrology_switch}, and thermodynamic cooling \cite{fridge_switch}. 
For this reason much of the focus has been on these families of quantum supermaps that go beyond combs.
In fact, the most investigated higher-order processes such as the quantum switch \cite{chiribella_switch}, the OCB process \cite{oreshkov}, the Lugano process \cite{baumeler_incompatibility,baumeler_logically} and the Grenoble process \cite{wechs} are known to possess no decomposition as a comb and thus are truly beyond the class of maps that could be studied in a framework of combs alone.

At their heart, supermaps model a simple intuitive idea: a model of first-order processes consists of boxes and wires, while a model of higher-order processes must extend these compositional components to include holes.
First-order process theories are understood to have a solid mathematical foundation in terms of monoidal category theory \cite{coecke_kissinger_2017}, yet there is not a generally accepted and mathematically rigorous foundation that adequately extends these models to include holes.

The lack of such a foundation is a cause of important domain specific problems.
Firstly, to combine the study of indefinite causal structure with quantum field theory and ultimately quantum gravity, then the supermap framework will need to be extended to infinite dimensional and even non-separable Hilbert spaces \cite{paunkovic2023challenges}.
Current proposals for extension beyond finite dimensions however, are restricted to separable Hilbert spaces, and further to either the $1$-input setting \cite{Chiribella2013NormalCP} or to the Wigner-function representation \cite{Giacomini_2016}.
Secondly, without a stable framework for supermaps in a general context, it is unclear how to initiate the study of post-quantum causal structures: causal structures compatible with GPTs or OPTs \cite{chiribella_purification}, a class of physical theories used to study the special place of quantum physics from information-theoretic principles. 

One way to approach this problem, and the avenue we pursue here, is to develop a general category-theoretic model.
Given that processes, boxes, and wires find formalisation within category theory, it seems natural to expect that boxes and wires with holes should be formalisable on the same terms.
A fully category-theoretic model for supermaps would allow for the study of indefinite causal structures in a variety of potential new domains with quantum theory as a special case.
In these less familiar or more mathematically sophisticated theories, the specification of pre-defined categorical tensor products which capture the main ideas of interest, along with logical rules for how to safely reason with them, would allow for the possibility to reason about supermaps without, for instance, large amounts of domain-specific knowledge.
Consequently, a stable categorical framework has the possibility to open new objects of study in the topic of indefinite causal structure, and furthermore the possibility to provide ways to safely handle those objects.

\subsection{Current Approaches to Quantum Supermaps}\label{sec:current_approaches}
Currently there exist a few approaches to modelling higher order processes categorically.
Perhaps the most successful is that of the $\mathsf{Caus}$-construction \cite{kissinger_caus} which takes a compact closed category $\cat{C}$ of raw processes and equips it with a richer type system in the form of a $*$-autonomous category $\mathsf{Caus}(\cat{C})$ in which one can now ensure that the higher-order processes respect certain notions of causality (such as not creating time loops).
The rich type system, in particular, allows the user to easily consider supermaps on more complex spaces of channels, such as the spaces of non-signalling, or one-way signalling channels. 

One important drawback of the $\mathsf{Caus}$-construction however, is its reliance on compact closure of the starting category $\cat{C}$.
Ultimately this limits its domain of applicability, even within quantum theory where the category of infinite dimensional quantum systems is not compact closed.
Furthermore, it imposes quite a strong condition on $\cat{C}$: since the category is closed, all higher-order processes are already present!
The $\mathsf{Caus}$-construction equips $\cat{C}$ with richer types but it does not generate any new maps, which forces one to make quite strong and concrete choices about the form which supermaps can take from the outset, for instance, linearity is assumed rather than derived in the quantum setting.

A second approach to modelling circuits with holes is that of \textit{coend optics} \cite{pastro_street,clarke_profunctor,roman_coend,roman_comb,roman_optics,roman_thesis,riley_optics,earnshaw,hefford_coend}.
Like the $\mathsf{Caus}$-construction it gives rise to tensor products which allow one to think of the action of circuits with holes on complex spaces, this time constructed by taking categorical limits of other circuit shapes.
Whilst extremely successful in its treatment of circuits with holes, coend optics has a key drawback in the context of the study of quantum supermaps: it does not account for those supermaps of most interest to the physics community, those which \textit{cannot} be decomposed as circuits-with-holes.

A third approach (and the most recently put forward) is that of \cite{wilson_locality,wilson_polycategories} in which a general definition of supermap on plain symmetric monoidal categories was suggested.
This approach does not rely on compact closure, hence being applicable to a much wider range of domains.
Starting from this approach, it was demonstrated that one could recover precisely the quantum supermaps as defined in the field of quantum foundations, including therefore all of those motivating examples which are beyond circuits with holes.

The key observation of \cite{wilson_locality} was an operational one: that a supermap $\eta$ should act locally on its inputs and thus it should commute with the application of combs to any tensorial extension.

\begin{equation}\label{eq:commute_comb}
	\tikzfigscale{1}{figs/supermaplaw1_edit} \hspace{0.8em} = \quad \tikzfigscale{1}{figs/supermaplaw2_edit}
\end{equation}

In \cite{wilson_polycategories} this observation was extended to include multi-party supermaps and a polycategorical semantics were suggested.
This approach is a clear step forward towards modelling supermaps on general symmetric monoidal categories, however, it presently has a number of issues which stem from its lack of identification with established categorical concepts.

Firstly, the definition of \cite{wilson_locality,wilson_polycategories} is quite intractable algebraically, relying instead on unformalised diagrammatic notations along with assumptions on the strictness of the category at hand.
For the purposes of quantum foundations research, perhaps this is not so problematic, but for the purposes of putting together a general categorical model for supermaps, it leaves the approach clearly still in a primordial stage. 

Secondly, and related to the first, this approach does not come with the same kinds of logical connectives such as tensor products which the previous two more categorically grounded approaches provide \cite{riley_optics,kissinger_caus, pastro_street}. It is even unclear how to identify spaces within this concrete approach on which the multi-input supermaps can be interpreted as acting (leaving us without an analogy to the pleasing connection between supermaps and the signalling constraints on their domains \cite{chiribella_switch, Bisio_2019, kissinger_caus}), or how even how to reformulate the axiomatic hierarchy of supermaps with definite causal order \cite{chiribella_networks}. 

\subsection{Contributions}
In this paper the goal is to produce a categorical model for supermaps which is both structurally well-behaved (with, for instance, suitable tensor products) but still fully general in the categories it can be applied to and the kinds of higher order processes incorporated.
The key idea is to make a direct connection between the methods of profunctor optics \cite{pastro_street,clarke_profunctor,roman_coend,roman_comb,roman_optics,roman_thesis,riley_optics,earnshaw,hefford_coend} and the diagrammatic approach of \cite{wilson_locality,wilson_polycategories}, by identifying the aforementioned diagrammatic rules as an interpretation of the definition of morphism of strong profunctors.
By making this identification, and defining supermaps using profunctorial methods, it is possible to go beyond what could be concretely imagined in diagrammatic approaches, and in dealing with these more abstract spaces we are able to solve some of the key issues in providing a good categorical model for supermaps.

It is known that the category $\StProf(\cat{C})$ has two closed tensor products \cite{pastro_street,garner,earnshaw} which we will demonstrate model the spaces of separable ($\otimes_\cat{C}$) and sequenced channels ($\seq$).
As a result the morphisms with those domains are shown to faithfully represent the supermaps with indefinite causal order and definite causal order respectively.
Moreover, these tensors interact to make $\StProf(\cat{C})$ a normal closed duoidal category \cite{garner,earnshaw}.
Such categories have been identified in \cite{shapiro_duoidal} as a way of modelling notions of dependent and independent composition.
Indeed, here we will demonstrate that they allow us to model parallel and sequential composition of families of maps, and thereby capture separable and semi-localisable channels.
We can think of the tensors $\otimes_\cat{C}$ and $\seq$ as analogous to the tensor $\otimes$ and sequencing operator $\oslash$ of the $\caus$-construction: their normal duoidal interaction is a fragment of the BV structure suggested in \cite{SimmonsKissinger2022}.

In short, we are able to model the three core concepts of supermaps in the literature all within a single category $\StProf(\cat{C})$.

\begin{equation}\label{eq:different_supermaps}
	\begin{tabular}{ccc}
		\tikzfigscale{1}{figs/horizontal_physics_1} & \tikzfigscale{1}{figs/vertical_physics_1} & \tikzfigscale{1}{figs/vertical_physics_3} \\
		\begin{tabular}[t]{c} Indefinite \\ Causal Order \\ $( - \otimes_{\cat{C}} - )$ \end{tabular} & 
		\begin{tabular}[t]{c} Abstract \\ Definite Order \\ $( -\seq -)$ \end{tabular} & 
		\begin{tabular}[t]{c} Concrete \\ Networks \\ (Representable) \end{tabular}
	\end{tabular}
\end{equation}

$\StProf(\cat{C})$ also has a weak dualising functor $(-)^*$, which in general is not involutive but is still strong with respect to the tensor $\otimes_\cat{C}$.
\begin{align*}
	(-)^* = [-,i] \qquad \text{weak duals} \\
	- \amp - \qquad \text{all processes}
\end{align*}
This allows us to define a functorial par operation $\amp$ which has many of the same properties as its analogous one in $\caus$ (for instance it distributes linearly over $\otimes_\cat{C}$) but it is generally \textit{not} associative or unital and thus fails to be a tensor product.
This weakens the $*$-autonomous structure of the $\mathsf{Caus}$-construction, meaning we do not have a model of linear logic but of \textit{tensorial logic} \cite{mellies_tensorlogic}.
The failure of $(-)^*$ to be involutive has a number of deep connections with the structure of supermaps on the category $\cat{C}$ and we will demonstrate that the involutivity of $(-)^*$ on certain objects of $\StProf(\cat{C})$ is intimately connected with the fundamental decomposition theorems of quantum supermaps, which demonstrate that abstract definite orders in quantum theory are realisable as concrete networks.

\section{Single-Party Supermaps as Strong Natural Transformations}
Let us start with the most simple case of supermaps on a single party $\vb*{a}$.
By a \textit{party} $\vb*{a}$ we mean a pair $(a,a')$ of objects of a base symmetric monoidal category $\cat{C}$.
We think of $a$ as the input and $a'$ as the output of interventions for a lab in a spacetime at $\vb*{a}$.
A supermap sending $\vb*{a}\morph{}\vb*{b}$, as conceptualised in the quantum foundations literature \cite{chiribella_supermaps}, consists of a transformation sending the lab at $\vb*{a}$ to the lab at $\vb*{b}$ and is often drawn informally in the following fashion.

\begin{equation}\label{eq:normal_supermap}
	\tikzfigscale{1}{figs/normal_supermap_1} \ :: \  \tikzfigscale{1}{figs/normal_supermap_2a} \ \mapsto \   \tikzfigscale{1}{figs/normal_supermap_2}
\end{equation}

Such a supermap takes channels $a\morph{}a'$ representing interventions at $\vb*{a}$ to channels $b\morph{}b'$: it is like a hole in a circuit, but with no a priori assumption that the map decomposes as a circuit of maps from $\cat{C}$.

For arbitrary symmetric monoidal categories, it is not clear how to rigorously define $\eta$, by either the non-categorical methods in the quantum foundations literature, or by the $\caus$-construction.
In all those cases, additional structure is assumed of $\cat{C}$, typically compact closure, such that one may bend around wires of $\eta$ and interpret it as a morphism in the base category $\cat{C}$ \cite{kissinger_caus}.

In \cite{wilson_locality} a general method was outlined, though it did not explain its connection to more well-known categorical structures. 
Our aim of this section is to unpack this definition and make the connection with profunctorial methods apparent.

\begin{definition}\label{defn:lat}
	A \textit{locally-applicable transformation} $\eta:\vb*{a}\morph{}\vb*{b}$ consists of a family of functions of the form:
	\begin{equation*}
		\tikzfigscale{1}{figs/supermap_family}
	\end{equation*}
	indexed by the pair of objects $(x,x')$, which assign to each morphism $a\otimes x\morph{} a'\otimes x'$ a morphism $b\otimes x\morph{} b'\otimes x'$, such that the following laws hold.
	\begin{equation}\label{eq:nat_law}
		\tikzfigscale{1}{figs/supermap_nat1} \quad = \quad \tikzfigscale{1}{figs/supermap_nat2}
	\end{equation}
	\begin{equation}\label{eq:strength_law}
		\tikzfigscale{1}{figs/supermap_strength1} \quad = \quad \tikzfigscale{1}{figs/supermap_strength2}
	\end{equation}
\end{definition}

The idea here is to capture the \textit{locality} of the supermap $\eta$: that it should only act locally to the lab $\vb*{a}$ and not on other environment systems $\vb*{x}=(x,x')$.
Thus we expect a supermap to commute with the actions of agents on $(x,x')$, in particular with pre- and post-composition by maps on $(x,x')$ (law \eqref{eq:nat_law}) and by further tensorial extension of the environment (law \eqref{eq:strength_law}).
These two laws jointly capture the intuitive idea that supermaps should commute with combs on the environment, see \eqref{eq:commute_comb}.

To formalise Definition \ref{defn:lat}, such that locally-applicable transformations form a category themselves, we must take care to define both the objects that represent the parties $\vb*{a}$, and the arrows that will represent the supermaps.

\subsection{Basic Types for (Co)Domains of Supermaps}
Let us start with formalising the objects that represent the parties $\vb*{a}$.
The first thing to note is that each function $\eta_{x,x'}$ takes the space of bipartite channels of type $a\otimes x\morph{} a'\otimes x'$ to the space of bipartite channels of type $b\otimes x\morph{} b'\otimes x'$.
These spaces of channels have a nice categorical formalisation.
We write $\cat{C}(a,a')$ for the set of morphisms $a\morph{}a'$ of $\cat{C}$.
The family of such sets, varying over $(a,a')$, has two key properties: they behave well with respect to composition with morphisms of $\cat{C}$ and they behave well with respect to the tensor product of $\cat{C}$.

\subsubsection{Compatibility with Composition}\label{sec:composition}
Given a morphism $g:a'\morph{}b'$ there is a function which acts by post-composition with $g$.
\begin{equation*}
	\cat{C}(a,a') \morph{\cat{C}(1,g)} \cat{C}(a,b') :: (a\morph{\phi}a') \mapsto (a\morph{\phi}a'\morph{g}b')
\end{equation*}
Furthermore, given a morphism $f:b\morph{}a$ there is a function which acts by pre-composition with $f$.
\begin{equation*}
	\cat{C}(a,a') \morph{\cat{C}(f,1)} \cat{C}(b,a') :: (a\morph{\phi}a') \mapsto (b\morph{f}a\morph{\phi}a')
\end{equation*}
These functions are in fact functorial over the category $\cat{C}$ which makes the whole family $\cat{C}(-,\bl)$ a special instance of a \textit{profunctor}, known in this case as the hom-profunctor.

\begin{definition}
	A \textit{profunctor} $P:\cat{C}\pmorph\cat{D}$ is a functor $P:\opcat{\cat{D}}\times\cat{C}\morph{}\set$.
\end{definition}
Profunctors provide a categorical generalisation of several other familiar mathematical objects including relations between sets, distributions and bimodules.
Indeed, they are also known as relators \cite{loregian_coend}, distributors \cite{benabou} and often simply as bimodules.
This latter conception is a valuable one to keep in mind during this article.
Much as how a bimodule is an abelian group with commuting left and right actions by two rings, a profunctor $P$ consists of a family of sets with commuting, functorial left and right actions by two categories.
\begin{equation*}
	P(d,c) \morph{P(1,g)} P(d,c') :: (d\overset{p}{\rightsquigarrow} c) \mapsto (d\overset{p}{\rightsquigarrow}c\morph{g}c')
\end{equation*}
\begin{equation*}
	P(d,c) \morph{P(f,1)} P(d',c) :: (d\overset{p}{\rightsquigarrow}c) \mapsto (d'\morph{f}d\overset{p}{\rightsquigarrow}c)
\end{equation*}
Above we have used some informal notation purely to convey a certain intuition that will be valuable in this work, more formally by $p:d\rightsquigarrow c$ we mean $p\in P(d,c)$ and the ``compositions'' are given by function evaluation e.g.\ $P(1,g)(p)$.

Profunctors can be composed by taking a \textit{coend}.
\begin{equation*}
  (Q\circ P)(-,\bl) := \int^d Q(-,d)\times P(d,\bl),
\end{equation*}
The coend provides a generalisation of the tensor product of bimodules and acts to quotient the left action on $P$ with the right action on $Q$.
Coends can be formalised in terms of universal cowedges and can be formally written as a certain colimit; for the details of this we refer the reader to \cite{loregian_coend}.
For us, we can think of the coend as having the effect of quotienting to make the following two compositions equivalent for any $f$.
\begin{equation*}
	(e\overset{q}{\rightsquigarrow}d\morph{f}d') \overset{p}{\rightsquigarrow} c \quad \sim \quad e\overset{q}{\rightsquigarrow} (d\morph{f}d' \overset{p}{\rightsquigarrow} c )
\end{equation*}

The hom-profunctors $1_{\cat{C}}=\cat{C}(-,\bl):\cat{C}\pmorph\cat{C}$ are the identity profunctors for this composition due to the following instances of the Yoneda lemma.
\begin{equation*}
  \int^d \cat{D}(-,d)\times P(d,\bl) \cong P(-,\bl) \cong \int^c P(-,c)\times \cat{C}(c,\bl).
\end{equation*}
One way of conceptualising these identities is to think of the hom-profunctors as behaving a bit like Dirac delta functions.

\subsubsection{Compatibility with the Tensor}\label{sec:tensor}

The issue for us with plain profunctors is that they have no inherent compatibility with the monoidal structure of $\cat{C}$.
To resolve this we will be interested in studying the strong profunctors---those that behave well with the tensor product.
Strong profunctors are also known as Tambara modules \cite{tambara,pastro_street}.

\begin{definition}
  Let $(\cat{C},\otimes,i)$ be a symmetric monoidal category.
  A profunctor $P:\cat{C}\pmorph\cat{C}$ is \textit{strong} when it is equipped with a natural transformation,
  \begin{equation*}
    P(a,b)\times\cat{C}(c,d) \morph{\zeta_{abcd}} P(a\otimes c, b\otimes d),
  \end{equation*}
  such that certain diagrams commute making this action essentially associative and unital \cite{tambara,pastro_street}.
\end{definition}

While profunctors are bimodular in the direction of composition in $\cat{C}$ (the ``vertical'' direction in our diagrams), strong profunctors augment this with an additional bimodular nature in the ``horizontal'' tensorial direction.
We can think of the component $\zeta_{abcd}$ as having the effect of ``tensoring'' the elements $p:a\rightsquigarrow b$ of $P(a,b)$ with morphisms of $\cat{C}$.
\begin{equation*}
	(a\overset{p}{\rightsquigarrow} b, c\morph{f}d) \mapsto (a\otimes c \overset{p\cdot f}{\rightsquigarrow} b\otimes d)
\end{equation*}

A key example of a strong profunctor is the hom-profunctor.
Its strength is simply given by taking tensor products in $\cat{C}$.
\begin{equation*}
	\cat{C}(a,b)\times\cat{C}(c,d) \morph{} \cat{C}(a\otimes c,b\otimes d) :: (f,g) \mapsto f\otimes g
\end{equation*}

\subsubsection{Basic Supermap Types}
Combining the discussion of Sections \ref{sec:composition} and \ref{sec:tensor} we can see that the domain and codomain of the supermap $\eta$ can be taken to be of the form $\suphom{a}{a'}$.
This constitutes the space of all bipartite inputs $a\otimes x\morph{}a'\otimes x'$ to the supermap where $a$ and $a'$ are fixed, while $x$ and $x'$ are allowed to vary.
$\suphom{a}{a'}$ is a strong endoprofunctor on $\cat{C}$, with its functoriality and strength essentially identical to those outlined for the hom-profunctor in the previous two sub-sections.
Explicitly, the strength is given as before by taking tensor products in $\cat{C}$.
\begin{align*}
  \cat{C}(a\otimes x,a'\otimes x')\times\cat{C}(y,y') \morph{\zeta_{xx'yy'}} \cat{C}(a\otimes x\otimes y, a'\otimes x'\otimes y')\\
  :: (f,g)\mapsto f\otimes g
\end{align*}

\subsection{Supermaps as Arrows on Strong Profunctors}
Having established the basic properties of the strong endoprofunctors $\suphom{a}{a'}$ which will form the (co)domains of single-party supermaps, we now turn our attention to the morphisms between strong profunctors.

Plain endoprofunctors on a category $\cat{C}$ assemble to give a category $\Prof(\cat{C})$, where the morphisms are the natural transformations.
Strong endoprofunctors on a monoidal category also assemble into a category $\StProf(\cat{C})$ where the morphisms are the \textit{strong} natural transformations.

\begin{definition}
	A natural transformation $\eta:P\Rightarrow Q$ between strong endoprofunctors is itself \textit{strong} when it commutes with the strengths.
	\begin{equation}\label{eq:strong_nat}
		\begin{tikzcd}
		P(a,b)\times \cat{C}(c,d) \ar[d,"\eta_{ab}\times 1"'] \ar[r,"\zeta_{abcd}^P"] & P(a\otimes c,b\otimes d) \ar[d,"\eta_{a\otimes c,b\otimes d}"] \\
		Q(a,b) \ar[r,"\zeta_{abcd}^Q"'] \times \cat{C}(c,d) & Q(a\otimes c,b\otimes d)
		\end{tikzcd}
	\end{equation}
\end{definition}

Thinking of a strength as a way of tensoring elements of a profunctor by morphisms of $\cat{C}$, a natural transformation is strong precisely when it preserves this structure, $\eta(p\cdot f) = \eta(p)\cdot f$.

Consider now a strong natural transformation $\eta: \suphom{a}{a'} \Rightarrow \suphom{b}{b'}$.
To ask that $\eta$ is natural is to ask that the following diagram commutes.
\begin{equation*}
	\begin{tikzcd}[column sep=large]
		\cat{C}(a\otimes x,a'\otimes x') \ar[r,"{\cat{C}(1\otimes f,1\otimes g)}"] \ar[d,"{\eta_{xx'}}"'] & \cat{C}(a\otimes y, a'\otimes y') \ar[d,"{\eta_{yy'}}"] \\
		\cat{C}(b\otimes x,b'\otimes x') \ar[r,"{\cat{C}(1\otimes f,1\otimes g)}"'] & \cat{C}(b\otimes y, b'\otimes y')
	\end{tikzcd}
\end{equation*}
This is precisely law \eqref{eq:nat_law}.
To ask that $\eta$ is strong is to ask $\eta(\phi \otimes f) = \eta(\phi) \otimes f$ (upon unpacking diagram \eqref{eq:strong_nat}).
This is precisely law \eqref{eq:strength_law}.

This means we can restate the definition of a locally-applicable transformation on single-parties more formally as follows.

\begin{definition}\label{defn:basic_supermap}
	A single-party locally-applicable transformation is a strong natural transformation of the type
	\begin{equation}\label{eq:basic_supermap}
		\eta: \suphom{a}{a'} \Rightarrow \suphom{b}{b'}.
	\end{equation}
\end{definition}

At this point it is perhaps instructive to see some simple examples of single-party locally applicable transformations.
To do this we will employ the \textit{internal string diagrams} of \cite{bartlett_extended,bartlett_modular}.
These consist of the usual string diagrams for monoidal categories, bounded inside cobordisms.
The cobordisms or ``tubes'' represent certain profunctors, while the string diagrams inside them represent elements of those profunctors at some chosen objects of the categories involved.
\begin{equation*}
  \begin{tikzpicture}[baseline=(current bounding box.center),{every node/.style}={scale=0.8}]
    \node[tube,top,bot] (t) {};
    \begin{pgfonlayer}{nodelayer}
      \node[label] (f) at (t.center) {$f$};
	  \node[system] at ([yshift=0.6\ylength]t.top) {$b$};
	  \node[system] at ([yshift=-0.6\ylength]t.bot) {$a$};
    \end{pgfonlayer}
    \begin{pgfonlayer}{strings}
      \draw (t.bot) to (f) to (t.top) {};
    \end{pgfonlayer}
  \end{tikzpicture}\quad
  \begin{tikzpicture}[baseline=(current bounding box.center),{every node/.style}={scale=0.8}]
    \node[pants,top,bot] (p) {};
    \begin{pgfonlayer}{nodelayer}
      \node[label] (g) at (p.center) {$g$};
	  \node[system] at ([yshift=0.6\ylength]p.belt) {$c$};
	  \node[system] at ([yshift=-0.6\ylength]p.leftleg) {$a$};
	  \node[system] at ([yshift=-0.6\ylength]p.rightleg) {$b$};
    \end{pgfonlayer}
    \begin{pgfonlayer}{strings}
      \draw (g) to (p.belt) {};
      \draw (g) to[out=-155,in=90] (p.leftleg) {};
      \draw (g) to[out=-25,in=90] (p.rightleg) {};
    \end{pgfonlayer}
  \end{tikzpicture}\quad
  \begin{tikzpicture}[baseline=(current bounding box.center),{every node/.style}={scale=0.8}]
    \node[copants,top,bot] (c) {};
    \begin{pgfonlayer}{nodelayer}
      \node[label] (h) at (c.center) {$h$};
	  \node[system] at ([yshift=-0.6\ylength]c.belt) {$a$};
	  \node[system] at ([yshift=0.6\ylength]c.leftleg) {$b$};
	  \node[system] at ([yshift=0.6\ylength]c.rightleg) {$c$};
    \end{pgfonlayer}
    \begin{pgfonlayer}{strings}
      \draw (h) to (c.belt) {};
      \draw (h) to[out=155,in=-90] (c.leftleg) {};
      \draw (h) to[out=25,in=-90] (c.rightleg) {};
    \end{pgfonlayer}
  \end{tikzpicture}\quad
  \begin{tikzpicture}[baseline=(current bounding box.center),{every node/.style}={scale=0.8}]
    \node[bowl] (b) {};
    \node[tube,anchor=bot,yscale=0.5] (t) at (b) {};
    \node[end] at (t.top) {};
    \node[end_dot] at (t.bot) {};
    \begin{pgfonlayer}{nodelayer}
      \node[representable] (a) at (b.center) {$a$};
	  \node[system] at ([yshift=0.6\ylength]t.top) {$a$};
    \end{pgfonlayer}
    \begin{pgfonlayer}{strings}
      \draw (a) to (t.top) {};
    \end{pgfonlayer}
  \end{tikzpicture}\quad
  \begin{tikzpicture}[baseline=(current bounding box.center),{every node/.style}={scale=0.8}]
    \node[cobowl] (b) {};
    \node[tube,anchor=top,yscale=0.5] (t) at (b) {};
    \node[end_dot] at (t.bot) {};
    \begin{pgfonlayer}{nodelayer}
      \node[representable] (a) at (b.center) {$a$};
	  \node[system] at ([yshift=-0.6\ylength]t.bot) {$a$};
    \end{pgfonlayer}
    \begin{pgfonlayer}{strings}
      \draw (t.bot) to (a) {};
    \end{pgfonlayer}
  \end{tikzpicture}
\end{equation*}
For instance, in the left-most diagram above, the tube is the profunctor $\cat{C}(-,\bl):\cat{C}\pmorph\cat{C}$ and the internal string diagram is the morphism $f\in\cat{C}(a,b)$ at the chosen objects $a$ and $b$.
Altogether this is a morphism $(f,\cat{C}(-,\bl)):(\cat{C},b)\pmorph(\cat{C},a)$.
The other diagrams represent, from left to right, $(g,\cat{C}(-\otimes -,\bl)):(\cat{C},c)\pmorph(\cat{C}\times\cat{C},(a,b))$, $(h,\cat{C}(-,\bl \otimes \bl):(\cat{C}\times\cat{C},(b,c))\pmorph(\cat{C},a)$, $(1_a,\cat{C}(a,-)):(\cat{C},a)\pmorph(1,*)$, and $(1_a,\cat{C}(-,a)):(1,*)\pmorph(\cat{C},a)$, where $1$ is the terminal category with single object $*$.

A formal semantics can be given to these diagrams in a number of ways, as an instance of the \textit{wire diagrams} of \cite{bartlett_wirediagrams} for the monoidal bicategory of pointed profunctors \cite{roman_coend} or as diagrams for the tricategory of pointed strong profunctors \cite{braithwaite_collages}.

\begin{xlrbox}{supermap_basic1}
	\begin{tikzpicture}[baseline=(current bounding box.center),{every node/.style}={scale=0.8}]
		\node[copants,bot,widebelt] (c) {};
		\node[pants,bot,widebelt,anchor=belt] (p) at (c.belt) {};
		\node[bowl] at (p.leftleg) {};
		\node[cobowl] at (c.leftleg) {};
		\node[end] at (c.rightleg) {};
		\begin{pgfonlayer}{nodelayer}
			\node[label] (m) at (p.belt) {$\phi$};
			\node[representable] at (c.leftleg) {$a'$};
			\node[representable] at (p.leftleg) {$a$};
			\node[system] at ([yshift=0.6\ylength]c.rightleg) {$x'$};
			\node[system] at ([yshift=-0.6\ylength]p.rightleg) {$x$};
		\end{pgfonlayer}
		\begin{pgfonlayer}{strings}
			\draw (m) to[out=-135,in=90] (p.leftleg) {};
			\draw (m) to[out=-45,in=90] (p.rightleg) {};
			\draw (m) to[out=135,in=-90] (c.leftleg) {};
			\draw (m) to[out=45,in=-90] (c.rightleg) {};
		\end{pgfonlayer}
	\end{tikzpicture}
\end{xlrbox}

\begin{xlrbox}{supermap_basic2}
	\begin{tikzpicture}[baseline=(current bounding box.center),{every node/.style}={scale=0.8}]
		\node[copants,bot,widebelt] (c) {};
		\node[pants,bot,widebelt,anchor=belt] (p) at (c.belt) {};
		\node[bowl] at (p.leftleg) {};
		\node[cobowl] at (c.leftleg) {};
		\node[end] at (c.rightleg) {};
		\begin{pgfonlayer}{nodelayer}
			\node[label] (m) at (p.belt) {$\eta\phi$};
			\node[representable] at (c.leftleg) {$b'$};
			\node[representable] at (p.leftleg) {$b$};
			\node[system] at ([yshift=0.6\ylength]c.rightleg) {$x'$};
			\node[system] at ([yshift=-0.6\ylength]p.rightleg) {$x$};
		\end{pgfonlayer} 
		\begin{pgfonlayer}{strings}
			\draw (m) to[out=-135,in=90] (p.leftleg) {};
			\draw (m) to[out=-45,in=90] (p.rightleg) {};
			\draw (m) to[out=135,in=-90] (c.leftleg) {};
			\draw (m) to[out=45,in=-90] (c.rightleg) {};
		\end{pgfonlayer}
	\end{tikzpicture}
\end{xlrbox}

In the internal string diagrams a single-party locally-applicable transformation is given by an arrow of the following type,
\begin{equation*}
	\xusebox{supermap_basic1} \morph{\eta} \xusebox{supermap_basic2}
\end{equation*}
that is compatible with the strengths,

\begin{xlrbox}{supermap_basic_str1}
	\begin{tikzpicture}[baseline=(current bounding box.center),{every node/.style}={scale=0.8}]
		\node[copants,bot,widebelt] (c) {};
		\node[pants,bot,widebelt,anchor=belt] (p) at (c.belt) {};
		\node[bowl] at (p.leftleg) {};
		\node[cobowl] at (c.leftleg) {};
		\node[end] at (c.rightleg) {};
		\node[tube,top,anchor=top] (t) at ([xshift=1.8\xlength]c.rightleg) {};
		\node[tube,bot,anchor=top] (t2) at (t.bot) {};
		\begin{pgfonlayer}{nodelayer}
			\node[label] (m) at (p.belt) {$\phi$};
			\node[representable] at (c.leftleg) {$a'$};
			\node[representable] at (p.leftleg) {$a$};
			\node[label] (f) at (t.bot) {$f$};
			\node[system] at ([yshift=0.6\ylength]c.rightleg) {$x'$};
			\node[system] at ([yshift=-0.6\ylength]p.rightleg) {$x$};
			\node[system] at ([yshift=0.6\ylength]t.top) {$y'$};
			\node[system] at ([yshift=-0.6\ylength]t2.bot) {$y$};
		\end{pgfonlayer}
		\begin{pgfonlayer}{strings}
			\draw (m) to[out=-135,in=90] (p.leftleg) {};
			\draw (m) to[out=-45,in=90] (p.rightleg) {};
			\draw (m) to[out=135,in=-90] (c.leftleg) {};
			\draw (m) to[out=45,in=-90] (c.rightleg) {};
			\draw (t2.bot) to (f) to (t.top) {};
		\end{pgfonlayer}
	\end{tikzpicture}
\end{xlrbox}

\begin{xlrbox}{supermap_basic_str2}
	\begin{tikzpicture}[baseline=(current bounding box.center),{every node/.style}={scale=0.8}]
		\node[copants,bot,widerbelt] (c) {};
		\node[pants,bot,widerbelt,anchor=belt] (p) at (c.belt) {};
		\node[bowl] at (p.leftleg) {};
		\node[cobowl] at (c.leftleg) {};
		\node[copants,top,bot,anchor=belt] (c2) at (c.rightleg) {};
		\node[pants,bot,anchor=belt] (p2) at (p.rightleg) {};
		\begin{pgfonlayer}{nodelayer}
			\node[label] (m) at ([xshift=-.75\tubewidth]p.belt) {$\phi$};
			\node[label] (f) at ([xshift=.75\tubewidth]p.belt) {$f$};
			\node[representable] at (c.leftleg) {$a'$};
			\node[representable] at (p.leftleg) {$a$};
			\node[system] at ([yshift=0.6\ylength]c2.leftleg) {$x'$};
			\node[system] at ([yshift=-0.6\ylength]p2.leftleg) {$x$};
			\node[system] at ([yshift=0.6\ylength]c2.rightleg) {$y'$};
			\node[system] at ([yshift=-0.6\ylength]p2.rightleg) {$y$};
		\end{pgfonlayer}
		\begin{pgfonlayer}{strings}
			\draw (m) to[out=-135,in=90] (p.leftleg) {};
			\draw (m) to[out=-55,in=90] ([xshift=-.3\tubewidth]p.rightleg) to[out=-90,in=90] (p2.leftleg) {};
			\draw (m) to[out=135,in=-90] (c.leftleg) {};
			\draw (m) to[out=55,in=-90] ([xshift=-.3\tubewidth]c.rightleg) to[out=90,in=-90] (c2.leftleg) {};
			\draw (f) to[out=90,in=-90] ([xshift=.3\tubewidth]c.rightleg) to[out=90,in=-90] (c2.rightleg) {};
			\draw (f) to[out=-90,in=90] ([xshift=.3\tubewidth]p.rightleg) to[out=-90,in=90] (p2.rightleg) {};
		\end{pgfonlayer}
	\end{tikzpicture}
\end{xlrbox}

\begin{xlrbox}{supermap_basic_str3}
	\begin{tikzpicture}[baseline=(current bounding box.center),{every node/.style}={scale=0.8}]
		\node[copants,bot,widebelt] (c) {};
		\node[pants,bot,widebelt,anchor=belt] (p) at (c.belt) {};
		\node[bowl] at (p.leftleg) {};
		\node[cobowl] at (c.leftleg) {};
		\node[end] at (c.rightleg) {};
		\node[tube,top,anchor=top] (t) at ([xshift=1.8\xlength]c.rightleg) {};
		\node[tube,bot,anchor=top] (t2) at (t.bot) {};
		\begin{pgfonlayer}{nodelayer}
			\node[label] (m) at (p.belt) {$\eta\phi$};
			\node[representable] at (c.leftleg) {$b'$};
			\node[representable] at (p.leftleg) {$b$};
			\node[label] (f) at (t.bot) {$f$};
			\node[system] at ([yshift=0.6\ylength]c.rightleg) {$x'$};
			\node[system] at ([yshift=-0.6\ylength]p.rightleg) {$x$};
			\node[system] at ([yshift=0.6\ylength]t.top) {$y'$};
			\node[system] at ([yshift=-0.6\ylength]t2.bot) {$y$};
		\end{pgfonlayer}
		\begin{pgfonlayer}{strings}
			\draw (m) to[out=-135,in=90] (p.leftleg) {};
			\draw (m) to[out=-45,in=90] (p.rightleg) {};
			\draw (m) to[out=135,in=-90] (c.leftleg) {};
			\draw (m) to[out=45,in=-90] (c.rightleg) {};
			\draw (t2.bot) to (f) to (t.top) {};
		\end{pgfonlayer}
	\end{tikzpicture}
\end{xlrbox}

\begin{xlrbox}{supermap_basic_str4}
	\begin{tikzpicture}[baseline=(current bounding box.center),{every node/.style}={scale=0.8}]
		\node[copants,bot,widerbelt] (c) {};
		\node[pants,bot,widerbelt,anchor=belt] (p) at (c.belt) {};
		\node[bowl] at (p.leftleg) {};
		\node[cobowl] at (c.leftleg) {};
		\node[copants,top,bot,anchor=belt] (c2) at (c.rightleg) {};
		\node[pants,bot,anchor=belt] (p2) at (p.rightleg) {};
		\begin{pgfonlayer}{nodelayer}
			\node[label] (m) at ([xshift=-.65\tubewidth]p.belt) {$\eta\phi$};
			\node[label] (f) at ([xshift=0.9\tubewidth]p.belt) {$f$};
			\node[representable] at (c.leftleg) {$a'$};
			\node[representable] at (p.leftleg) {$a$};
			\node[system] at ([yshift=0.6\ylength]c2.leftleg) {$x'$};
			\node[system] at ([yshift=-0.6\ylength]p2.leftleg) {$x$};
			\node[system] at ([yshift=0.6\ylength]c2.rightleg) {$y'$};
			\node[system] at ([yshift=-0.6\ylength]p2.rightleg) {$y$};
		\end{pgfonlayer}
		\begin{pgfonlayer}{strings}
			\draw (m) to[out=-135,in=90] (p.leftleg) {};
			\draw (m) to[out=-55,in=90] ([xshift=-.3\tubewidth]p.rightleg) to[out=-90,in=90] (p2.leftleg) {};
			\draw (m) to[out=135,in=-90] (c.leftleg) {};
			\draw (m) to[out=55,in=-90] ([xshift=-.3\tubewidth]c.rightleg) to[out=90,in=-90] (c2.leftleg) {};
			\draw (f) to[out=90,in=-90] ([xshift=.3\tubewidth]c.rightleg) to[out=90,in=-90] (c2.rightleg) {};
			\draw (f) to[out=-90,in=90] ([xshift=.3\tubewidth]p.rightleg) to[out=-90,in=90] (p2.rightleg) {};
		\end{pgfonlayer}
	\end{tikzpicture}
\end{xlrbox}

\begin{equation*}
	\begin{tikzcd}
		\xusebox{supermap_basic_str1} \ar[r,"\zeta"] \ar[d,"\eta"'] & \xusebox{supermap_basic_str2} \ar[d,"\eta"] \\
		\xusebox{supermap_basic_str3} \ar[r,"\zeta"'] & \xusebox{supermap_basic_str4}
	\end{tikzcd}
\end{equation*}

\begin{xlrbox}{supermap_comb1}
	\begin{tikzpicture}[baseline=(current bounding box.center),{every node/.style}={scale=0.8}]
	  \node[copants,bot,widebelt] (c) {};
	  \node[pants,bot,widebelt,anchor=belt] (p) at (c.belt) {};
	  \node[bowl] at (p.leftleg) {};
	  \node[cobowl] at (c.leftleg) {};
	  \node[end] at (c.rightleg) {};
	  \node[bowl] (b) at ([yshift=1.25\ylength]c.leftleg) {};
	  \node[pants,bot,anchor=rightleg] (p1) at (b) {};
	  \node[cobowl] at (p1.belt) {};
	  \node[tube,anchor=top,yscale=3.5] (t1) at (p1.leftleg) {};
	  \node[copants,bot,anchor=leftleg] (c1) at (t1.bot) {};
	  \node[cobowl] at (c1.rightleg) {};
	  \node[bowl] at (c1.belt) {};
	  \node[end_dot] at (c1.leftleg) {};
	  \begin{pgfonlayer}{nodelayer}
	    \node[label] (m) at (p.belt) {$\phi$};
	    \node[representable] at (c.leftleg) {$a'$};
	    \node[representable] at (p.leftleg) {$a$};
	    \node[system] at ([yshift=0.6\ylength]c.rightleg) {$x'$};
	    \node[system] at ([yshift=-0.6\ylength]p.rightleg) {$x$};
	    \node[label] (f) at (p1.center) {$g$};
	    \node[label] (g) at (c1.center) {$f$};
	    \node[representable] (b) at (c1.belt) {$b$};
	    \node[representable] (a) at (c1.rightleg) {$a$};
	    \node[representable] (a') at (p1.rightleg) {$a'$};
	    \node[representable] (b') at (p1.belt) {$b'$};
	  \end{pgfonlayer}
	  \begin{pgfonlayer}{strings}
	    \draw (m) to[out=-135,in=90] (p.leftleg) {};
	    \draw (m) to[out=-45,in=90] (p.rightleg) {};
	    \draw (m) to[out=135,in=-90] (c.leftleg) {};
	    \draw (m) to[out=45,in=-90] (c.rightleg) {};
	    \draw (f) to (b') {};
	    \draw (f) to[out=-25,in=90] (p1.rightleg) {};
	    \draw (f) to[out=-155,in=90] (p1.leftleg) to (c1.leftleg) to[out=-90,in=155] (g) {};
	    \draw (g) to[out=25,in=-90] (c1.rightleg) {};
	    \draw (g) to (b) {};
	  \end{pgfonlayer}
	\end{tikzpicture}
\end{xlrbox}

\begin{xlrbox}{supermap_comb2}
   \begin{tikzpicture}[baseline=(current bounding box.center),{every node/.style}={scale=0.8}]
	  \node[copants,bot,widebelt] (c) {};
	  \node[pants,bot,widebelt,anchor=belt] (p) at (c.belt) {};
	  \node[end] at (c.rightleg) {};
	  \node[pants,bot,anchor=rightleg] (p1) at (c.leftleg) {};
	  \node[cobowl] at (p1.belt) {};
	  \node[tube,anchor=top,yscale=2] (t1) at (p1.leftleg) {};
	  \node[copants,bot,anchor=leftleg] (c1) at (t1.bot) {};
	  \node[bowl] at (c1.belt) {};
	  \node[end_dot] at (c1.leftleg) {};
	  \begin{pgfonlayer}{nodelayer}
	    \node[label] (m) at (p.belt) {$\phi$};
	    \node[system] at ([yshift=0.6\ylength]c.rightleg) {$x'$};
	    \node[system] at ([yshift=-0.6\ylength]p.rightleg) {$x$};
	    \node[label] (f) at (p1.center) {$g$};
	    \node[label] (g) at (c1.center) {$f$};
	    \node[representable] (b) at (c1.belt) {$b$};
	    \node[representable] (b') at (p1.belt) {$b'$};
	  \end{pgfonlayer}
	  \begin{pgfonlayer}{strings}
	    \draw (m) to[out=-135,in=90] (p.leftleg) {};
	    \draw (m) to[out=-45,in=90] (p.rightleg) {};
	    \draw (m) to[out=135,in=-90] (c.leftleg) {};
	    \draw (m) to[out=45,in=-90] (c.rightleg) {};
	    \draw (f) to (b') {};
	    \draw (f) to[out=-25,in=90] (p1.rightleg) {};
	    \draw (f) to[out=-155,in=90] (p1.leftleg) to (c1.leftleg) to[out=-90,in=155] (g) {};
	    \draw (g) to[out=25,in=-90] (c1.rightleg) {};
	    \draw (g) to (b) {};
	  \end{pgfonlayer}
	\end{tikzpicture}
\end{xlrbox}

\begin{xlrbox}{supermap_comb3}
   \begin{tikzpicture}[baseline=(current bounding box.center),{every node/.style}={scale=0.8}]
	  \node[copants,bot,widebelt] (c) {};
	  \node[pants,bot,widebelt,anchor=belt] (p) at (c.belt) {};
	  \node[end] at (c.rightleg) {};
	  \node[cobowl] at (c.leftleg) {};
	  \node[bowl] at (p.leftleg) {};
	  \begin{pgfonlayer}{nodelayer}
	    \node[label] (m) at (p.belt) {$\phi$};
	    \node[system] at ([yshift=0.6\ylength]c.rightleg) {$x'$};
	    \node[system] at ([yshift=-0.6\ylength]p.rightleg) {$x$};
	    \node[label] (g) at ([xshift=-\tubewidth]p.center) {$f$};
	    \node[label] (f) at ([xshift=-\tubewidth]c.center) {$g$};
	    \node[representable] (b) at (c.leftleg) {$b'$};
	    \node[representable] (b') at (p.leftleg) {$b$};
	  \end{pgfonlayer}
	  \begin{pgfonlayer}{strings}
	    \draw (m) to[out=-135,in=45] (g) {};
	    \draw (m) to[out=-45,in=90] (p.rightleg) {};
	    \draw (m) to[out=135,in=-45] (f) {};
	    \draw (m) to[out=45,in=-90] (c.rightleg) {};
	    \draw (g) to[out=-135,in=90] (b') {};
	    \draw (g) to (f) {};
	    \draw (f) to[out=135,in=-90] (b) {};
	  \end{pgfonlayer}
	\end{tikzpicture}
\end{xlrbox}

\begin{example}
  Take morphisms $f:b \morph{} x\otimes a$ and $g:x \otimes a' \morph{} b'$, then there is a supermap of type \eqref{eq:basic_supermap} given applying the comb $(f,g)$ to the maps $\phi$.

  \begin{equation*}
  	\xusebox{supermap_comb1} \morph{} \xusebox{supermap_comb2} \morph{} \xusebox{supermap_comb3}
  \end{equation*}
\end{example}

While this example is quite simple, the identification of single-party locally-applicable transformations with strong natural transformations allows us to generalise to the multi-party case, on which we will see there exist much more exotic examples.

\subsection{Unifying the Naturality and Strength Laws}
We now take a brief detour to explain how the laws \eqref{eq:nat_law} and \eqref{eq:strength_law} of locally-applicable transformations can be formally unified into the equivalent law \eqref{eq:commute_comb}.

Firstly note that the category $\Prof(\cat{C})$ of endoprofunctors and natural transformations can equivalently be described as the category of presheaves $[\opcat{\cat{C}}\times\cat{C},\set]$.

\begin{definition}
	A \textit{presheaf} on a category $\cat{C}$ is a functor $F:\opcat{\cat{C}}\morph{}\set$.
	A \textit{copresheaf} is a functor $G:\cat{C}\morph{}\set$.  
	We write $\psh{\cat{C}}$ for the category of presheaves on $\cat{C}$ and natural transformations between them.
	Similarly we write $\copsh{\cat{C}}$ for the category of copresheaves and natural transformations.
\end{definition}

One of the key results of \cite{pastro_street}, and a significant underpinning for the study of profunctor optics \cite{clarke_profunctor}, demonstrates that $\StProf(\cat{C})$ is also a category of copresheaves.

\begin{theorem}[\cite{pastro_street}]\label{thm:optics_presheaves}
  $\StProf(\cat{C}) \cong \copsh{\opt(\cat{C})}$.
\end{theorem}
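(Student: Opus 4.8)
The plan is to establish the isomorphism of categories by realising $\opt(\cat{C})$ as the Kleisli category of the \emph{Tambara promonad} on $\opcat{\cat{C}}\times\cat{C}$ and then reading off its (co)algebras. Recall from the excerpt that $\Prof(\cat{C})=\copsh{\opcat{\cat{C}}\times\cat{C}}$. There is a canonical identity-on-objects functor $J:\opcat{\cat{C}}\times\cat{C}\morph{}\opt(\cat{C})$, sending a pair of morphisms to the optic with trivial residual $m=i$, and restriction along it gives a forgetful functor $J^*:\copsh{\opt(\cat{C})}\morph{}\Prof(\cat{C})$. The whole proof amounts to showing that $J^*$ exhibits $\copsh{\opt(\cat{C})}$ as the category of strong profunctors sitting over its image.

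First I would construct the two comparison functors explicitly. In the direction $\copsh{\opt(\cat{C})}\to\StProf(\cat{C})$: given a copresheaf $F$, its underlying profunctor is $J^*F$, and its strength is obtained by acting with the residual-introducing morphisms of $\opt(\cat{C})$ — the canonical optics relating $(a,b)$ and $(m\otimes a,m\otimes b)$ — which produce the natural family $F(a,b)\times\cat{C}(c,d)\morph{}F(a\otimes c,b\otimes d)$. In the direction $\StProf(\cat{C})\to\copsh{\opt(\cat{C})}$: given strong $P$ with strength $\zeta$, I would define $F$ to agree with $P$ on objects and to send an optic $[\,m,f,g\,]:(a,a')\to(b,b')$ to the function ``first tensor in the residual $m$ using $\zeta$, then act by $f$ and $g$ through the ordinary bimodule structure of $P$.''

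The crux is the \textbf{well-definedness and functoriality} of this second assignment. Well-definedness requires the prescription to respect the coend quotient defining $\opt(\cat{C})((a,a'),(b,b'))=\int^{m}\cat{C}(\cdots)\times\cat{C}(\cdots)$: the two ways of sliding a morphism across the residual must agree, and this is \emph{exactly} the content of the ``essentially associative and unital'' Tambara axioms for $\zeta$ (naturality of $\zeta$ in the residual variable together with its compatibility with $\otimes$). Functoriality — that composition of optics, which tensors residuals, is sent to the composite action — is then the associativity law for $\zeta$, and preservation of identities is its unit law. The efficient way to organise this is to compute, via the co-Yoneda lemma (the ``Dirac delta'' property highlighted earlier), the monad $T=J^*\lan_J$ induced on $\Prof(\cat{C})$:
\[
  TP(a,b)\;=\;\int^{(x,y)}\opt(\cat{C})((x,y),(a,b))\times P(x,y)\;\cong\;\int^{m}P(m\otimes a,m\otimes b),
\]
so that $T$ is the Tambara double, whose (co)algebra structure maps are precisely strengths. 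Matching the \emph{direction} of the structure map with the stated direction of $\zeta$ — equivalently, presenting Tambara modules as algebras for this coend or as coalgebras for the companion end $\int_{m}P(m\otimes a,m\otimes b)$ — is the one genuinely fiddly bookkeeping step, which I would settle by following the variance conventions fixed in the definition of $\opt(\cat{C})$.

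Finally I would invoke monadicity to finish cheaply: $J^*$ is conservative (being restriction along a functor that is bijective on objects) and has the left adjoint $\lan_J$, with all relevant coequalisers computed pointwise in $\set$, so Beck's theorem gives $\copsh{\opt(\cat{C})}\cong T\text{-}\mathsf{Alg}$. Identifying $T\text{-}\mathsf{Alg}$ with $\StProf(\cat{C})$ through the explicit functors above, and observing that a map of copresheaves is the same datum as a strong natural transformation (immediate, since both are natural families commuting with the strength/action), yields the stated isomorphism. The main obstacle, as indicated, is the coend bookkeeping in the well-definedness step; the monadicity argument and the identification of morphisms are comparatively formal.
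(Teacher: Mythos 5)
Your overall route --- realise $\opt(\cat{C})$ as sitting under the identity-on-objects functor $J:\opcat{\cat{C}}\times\cat{C}\morph{}\opt(\cat{C})$, show $J^*$ is monadic via Beck, and identify algebras for $T=J^*\lan_J$ with Tambara modules --- is exactly the Pastro--Street argument that the paper invokes by citation (the paper gives no proof of its own). Your first half is also fine in outline: the two explicit comparison functors, and the observation that well-definedness on the coend quotient is precisely the content of the Tambara axioms.

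However, the step you call the ``efficient way to organise'' the crux contains a genuine error: the claimed simplification
\begin{equation*}
TP(a,b)=\int^{(x,y)}\opt(\cat{C})((x,y),(a,b))\times P(x,y)\;\cong\;\int^{m}P(m\otimes a,m\otimes b)
\end{equation*}
is false. The co-Yoneda reduction cannot be performed, because the coend variables $x,y$ occur only inside $\cat{C}(a,m\otimes x)$ and $\cat{C}(m\otimes y,b)$, guarded by $m\otimes-$; co-Yoneda needs a bare hom $\cat{C}(c,x)$ or $\cat{C}(x,c)$. The correct value is the Pastro--Street ``double''
\begin{equation*}
TP(a,b)\cong\int^{m,x,y}\cat{C}(a,m\otimes x)\times P(x,y)\times\cat{C}(m\otimes y,b),
\end{equation*}
which does not collapse further. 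Concretely, let $\cat{C}$ be the discrete symmetric monoidal category $(\nats,+,0)$ and let $P(0,0)=\{*\}$ with all other values empty: then $TP(1,1)=\{*\}$ (take $m=1$, $x=y=0$), while $\int^{m}P(m+1,m+1)=\emptyset$. Your formula does hold when $\cat{C}$ is compact closed --- transpose $\cat{C}(a,m\otimes x)\cong\cat{C}(m^{*}\otimes a,x)$, apply co-Yoneda, and reindex $m\mapsto m^{*}$ --- but compact closure is exactly the hypothesis this theorem is meant to avoid. The error also undermines the ``bookkeeping'' you defer: algebras for the functor $\int^{m}P(m\otimes -,m\otimes\bl)$ would be families $P(m\otimes a,m\otimes b)\morph{}P(a,b)$, i.e.\ costrengths pointing the wrong way, whereas Tambara strengths are coalgebras for the end $\int_{m}P(m\otimes a,m\otimes b)$, equivalently algebras for the double monad above (its left adjoint). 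So the direction issue is not a convention one can settle afterwards: with your $T$, the identification of $T\text{-}\mathsf{Alg}$ with $\StProf(\cat{C})$ fails outright. With the corrected formula, your monadicity argument (conservativity of $J^*$, pointwise coequalisers, Beck) and the identification of algebra maps with strengths and algebra morphisms with strong transformations go through, and reproduce the cited proof.
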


\begin{definition}
  Let $(\cat{C},\otimes,i)$ be a symmetric monoidal category.
  The category $\opt(\cat{C})$ has objects given by pairs $\vb*{a}:=(a,a')$ of objects of $\cat{C}$.
  The homs are given by the following sets,
  \begin{equation*}
    \opt(\cat{C})(\vb*{a},\vb*{b}):= \int^x \cat{C}(b,x\otimes a)\times\cat{C}(x\otimes a',b').
  \end{equation*}
\end{definition}

We can think of the category $\opt(\cat{C})$ as having morphisms given by the single-holed combs built from morphisms of $\cat{C}$.
Explicitly, a morphism $\vb*{a}\morph{}\vb*{b}$ is given by an equivalence class of $(f,g)$ pairs of maps $f:b\morph{}x\otimes a$ and  $g:x\otimes a'\morph{}b'$ for some $x$, quotiented by the equivalence relation generated by sliding morphisms between the two halves of the comb.

\begin{xlrbox}{optic1}
	\begin{tikzpicture}[baseline=(current bounding box.center),{every node/.style}={scale=0.8}]
		\node[pants,top] (p) {};
		\node[tube,yscale=0.5,anchor=top] (t) at (p.leftleg) {};
		\node[copants,bot,anchor=leftleg] (c) at (t.bot) {};
		\node[end] at (c.rightleg) {};
		\node[end_dot] at (t.center) {};
		\node[end_dot] at (p.rightleg) {};
		\begin{pgfonlayer}{nodelayer}
			\node[label] (g) at (p.center) {$g$};
			\node[label] (f) at (c.center) {$f$};
			\node[label] (v) at (p.leftleg) {$v$};
		\end{pgfonlayer}
		\begin{pgfonlayer}{strings}
			\draw (g) to[out=-155,in=90] (p.leftleg) {};
			\draw (g) to[out=-25,in=90] (p.rightleg) {};
			\draw (f) to[out=155,in=-90] (c.leftleg) {};
			\draw (f) to[out=25,in=-90] (c.rightleg) {};
			\draw (g) to (p.belt) {};
			\draw (f) to (c.belt) {};
			\draw (t.top) to (t.bot) {};
		\end{pgfonlayer}
	\end{tikzpicture}
\end{xlrbox}

\begin{xlrbox}{optic2}
	\begin{tikzpicture}[baseline=(current bounding box.center),{every node/.style}={scale=0.8}]
		\node[pants,top] (p) {};
		\node[tube,yscale=0.5,anchor=top] (t) at (p.leftleg) {};
		\node[copants,bot,anchor=leftleg] (c) at (t.bot) {};
		\node[end] at (c.rightleg) {};
		\node[end_dot] at (t.center) {};
		\node[end_dot] at (p.rightleg) {};
		\begin{pgfonlayer}{nodelayer}
			\node[label] (g) at (p.center) {$g$};
			\node[label] (f) at (c.center) {$f$};
			\node[label] (v) at (c.leftleg) {$v$};
		\end{pgfonlayer}
		\begin{pgfonlayer}{strings}
			\draw (g) to[out=-155,in=90] (p.leftleg) {};
			\draw (g) to[out=-25,in=90] (p.rightleg) {};
			\draw (f) to[out=155,in=-90] (c.leftleg) {};
			\draw (f) to[out=25,in=-90] (c.rightleg) {};
			\draw (g) to (p.belt) {};
			\draw (f) to (c.belt) {};
			\draw (t.top) to (t.bot) {};
		\end{pgfonlayer}
	\end{tikzpicture}
\end{xlrbox}

\begin{equation}\label{eq:sliding}
	\xusebox{optic1} \ \sim \ \xusebox{optic2}
\end{equation}

The category $\opt(\cat{C})$ has a monoidal product which acts on objects component-wise $(a,a')\otimes(b,b'):=(a\otimes a',b\otimes b')$ and on morphisms by placing combs next to each other in the intuitively obvious way \cite{riley_optics}.
The unit object is $(i,i)$.

Theorem \ref{thm:optics_presheaves} can be understood on an intuitive level as saying that to give a strong profunctor $P:\opcat{\cat{C}}\times\cat{C}\morph{}\set$ is to give a family of sets $P(a,b)$ that are not only functorial in $\opcat{\cat{C}}\times\cat{C}$, but are furthermore functorial over the category $\opt(\cat{C})$.
This means that the sets $P(a,b)$ have an action by combs from the category $\cat{C}$, combining both the functoriality and strength into a single notion.
This provides a formal connection between the laws \eqref{eq:nat_law} and \eqref{eq:strength_law} and the intuitive drawing \eqref{eq:basic_supermap}: a strong natural transformation is exactly one that commutes with combs.

Finally, it is worth noting that the Yoneda lemma ensures that we have embeddings of $\cat{C}$ into both its presheaves and copresheaves,
\begin{align*}
y^{-}:\cat{C} \morph{} \psh{\cat{C}} \\
y_{-}:\cat{C} \morph{} \opcat{\copsh{\cat{C}}}
\end{align*}
Explicitly these embeddings are given on objects by $y^a = \cat{C}(-,a)$ and $y_a=\cat{C}(a,-)$, and these presheaves are known as \textit{representable} and \textit{corepresentable} respectively.
This means that optics can be embedded into $\StProf(\cat{C})$ and thus the combs are present, and corepresentable, in the category in which all of the rest of our supermaps will live.
Some consequences of this will be investigated in Section \ref{sec:duals}.

\subsubsection{Combs vs.\ Optics}
We finish this section with a short discussion of a small but important technicality.
When we call the morphisms of $\opt(\cat{C})$ ``combs'' we must proceed with a little care for the category of optics is not the only way to define a category of combs over $\cat{C}$.
The equivalence relation generated by the sliding equations \eqref{eq:sliding} is not the only reasonable choice and is typically \textit{not} the one used in the study of quantum causal structures \cite{hefford_coend}.
This motivates the following definition where combs are equivalent up to operational, that is, extensional behaviour, as would be expected in the quantum literature.

\begin{definition}[\cite{hefford_coend}]
	Let $(\cat{C},\otimes,i)$ be a symmetric monoidal category.
	The category $\comb(\cat{C})$ has objects given by pairs $\vb*{a}=(a,a')$ of objects of $\cat{C}$.
	A morphism of $\comb(\cat{C})$ is an equivalence class of pairs $(f,g)$ of morphisms of $\cat{C}$ under the equivalence relation generated by saying that $(f,g)\sim(f',g')$ precisely when the combs are equal on all extended inputs.

	\begin{equation*}
		\tikzfigscale{1}{figs/comb_equiv1} \ = \ \tikzfigscale{1}{figs/comb_equiv2} \qquad \forall \phi
	\end{equation*}

	$\comb(\cat{C})$ has a monoidal operation which is completely analogous to $\opt(\cat{C})$.
\end{definition}

It is understood that in general $\opt(\cat{C})$ and $\comb(\cat{C})$ do not coincide, though there is always a full, identity on objects functor $\opt(\cat{C})\morph{}\comb(\cat{C})$ since the comb equivalence relation is more coarse than the optic equivalence relation.
For certain choices of $\cat{C}$ it has been shown that the two categories coincide, in particular when $\cat{C}$ is compact closed or $\cat{C}=\mathsf{Unitaries}$, the category of unitaries between Hilbert spaces \cite{hefford_coend}, but the general problem was left unresolved.

Here we establish that optics and combs do in fact coincide for the category $\cptp$ of quantum channels, i.e.\ completely positive trace preserving maps between Hilbert spaces.
This result along with the equivalent result for $\mathsf{Unitaries}$ are of great utility to us because they allow us to model quantum combs as their equivalent optics and thus in terms of strong profunctors.

\begin{theorem}\label{thm:cptp_equiv}
	There is a symmetric monoidal isomorphism of categories $\comb(\cptp)\cong\opt(\cptp)$.
\end{theorem}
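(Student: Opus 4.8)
The plan is to exploit the general fact, noted just above the statement, that comb-equivalence is always coarser than optic-equivalence, so there is a canonical full, identity-on-objects functor $F:\opt(\cptp)\morph{}\comb(\cptp)$. Symmetric monoidality of $F$ is immediate, since both monoidal structures are defined on the very same representative pairs $(f,g)$ and $F$ fixes objects and sends representatives to representatives. An isomorphism of categories is therefore equivalent to $F$ being \emph{faithful}: I must show that whenever two optics $(f,g)$ and $(f',g')$ — say with internal wires $x$ and $x'$ — have the same operational behaviour on all extended inputs, they are already identified by the coend, i.e.\ connected by a zig-zag of the sliding moves \eqref{eq:sliding}.

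First I would reduce operational equality to an equality of concrete operators. Writing the action on an extended input $\phi:a\otimes e\morph{}a'\otimes e'$ as $(g\otimes 1_{e'})\circ(1_x\otimes\phi)\circ(f\otimes 1_e)$, local (process) tomography of quantum theory guarantees that agreement of these families of channels for \emph{all} $\phi$ and all ancillas $e,e'$ is equivalent to agreement of the two link-product/Choi operators of the combs as elements of the relevant operator space. Thus the two optics are operationally equal precisely when they present the same quantum comb in the Chiribella--D'Ariano--Perinotti sense.

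The heart of the argument is then the quantum comb \emph{realisation (uniqueness) theorem}: any two sequences of channels realising the same comb operator are related by channels acting only on the internal memory wire. I would make this precise by Stinespring-dilating each CPTP map appearing in the two optics to an isometry, reducing the claim to the already-known coincidence $\comb(\mathsf{Unitaries})\cong\opt(\mathsf{Unitaries})$ of \cite{hefford_coend}; the discarding/trace that completes each dilation is itself an optic datum that can be absorbed on the memory wire. The uniqueness statement then furnishes exactly a morphism on the internal wire witnessing $(f,g)\sim(f',g')$, and since such a witness decomposes into slidings by CPTP maps across the cut, faithfulness of $F$ follows.

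The main obstacle I expect is this final matching step: the comb-realisation theorem relates the two memories by an \emph{isometry} of possibly unequal dimension (with junk/padding), whereas the optic coend identifies representatives only through slidings by honest morphisms $s:x\morph{}x'$ of $\cptp$. I would have to verify that the isometric memory-change, together with the complementary discarding produced by the dilation, can be rewritten as a finite zig-zag of the moves \eqref{eq:sliding} — sliding the isometry one way across the cut and its complementary channel the other — and that differing internal dimensions are harmlessly reconciled by padding with the unit $i$. Care is also needed to confirm that the ancilla-extended inputs (i.e.\ the strength) are genuinely required for tomographic completeness, and that monoidality of $F$ is respected throughout.
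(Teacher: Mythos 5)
Your setup (identity-on-objects full comparison functor, reduction to faithfulness, collapse of operational equality to a single Choi-type datum) matches the paper's framing, and your middle step is also essentially theirs: the paper tests the two combs on the swap, purifies $f,f',g,g'$ to $p,p',q,q'$, discards the output of the top maps, and invokes Stinespring uniqueness to produce an isometry $v$ with $(v\otimes 1)p=p'$. But the step you flag as the ``main obstacle'' is not a verification you postponed --- it is the actual mathematical content of the theorem, and your proposal contains no mechanism for it. After sliding $v$ across the cut, what comb-equality gives you is that $q'\circ(v\otimes 1)$ and $q$ agree only \emph{after pre-composition with} $p$ (equivalently, on the range/support of $p$), and $p$ is neither surjective nor invertible, so it cannot be cancelled. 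The paper resolves exactly this with a non-obvious tool: Lemma~8 of \cite{carette_delayedtrace}, which says every pure map of $\cptp$ has a \emph{shadow}, an idempotent CPTP map $\pi$ on the memory wire with $(\pi\otimes 1)p=p$ that absorbs the support of $p$. Because $\pi$ is an honest CPTP map, it can be slid through the coend relation \eqref{eq:sliding}, and the chain $(f',g')\sim\cdots\sim(f,g)$ is completed by sliding $v$ and $\pi$ back and forth and then erasing $\pi$ again via the shadow identities. Nothing in your proposal plays the role of $\pi$; ``sliding the isometry one way across the cut and its complementary channel the other'' does not address the support problem, since the obstruction is not the mismatch of memory dimensions but the impossibility of cancelling $p$ in an equation that only holds on its range.

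Your proposed fallback --- reducing to the known isomorphism $\comb(\mathsf{Unitaries})\cong\opt(\mathsf{Unitaries})$ of \cite{hefford_coend} --- also does not go through. Operational equality of the CPTP combs $(f,g)\sim(f',g')$ does \emph{not} imply operational equality of their purified combs $(p,q)$ and $(p',q')$: the purifications agree only up to discarding environments, so there is no statement about isometric combs to which a uniqueness theorem for pure realisations could be applied. Moreover, the optic relation in $\opt(\mathsf{Unitaries})$ only permits sliding \emph{unitaries}, whereas the witnesses you need to move across the cut here are a non-invertible isometry and discarding maps, which do not live in that category; and there is no uniqueness theorem off the shelf asserting that two CPTP realisations of the same comb operator are connected by channels on the memory in a form that descends to a zig-zag of the moves \eqref{eq:sliding}. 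So the proposal correctly reproduces the paper's skeleton but is missing its key lemma, and the route offered to fill the hole would fail.
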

\begin{proof}
	Given in Appendix \ref{sec:cptp_proof}.
\end{proof}

\section{Multi-Party Supermaps}

\subsection{Generalised Spaces of Maps}
The most basic form of supermap \eqref{eq:basic_supermap} defined in the previous section was allowed to take as its input any morphism on the system $(a,a')$ together with an arbitrary tensorial extension to the environment.
However it is known that not all supermaps can take arbitrary spaces of input channels.
This is especially true for multi-party supermaps which may only be well defined on, for example, separable inputs, or spaces with certain signalling constraints.

In general such constraints are required to avoid the creation of time-loops and thus to maintain causality.
The quantum switch, for instance, is not a well defined map on arbitrary bipartite inputs, generating time-loops on inputs like the swap.

This means we would like to be able to consider supermaps with (co)domains more general than those of the form $\suphom{a}{a'}$.
By taking seriously the observation that $\eta$ of \eqref{eq:basic_supermap} is a strong natural transformation between strong profunctors, we can use the structure of $\StProf(\cat{C})$ to generate many examples of interesting generalised spaces of maps.

\subsubsection{Tensorial Structure of Strong Profunctors}
The category of strong profunctors has two closed monoidal products that we can use to generate spaces of semi-localisable and separable channels.
To understand how these arise let us first ignore strength and consider the category $\Prof(\cat{C})$.
There are two closed monoidal operations on this category, the first is simply given by composition of endoprofunctors $Q\circ P$, with unit object given by the hom-profunctor $1_\cat{C}$.
The second comes from the underlying monoidal operation of the category $\cat{C}$.
There is an intimate connection between closed monoidal operations on a presheaf category $\psh{\cat{C}}$ and \textit{pro}monoidal operations on the underlying category $\cat{C}$.

\begin{definition}[\cite{day,day_thesis}]
	A category $\cat{C}$ is \textit{promonoidal} when it is equipped with a pair of profunctors,
	\begin{align*}
	  \otimes:\cat{C}\times\cat{C}\pmorph\cat{C}, \\
	  J: 1\pmorph\cat{C},
	\end{align*}
	and natural isomorphisms $\otimes(\otimes\times 1)\cong\otimes(1\times\otimes)$ and $\otimes(J\times 1)\cong 1\cong \otimes(1\times J)$ satisfying the triangle and pentagon coherence conditions.
\end{definition}
  
We can think of a promonoidal category as what we get when we interpret the data for a monoidal category in the bicategory $\Prof$ instead of $\Cat$.
This has the result of replacing the tensor product functor with a profunctor and unit object with a presheaf.
For applications to physics it may be useful to imagine this profunctorial tensor as an assignment of generalised ``virtual'' objects to tensors of $\cat{C}$.
These objects do not exist in $\cat{C}$ itself but are modelled on and inherit their properties from $\cat{C}$ \cite{hefford_spacetime}.

It is well understood by a classic theorem of Day that promonoidal structures and closed monoidal structures on presheaves coincide.

\begin{theorem}[\cite{day,day_thesis}]
	Promonoidal structures $(\otimes,J)$ on a category $\cat{C}$ are equivalent to closed monoidal structures on $\psh{\cat{C}}$.
\end{theorem}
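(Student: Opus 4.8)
The plan is to realise both directions of the correspondence through \emph{Day convolution}. Given a promonoidal structure $(\otimes, J)$, write $P := \otimes$ for the tensor profunctor, so that $P(c; a, b)$ is contravariant in $c$ and covariant in $a,b$, and regard the unit $J$ as a presheaf. I would equip $\psh{\cat{C}}$ with the monoidal product
\begin{equation*}
	(F \star G)(c) := \int^{a,b} P(c; a, b)\times F(a)\times G(b),
\end{equation*}
taking $J$ itself as the unit object. The forward passage sends a promonoidal structure to this closed monoidal structure; the reverse passage will recover the promonoidal data by restricting $\star$ to representables.

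First I would check that $\star$ is a bifunctor and verify the monoidal axioms by routine coend manipulation. Associativity follows from the Fubini theorem for coends: both $(F\star G)\star H$ and $F\star(G\star H)$ reduce, after pulling the inner coends out, to a triple coend over $a,b,d$ weighted by the left- respectively right-associated ternary profunctors $\int^c P(e;c,d)\times P(c;a,b)$ and $\int^{c'}P(e;a,c')\times P(c';b,d)$, and the promonoidal associator $\otimes(\otimes\times 1)\cong\otimes(1\times\otimes)$ identifies them. The unit isomorphisms $J\star F\cong F\cong F\star J$ combine the promonoidal unit constraints with the co-Yoneda (density) lemma to collapse the weighting coend. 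The pentagon and triangle for $\star$ are then forced by the corresponding promonoidal coherences. For closedness I would observe that $\set$-valued coends and products preserve colimits in each argument, so $-\star G$ is cocontinuous; since $\psh{\cat{C}}$ is locally presentable this yields a right adjoint by the adjoint functor theorem, or one may exhibit the internal hom explicitly as the end
\begin{equation*}
	[G,H](a) = \int_{b,c}\set\big(P(c; a, b)\times G(b),\, H(c)\big),
\end{equation*}
verified by the chain $\psh{\cat{C}}(F\star G, H)\cong\int_{a,b,c}\set(P(c;a,b)\times F(a)\times G(b), H(c))\cong\psh{\cat{C}}(F,[G,H])$ using that hom sends colimits to limits and currying.

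For the reverse direction I would exploit that the representables are dense, so every presheaf is canonically a colimit of representables and $\psh{\cat{C}}$ is the free cocompletion of $\cat{C}$. Given any closed monoidal structure $(\star, I)$, the tensor $-\star-$ preserves colimits separately in each variable (it admits a right adjoint in each), hence is the left Kan extension along Yoneda of its restriction to representables. I would therefore set
\begin{equation*}
	P(c; a, b) := (y^a\star y^b)(c), \qquad J := I,
\end{equation*}
and transport the associativity and unit isomorphisms of $\star$ across the Yoneda embedding to obtain the promonoidal coherences. That the two passages are mutually inverse is a roundtrip check: reconvolving the extracted data reproduces $\star$ by density and cocontinuity, while restricting a Day convolution to representables returns the original profunctors by co-Yoneda.

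The step I expect to be the main obstacle is this reverse reduction of an \emph{abstract} closed monoidal structure to a Day convolution. The crux is the principle that closedness is exactly cocontinuity in each variable, which is what licenses reconstructing the whole tensor from its values on representables; making it precise requires the universal property of the free cocompletion, so that a separately-cocontinuous bifunctor on $\psh{\cat{C}}$ is determined up to canonical isomorphism by a profunctor on $\cat{C}$, together with careful bookkeeping of how the associator, unitors, pentagon and triangle translate back and forth. By contrast, the monoidal-axiom verifications in the forward direction are comparatively mechanical coend calculations.
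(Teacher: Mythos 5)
Your proposal is correct and takes the standard route---the same one behind the result the paper cites to Day (the paper itself gives no proof of this statement, only the convolution and internal-hom formulas, which coincide with your constructions up to currying and renaming of variables). The only point worth flagging is that for your reverse direction ``closed'' must be read as biclosed (or the monoidal structure taken symmetric, as it is in the paper's applications), since reconstructing the tensor from its restriction to representables needs cocontinuity in \emph{each} variable separately, which is exactly what your appeal to a right adjoint in each argument supplies.
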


Given a promonoidal category $(\cat{C},\otimes,J)$ the tensor product on $\psh{\cat{C}}$ is known as \textit{Day convolution} and is given by the following expression,
\begin{equation*}
  (F\star G)(-) = \int^{xy} \otimes(-,x,y) \times Fx \times Gy.
\end{equation*}
The unit object is the profunctor $J$ seen as a presheaf $J:\opcat{\cat{C}}\morph{}\set$.
The internal hom is given by the following expression,
\begin{equation*}
  [F,G](-) = \int_{xy} \set\big( \mathord{\otimes}(y,-,x), \set(Fx,Gy)\big).
\end{equation*}

Since $\Prof(\cat{C})=\copsh{\opcat{\cat{C}}\times\cat{C}}$ is a presheaf category, any promonoidal structures on $\opcat{\cat{C}}\times\cat{C}$ induce closed monoidal operations on $\Prof(\cat{C})$.
In particular, we can take Day convolution over the monoidal structure on $\opcat{\cat{C}}\times\cat{C}$, given by the following formula,
\begin{equation*}
  (P\otimes Q)(-,\bl) = \int^{abcd} \cat{C}(-,a\otimes b)\times P(a,c)\times Q(b,d) \times\cat{C}(c\otimes d,\bl).
\end{equation*}
The unit object is given by the Yoneda embedding $y_i\times y^i$ of the unit object $(i,i)$ from $\opcat{\cat{C}}\times\cat{C}$.

Going in the other direction, it is also possible to view the closed monoidal structure $(\circ,1_\cat{C})$ equivalently as a promonoidal structure on $\opcat{\cat{C}}\times\cat{C}$.
The profunctor giving the tensor product of this operation acts to produce a sequential pair of holes in the category $\cat{C}$, without environment wires.
\begin{equation*}
	(a,a'), (b,b') \mapsto \cat{C}(-,a)\times \cat{C}(a',b) \times \cat{C}(b'\bl)
\end{equation*}

Much of the discussion for $\Prof(\cat{C})$ proceeds analogously for $\StProf(\cat{C})$: it has two closed monoidal operations which are inherited from those of $\Prof(\cat{C})$.
The first is given by composition of the profunctors, together with the strength induced on this composition (i.e.\ by the universal property of the coend).
We will write $\seq$ for this tensor and the unit is still $1_\cat{C}$ equipped with its canonical strength given by taking tensor products $(f,g)\mapsto f\otimes g$.

The other tensor product arises by Day convolution over the tensor of $\opt(\cat{C})$.
We write $P\otimes_\cat{C}Q$ for this tensor and its unit is the Yoneda embedding of the unit from optics, $y_{(i,i)}\cong 1_\cat{C}$.
Thus it now coincides with the unit of the tensor $\seq$.

Let us unpack a little how the tensor $\otimes_\cat{C}$ behaves.
Recall that strong profunctors have a bimodular nature not only in the vertical but horizontal, tensorial direction.
This means they have a tensor product which is analogous to the tensor of bimodules, and to the coend, but in this horizontal direction.
Given two strong profunctors $P,Q:\cat{C}\pmorph\cat{C}$, their tensor $P\otimes_\cat{C}Q$ can also be described as a quotient of the Day convolution over $\opcat{\cat{C}}\times\cat{C}$, such that the right strength on $P$ and left strength on $Q$ are coequalised \cite{garner}.
If we consider the case of $1_\cat{C}\otimes_\cat{C} 1_\cat{C}$, this quotient has the effect of making the following diagrams equivalent.

\begin{xlrbox}{hole1}
	\begin{tikzpicture}[baseline=(current bounding box.center),{every node/.style}={scale=0.8}]
		\node[pants,top,bot] (p) {};
		\node[copants,bot,anchor=leftleg] (c) at (p.leftleg) {};
		\begin{pgfonlayer}{nodelayer}
			\node[label] (f) at (p.leftleg) {$f$};
		\end{pgfonlayer}
		\begin{pgfonlayer}{strings}
			\draw (c.belt) to[out=90,in=-90] (f) to[out=90,in=-90] (p.belt) {};
		\end{pgfonlayer}
	\end{tikzpicture}
\end{xlrbox}

\begin{xlrbox}{hole2}
	\begin{tikzpicture}[baseline=(current bounding box.center),{every node/.style}={scale=0.8}]
		\node[pants,top,bot] (p) {};
		\node[copants,bot,anchor=leftleg] (c) at (p.leftleg) {};
		\begin{pgfonlayer}{nodelayer}
			\node[label] (f) at (p.rightleg) {$f$};
		\end{pgfonlayer}
		\begin{pgfonlayer}{strings}
			\draw (c.belt) to[out=90,in=-90] (f) to[out=90,in=-90] (p.belt) {};
		\end{pgfonlayer}
	\end{tikzpicture}
\end{xlrbox}

\begin{equation*}
	\xusebox{hole1} \sim \xusebox{hole2}
\end{equation*}

This equivalence is analogous to the sliding equations for the coend \eqref{eq:sliding}, and thus also to the tensor product of bimodules.
It is also intuitively clear from the above diagram why $1_\cat{C}$ is the unit of $\otimes_\cat{C}$. 

\begin{xlrbox}{optic2hole}
	\begin{tikzpicture}[baseline=(current bounding box.center),{every node/.style}={scale=0.8}]
		\node[pants,top] (p) {};
		\node[tube,yscale=0.5,anchor=top] (t) at (p.leftleg) {};
		\node[copants,bot,anchor=leftleg] (c) at (t.bot) {};
		\node[end] at (c.rightleg) {};
		\node[end_dot] at (t.center) {};
		\node[end_dot] at (p.rightleg) {};
		\node[pants,anchor=belt] (p2) at (c.belt) {};
		\node[tube,yscale=0.5,anchor=top] (t2) at (p2.leftleg) {};
		\node[copants,bot,anchor=leftleg] (c2) at (t2.bot) {};
		\node[end] at (c2.rightleg) {};
		\node[end_dot] at (t2.center) {};
		\node[end_dot] at (p2.rightleg) {};
		\begin{pgfonlayer}{nodelayer}
			\node[label] (g) at (p.center) {$k$};
			\node[label] (f) at (c.center) {$h$};
			\node[label] (g2) at (p2.center) {$g$};
			\node[label] (f2) at (c2.center) {$f$};
		\end{pgfonlayer}
		\begin{pgfonlayer}{strings}
			\draw (g) to[out=-155,in=90] (p.leftleg) {};
			\draw (g) to[out=-25,in=90] (p.rightleg) {};
			\draw (f) to[out=155,in=-90] (c.leftleg) {};
			\draw (f) to[out=25,in=-90] (c.rightleg) {};
			\draw (g) to (p.belt) {};
			\draw (f) to (c.belt) {};
			\draw (t.top) to (t.bot) {};
			\draw (g2) to[out=-155,in=90] (p2.leftleg) {};
			\draw (g2) to[out=-25,in=90] (p2.rightleg) {};
			\draw (f2) to[out=155,in=-90] (c2.leftleg) {};
			\draw (f2) to[out=25,in=-90] (c2.rightleg) {};
			\draw (g2) to (p2.belt) {};
			\draw (f2) to (c2.belt) {};
			\draw (t2.top) to (t2.bot) {};
		\end{pgfonlayer}
	\end{tikzpicture}
\end{xlrbox}

It is worth understanding the promonoidal structure on $\opt(\cat{C})$ induced by the closed monoidal structure $(\seq,1_\cat{C})$ on $\StProf(\cat{C})$.
On objects it is given by the following expression (rotated to save space),
\begin{equation*}
  \seq(\vb*{a},\vb*{b},\vb*{c}) = \int^{xy} \cat{C}(c,x\otimes a)\times\cat{C}(x\otimes a',y\otimes b)\times\cat{C}(y\otimes b',c').
\end{equation*}
\begin{equation*}
	\rotatebox{-90}{\xusebox{optic2hole}}
\end{equation*}
We can think of this as the set of two-holed combs with inputs $\vb*{a}$ and $\vb*{b}$ and output $\vb*{c}$.
In fact, the promonoidal structure $\circ$ generates all the higher arity combs by composing it with itself.

\begin{definition}
	A category $\cat{C}$ is \textit{duoidal} when it is equipped with two monoidal structures $(\seq,i)$ and $(\otimes,j)$ and natural transformations of the following form,
	\begin{align*}
		(a\seq b) \otimes (c\seq d) \morph{} (a\otimes c)\seq (b\otimes d) \\
		j \morph{} j\seq j, \qquad i \otimes i \morph{} i, \qquad j \morph{} i
	\end{align*}
	These must satisfy a series of coherence conditions which can be found in e.g.\ \cite{garner}.
	A duoidal category is \textit{normal} when $i\cong j$, and \textit{closed} when both tensor products are closed.
\end{definition}
  
The two tensors $(\circ,1_\cat{C})$ and $(\otimes,y_i\times y^i)$ interact so as to make $\Prof(\cat{C})$ into a closed duoidal category \cite{garner}.
It can furthermore be shown that the duoidal structure of $\Prof(\cat{C})$ is inherited by $\StProf(\cat{C})$ so that it is a closed normal duoidal category with respect to its two monoidal structures $(\seq,1_\cat{C})$ and $(\otimes_\cat{C},1_\cat{C})$. 
In fact, $\StProf(\cat{C})$ is precisely the \textit{normalisation} of the duoidal category $\Prof(\cat{C})$ \cite{garner}.

We can equivalently view the closed duoidal structure of $\Prof(\cat{C})$ and $\StProf(\cat{C})$ as \textit{pro}duoidal structures on $\opcat{C}\times\cat{C}$ and $\opt(\cat{C})$ respectively \cite{earnshaw}.
These each have two promonoidal structures that interact via identical laws as for a duoidal category, just interpreted in $\Prof$.
This because there is an equivalence between produoidal structures on a category and closed duoidal structures on its category of presheaves \cite{booker}.

Finally we note that the category $\comb(\cat{C})$ possesses an analogous promonoidal operation which returns the two-holed combs, making this category produoidal.
In the case of $\cptp$ we extend Theorem \ref{thm:cptp_equiv} to show that the isomorphism includes the produoidal structure and thus for quantum theory we can treat $n$-combs as $n$-optics.

\begin{theorem}[extending Theorem \ref{thm:cptp_equiv}]\label{thm:cptp_equiv2}
	There is an isomorphism of produoidal categories $\comb(\cptp)\cong\opt(\cptp)$.
\end{theorem}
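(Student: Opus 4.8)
The plan is to bootstrap from Theorem~\ref{thm:cptp_equiv}, which already supplies an identity-on-objects symmetric monoidal isomorphism $\Phi:\comb(\cptp)\morph{}\opt(\cptp)$, and to upgrade $\Phi$ to an isomorphism of produoidal categories. A produoidal structure consists of two promonoidal structures together with interchange and unit comparison maps, so it suffices to check that $\Phi$ carries each promonoidal structure of $\comb(\cptp)$ to the corresponding one of $\opt(\cptp)$. The parallel tensor comes essentially for free: its promonoidal structure is the representable one induced by the monoidal product (placing diagrams side-by-side), and these products are identified by $\Phi$ since it is already a symmetric monoidal isomorphism. The genuine content therefore lies in the sequential promonoidal structure $\seq$, whose binary components are the sets of two-holed combs respectively optics, with higher arities generated by iteration.

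First I would observe that for a fixed triple of parties the two sequential tensor-sets, $\seq_{\comb}(\vb*{a},\vb*{b},\vb*{c})$ and $\seq_{\opt}(\vb*{a},\vb*{b},\vb*{c})$, are quotients of \emph{the same} underlying set of $\cptp$-triples $(f,h,g)$ realising the two-holed shape; they differ only in the imposed equivalence relation --- extensional equality for $\comb(\cptp)$ versus the sliding relation for $\opt(\cptp)$. Thus producing the produoidal isomorphism reduces to proving the two-holed analogue of Theorem~\ref{thm:cptp_equiv}: that on the two-holed diagrams over $\cptp$ the extensional and sliding equivalence relations coincide.

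For this I would reduce to the single-holed result already established. Since the sequential operation $\seq$ generates all higher-arity combs by iterating the composition $\circ$, a two-holed diagram is obtained by sequentially plugging $1$-holed combs, and I can peel off one hole at a time by induction. Concretely, the mechanism underlying Theorem~\ref{thm:cptp_equiv} for $\cptp$ is process tomography on the internal wire: the extensional action on all fillings reconstructs a one-holed comb up to sliding. The same tomographic probing applies to each internal wire $x,y$ of a two-holed diagram, reconstructing its sliding class from its extensional behaviour. I would then verify that the resulting bijections are natural in all three arguments and commute with the associators, the interchange law $(a\seq b)\otimes(c\seq d)\morph{}(a\otimes c)\seq(b\otimes d)$, and the unit comparisons; because every one of these structure maps is assembled from composition and tensoring of $\cptp$-morphisms --- operations shared by both categories and preserved by $\Phi$ --- this compatibility should follow formally once the coincidence of the two equivalence relations is in hand.

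The main obstacle I anticipate is precisely gluing the tomographic reconstructions across the several internal wires into a single global sliding equivalence, rather than merely obtaining pointwise equalities hole-by-hole. One must ensure that agreeing extensionally on \emph{all joint} fillings forces one coherent sliding witness simultaneously on every wire, which is where the specific structure of $\cptp$ --- the availability of enough test processes on composite systems to perform tomography --- is doing the real work. I expect this to fall out either from stating the appendix argument for Theorem~\ref{thm:cptp_equiv} at a level of generality that already accommodates several internal wires, or from the inductive peeling above, applying the one-holed result to the composite internal wire at each stage.
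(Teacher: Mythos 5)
Your reduction of the produoidal statement to ``extensional and sliding equivalence coincide on $n$-holed diagrams'' matches the paper, as does the choice to induct on the number of holes with Theorem~\ref{thm:cptp_equiv} as base case. But the inductive step --- the only place where real work happens --- is exactly what you leave unresolved, and your primary mechanism for it does not go through. If you curry one hole away (bend its wires to the boundary) and apply the one-holed theorem to the remaining hole, you obtain sliding equivalence of the \emph{curried} one-holed optics, whose internal wire is the tensor of the genuine internal wire with the bypassed hole's wire. The sliding relation of that curried optic permits morphisms that entangle these two factors, and such moves have no counterpart in the $n$-holed coend; concluding $n$-holed optic equivalence from curried optic equivalence amounts to injectivity of the currying map on optic-equivalence classes, which for $\cptp$ is essentially the statement being proven (on combs currying is injective because combs are defined extensionally; on optics it is not formal). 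This is precisely the ``gluing'' obstacle you flag yourself, and ``I expect this to fall out'' is where the proof is missing.

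The paper's inductive step resolves it differently: rather than peeling a hole, it peels the bottom \emph{channel} $f_0$. One purifies all the maps, inserts swaps into every hole simultaneously, discards outputs to conclude via Stinespring that the bottom purifications $p_0,p_0'$ agree up to an isometry $v$, and then uses the shadow lemma (Lemma 8 of \cite{carette_delayedtrace}) to trade $p_0$ for an idempotent channel $\pi$. The crucial move is that, because $\cptp$ embeds faithfully in the compact closed category of CP maps, the resulting concrete equality of swap-inserted diagrams converts back into \emph{extensional} equivalence of the truncated $(n-1)$-holed combs $(p_1\circ(\pi\otimes 1),p_2,\dots,p_n)$ and $(p'_1\circ(v\pi\otimes 1),p'_2,\dots,p'_n)$; the inductive hypothesis then gives their optic equivalence, whose sliding witnesses live on wires that remain internal after reattaching $p_0$, so they lift to the $n$-holed optic together with explicit slidings of $v$ and $\pi$. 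Your fallback option of ``stating the appendix argument at a level of generality accommodating several internal wires'' is indeed this route, but the compact-closure step that regenerates extensional equivalence at arity $n-1$, and the reassembly of the sliding witnesses, are the substance of the proof rather than details that follow formally.
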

\begin{proof}
	Given in Appendix \ref{sec:cptp_proof2}.
\end{proof}

\subsubsection{Generalised Spaces of Maps from the Tensors of Strong Profunctors}
We can use the tensors $\seq$ and $\otimes_\cat{C}$ to generate further examples of strong profunctors which will act as the (co)domains of multi-party supermaps

\begin{xlrbox}{semi_localisable}
	\begin{tikzpicture}[baseline=(current bounding box.center),{every node/.style}={scale=0.8}]
		\node[copants,bot,widebelt] (c) {};
    \node[pants,bot,widebelt,anchor=belt] (p) at (c.belt) {};
    \node[copants,bot,widebelt,anchor=rightleg] (c2) at (p.leftleg) {};
    \node[pants,bot,widebelt,anchor=belt] (p2) at (c2.belt) {};
    \node[cobowl] at (c.rightleg) {};
    \node[bowl] at (p.rightleg) {};
    \node[cobowl] at (c2.leftleg) {};
    \node[bowl] at (p2.leftleg) {};
    \node[end] at (c.leftleg) {};
		\begin{pgfonlayer}{nodelayer}
      \node[label] (f) at (p2.belt) {$f$};
      \node[label] (g) at (p.belt) {$g$};
      \node[representable] at (c.rightleg) {$b'$};
      \node[representable] at (p.rightleg) {$b$};
      \node[representable] at (c2.leftleg) {$a'$};
      \node[representable] at (p2.leftleg) {$a$};
		\end{pgfonlayer}
		\begin{pgfonlayer}{strings}
      \draw (f) to[out=125,in=-90] (c2.leftleg) {};
      \draw (f) to[out=-125,in=90] (p2.leftleg) {};
      \draw (f) to[out=-55,in=90] (p2.rightleg) {};
      \draw (f) to[out=55,in=-90] (c2.rightleg) to[out=90,in=-125] (g) {};
      \draw (g) to[out=125,in=-90] (c.leftleg) {};
      \draw (g) to[out=-55,in=90] (p.rightleg) {};
      \draw (g) to[out=55,in=-90] (c.rightleg) {};
		\end{pgfonlayer}
	\end{tikzpicture}
\end{xlrbox}

\begin{example}[Semi-localisable]
  Taking the sequential tensor of the basic types $\suphomrev{b}{b'}\seq \suphom{a}{a'}$ produces a space of semi-localisable channels with party $\vb*{a}$ causally preceding $\vb*{b}$ (left of \eqref{eq:gen_spaces}).
  As a result $\vb*{a}$ is able to signal to $\vb*{b}$ but not the other way around.
  Note that it is \textit{not} the case in a general category $\cat{C}$ that the one-way signalling channels are semi-localisable and factorise in the causally faithful way written above.
  There is an active area of research studying the relationship between signalling conditions and causally faithful factorisations \cite{beckman_localisable,schumacher_locality,lorenz_unitary}.
  One reason we deal here with semi-localisable channels instead of one-way signalling channels is to be as theory independent as possible; we do not assume that $\cat{C}$ has the necessary features (e.g.\ an environment structure \cite{coecke_environment,coecke2016terminality}) to even define one-way signalling.
\end{example}

\begin{xlrbox}{separable}
	\begin{tikzpicture}[baseline=(current bounding box.center),{every node/.style}={scale=0.8}]
    \node[pants,top,bot] (p) {};
		\node[copants,bot,widebelt,anchor=leftleg] (c) at (p.rightleg) {};
    \node[pants,bot,widebelt,anchor=belt] (p2) at (c.belt) {};
    \node[copants,bot,widebelt,anchor=rightleg] (c2) at (p.leftleg) {};
    \node[pants,bot,widebelt,anchor=belt] (p3) at (c2.belt) {};
    \node[copants,bot,anchor=rightleg] (c3) at (p2.leftleg) {};
    \node[cobowl] at (c.rightleg) {};
    \node[bowl] at (p2.rightleg) {};
    \node[cobowl] at (c2.leftleg) {};
    \node[bowl] at (p3.leftleg) {};
		\begin{pgfonlayer}{nodelayer}
      \node[label] (f) at (p3.belt) {$f$};
      \node[label] (g) at (p2.belt) {$g$};
      \node[representable] at (c.rightleg) {$b'$};
      \node[representable] at (p2.rightleg) {$b$};
      \node[representable] at (c2.leftleg) {$a'$};
      \node[representable] at (p3.leftleg) {$a$};
      \node[label] (h) at (c3.center) {$h$};
      \node[label] (k) at (p.center) {$k$};
		\end{pgfonlayer}
		\begin{pgfonlayer}{strings}
      \draw (f) to[out=125,in=-90] (c2.leftleg) {};
      \draw (f) to[out=-125,in=90] (p3.leftleg) {};
      \draw (f) to[out=-55,in=90] (p3.rightleg) to[out=-90,in=155] (h) {};
      \draw (f) to[out=55,in=-90] (c2.rightleg) to[out=90,in=-155] (k) {};
      \draw (g) to[out=125,in=-90] (c.leftleg) to[out=90,in=-25] (k) {};
      \draw (g) to[out=-55,in=90] (p2.rightleg) {};
      \draw (g) to[out=55,in=-90] (c.rightleg) {};
      \draw (g) to[out=-125,in=90] (p2.leftleg) to[out=-90,in=25] (h) {};
      \draw (h) to (c3.belt) {};
      \draw (k) to (p.belt) {};
		\end{pgfonlayer}
	\end{tikzpicture}
\end{xlrbox}

\begin{xlrbox}{separable_quotient1}
	\begin{tikzpicture}[baseline=(current bounding box.center),{every node/.style}={scale=0.8}]
    \node[pants,top,bot] (p) {};
		\node[copants,bot,widebelt,anchor=leftleg] (c) at (p.rightleg) {};
    \node[pants,bot,widebelt,anchor=belt] (p2) at (c.belt) {};
    \node[copants,bot,widerbelt,anchor=rightleg] (c2) at (p.leftleg) {};
    \node[pants,bot,widerbelt,anchor=belt] (p3) at (c2.belt) {};
    \node[copants,bot,anchor=rightleg] (c3) at (p2.leftleg) {};
    \node[cobowl] at (c.rightleg) {};
    \node[bowl] at (p2.rightleg) {};
    \node[cobowl] at (c2.leftleg) {};
    \node[bowl] at (p3.leftleg) {};
		\begin{pgfonlayer}{nodelayer}
      \node[label] (f) at ([xshift=-.75\tubewidth]p3.belt) {$f$};
      \node[label] (phi) at ([xshift=.75\tubewidth]p3.belt) {$\phi$};
      \node[label] (g) at (p2.belt) {$g$};
      \node[representable] at (c.rightleg) {$b'$};
      \node[representable] at (p2.rightleg) {$b$};
      \node[representable] at (c2.leftleg) {$a'$};
      \node[representable] at (p3.leftleg) {$a$};
      \node[label] (h) at (c3.center) {$h$};
      \node[label] (k) at (p.center) {$k$};
		\end{pgfonlayer}
		\begin{pgfonlayer}{strings}
      \draw (f) to[out=125,in=-90] (c2.leftleg) {};
      \draw (f) to[out=-125,in=90] (p3.leftleg) {};
      \draw (f) to[out=-65,in=90] ([xshift=-.3\tubewidth]p3.rightleg) to[out=-90,in=155] (h) {};
      \draw (f) to[out=65,in=-90] ([xshift=-.3\tubewidth]c2.rightleg) to[out=90,in=-155] (k) {};
      \draw (g) to[out=125,in=-90] (c.leftleg) to[out=90,in=-25] (k) {};
      \draw (g) to[out=-55,in=90] (p2.rightleg) {};
      \draw (g) to[out=55,in=-90] (c.rightleg) {};
      \draw (g) to[out=-125,in=90] (p2.leftleg) to[out=-90,in=25] (h) {};
      \draw (h) to (c3.belt) {};
      \draw (k) to (p.belt) {};
      \draw (phi) to[out=-90,in=90] ([xshift=.3\tubewidth]p3.rightleg) to[out=-90,in=135] (h) {};
      \draw (phi) to[out=90,in=-90] ([xshift=.3\tubewidth]c2.rightleg) to[out=90,in=-135] (k) {};
		\end{pgfonlayer}
	\end{tikzpicture}
\end{xlrbox}

\begin{xlrbox}{separable_quotient2}
	\begin{tikzpicture}[baseline=(current bounding box.center),{every node/.style}={scale=0.8}]
    \node[pants,top,bot] (p) {};
		\node[copants,bot,widerbelt,anchor=leftleg] (c) at (p.rightleg) {};
    \node[pants,bot,widerbelt,anchor=belt] (p2) at (c.belt) {};
    \node[copants,bot,widebelt,anchor=rightleg] (c2) at (p.leftleg) {};
    \node[pants,bot,widebelt,anchor=belt] (p3) at (c2.belt) {};
    \node[copants,bot,anchor=rightleg] (c3) at (p2.leftleg) {};
    \node[cobowl] at (c.rightleg) {};
    \node[bowl] at (p2.rightleg) {};
    \node[cobowl] at (c2.leftleg) {};
    \node[bowl] at (p3.leftleg) {};
		\begin{pgfonlayer}{nodelayer}
      \node[label] (f) at (p3.belt) {$f$};
      \node[label] (phi) at ([xshift=-.75\tubewidth]p2.belt) {$\phi$};
      \node[label] (g) at ([xshift=.75\tubewidth]p2.belt) {$g$};
      \node[representable] at (c.rightleg) {$b'$};
      \node[representable] at (p2.rightleg) {$b$};
      \node[representable] at (c2.leftleg) {$a'$};
      \node[representable] at (p3.leftleg) {$a$};
      \node[label] (h) at (c3.center) {$h$};
      \node[label] (k) at (p.center) {$k$};
		\end{pgfonlayer}
		\begin{pgfonlayer}{strings}
      \draw (f) to[out=125,in=-90] (c2.leftleg) {};
      \draw (f) to[out=-125,in=90] (p3.leftleg) {};
      \draw (f) to[out=-55,in=90] (p3.rightleg) to[out=-90,in=155] (h) {};
      \draw (f) to[out=55,in=-90] (c2.rightleg) to[out=90,in=-155] (k) {};
      \draw (g) to[out=115,in=-90] ([xshift=.3\tubewidth]c.leftleg) to[out=90,in=-25] (k) {};
      \draw (g) to[out=-55,in=90] (p2.rightleg) {};
      \draw (g) to[out=55,in=-90] (c.rightleg) {};
      \draw (g) to[out=-115,in=90] ([xshift=.3\tubewidth]p2.leftleg) to[out=-90,in=25] (h) {};
      \draw (h) to (c3.belt) {};
      \draw (k) to (p.belt) {};
      \draw (phi) to[out=-90,in=90] ([xshift=-.3\tubewidth]p2.leftleg) to[out=-90,in=45] (h) {};
      \draw (phi) to[out=90,in=-90] ([xshift=-.3\tubewidth]c.leftleg) to[out=90,in=-45] (k) {};
		\end{pgfonlayer}
	\end{tikzpicture}
\end{xlrbox}

\begin{equation}\label{eq:gen_spaces}
    \xusebox{semi_localisable} \hspace{1cm} \xusebox{separable}
\end{equation}

\begin{example}[Separable]
  Taking the horizontal tensor of the basic types $\suphom{a}{a'}\otimes_\cat{C}\suphomrev{b}{b'}$ produces a space of channels where the parties $\vb*{a}$ and $\vb*{b}$ are each allowed to signal with the environment but not each other.
  This space consists of the family of sets on the right of \eqref{eq:gen_spaces}, quotiented by the following relation allowing maps that are separable from the parties $\vb*{a}$ and $\vb*{b}$ to pass between the two halves.
  \begin{equation*}
    \xusebox{separable_quotient1} \sim \xusebox{separable_quotient2}
  \end{equation*}
  Since only the separable $\phi$ can pass between the halves, the above set does not generally contain all the morphisms in $\suphom{a\otimes b}{a'\otimes b'}$.
\end{example}

The normal duoidal structure of $\StProf(\cat{C})$ gives distributors $(P\seq Q) \otimes_\cat{C} (R\seq S) \morph{} (P\otimes_\cat{C} R)\seq (Q\otimes_\cat{C} S)$ between separable and semi-localisable spaces.
These also induce linear distributors \cite{cockett}
\begin{align*}
  (P\seq Q)\otimes_\cat{C}R \morph{} P\seq(Q\otimes_\cat{C} R),\\
  (P\seq Q)\otimes_\cat{C}R \morph{} (P\otimes_\cat{C} R) \seq Q,
\end{align*}
and in turn embeddings $P\otimes_\cat{C}R \morph{} P\seq R$ and $P\otimes_\cat{C} R\morph{} R\seq P$.
Much like in the $\mathsf{Caus}$-construction these distributors give us embeddings of separable channels into semi-localisable channels, and between more complex spaces built from these.

\subsection{Indefinite Causal Order from the Separable Tensor Product}

The most important examples of supermaps and those which really motivate their study in the quantum foundations literature, are the supermaps which exhibit indefinite causal order such as the switch \cite{chiribella_switch}, Lugano process \cite{baumeler_logically, baumeler_incompatibility}, and Grenoble process \cite{wechs}. 
These multi-partite supermaps are often drawn intuitively as boxes with several parallel holes, each of which can take an input with an extension to the environment.

The important thing to note is that one cannot act the supermap on arbitrary bipartite processes between the holes, else it would be possible to generate time-loops.

These multi-partite supermaps can be modelled using a multi-input generalisation of the diagrammatic laws \eqref{eq:nat_law} and \eqref{eq:strength_law}.

\begin{definition}[\cite{wilson_locality}]
	A multi-partite locally-applicable transformation $\eta:\vb*{a}_1,\dots,\vb*{a}_n\morph{}\vb*{b}$ consists of a family of functions of the form 
	\begin{equation*}
		\tikzfigscale{1}{figs/multi_supermap_family}
	\end{equation*}
	Such that the following law holds ensuring that the supermap commutes with pre- and post composition on each environment,
	\begin{equation}\label{eq:multi_nat_law}
		\tikzfigscale{1}{figs/multi_supermap_nat1} \ = \ \tikzfigscale{1}{figs/multi_supermap_nat2}
	\end{equation}
	and the following law holds \textbf{for each} of the $n$ holes, ensuring that further tensorial extension of the environment can be pulled into any one of the holes
	\begin{equation}\label{eq:multi_strength_law}
		\tikzfigscale{1}{figs/multi_supermap_strength1} = \tikzfigscale{1}{figs/multi_supermap_strength2},
	\end{equation}
	where this equation holds up to the obvious symmetry required.
\end{definition}

One issue with the previous definition is that it is not entirely clear what the domain of these supermaps should be.
In \cite{wilson_polycategories} the domain $\vb*{a}_1,\dots,\vb*{a}_n$ is treated as a list and the multi-partite locally applicable transformations are shown to form a multicategory $\mathbf{lot}(\cat{C})$ with composition given by nesting of supermaps.

By making the identification between the laws \eqref{eq:nat_law} and \eqref{eq:strength_law} and strong natural transformations, we are able to use the structure of the category $\StProf(\cat{C})$ to give an alternative definition of multi-partite locally-applicable transformations.

\begin{theorem}\label{thm:separable_supermaps}
	On any symmetric monoidal category $\cat{C}$, the strong natural transformations of type 
	\begin{equation*}
		\eta: \modtensor^{\hspace{-0.7em}n}_{\hspace{-0.7em}i=1} \suphom{a_i}{a_i'} \rightarrow \suphom{b}{b'}
	\end{equation*}
	are the multi-partite locally-applicable transformations of type $\eta:\vb*{a}_1,\dots,\vb*{a}_n\morph{}\vb*{b}$.
\end{theorem}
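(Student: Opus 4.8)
The plan is to reduce the statement to the single-party identification already carried out following Definition~\ref{defn:basic_supermap}, by exploiting the universal property that characterises the separable tensor $\otimes_\cat{C}$. Recall that $\otimes_\cat{C}$ is obtained from Day convolution over $\opcat{\cat{C}}\times\cat{C}$ by coequalising the right strength of the left factor against the left strength of the right factor, and that it is closed. Because of this, for strong profunctors $P_1,\dots,P_n$ and $R$, a strong natural transformation $\modtensor^{\hspace{-0.7em}n}_{\hspace{-0.7em}i=1} P_i \Rightarrow R$ is in natural bijection with a family of functions
\begin{equation*}
  \theta_{\vec c,\vec d}:\prod_{i=1}^n P_i(c_i,d_i) \morph{} R(c_1\otimes\cdots\otimes c_n,\, d_1\otimes\cdots\otimes d_n)
\end{equation*}
natural in each pair $(c_i,d_i)$ and strong in each argument separately (the $i$-th strength of $\theta$ being computed through the strength of $R$). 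This is the $n$-ary form of the hom-tensor adjunction for $\otimes_\cat{C}$ \cite{garner,earnshaw}: a strong natural transformation out of a Day convolution is an extranatural family out of its integrand, and co-Yoneda reduction against the wrapper homs of the convolution collapses the evaluation variables to the single output $R(\bigotimes_i c_i,\bigotimes_i d_i)$, while the strength coequaliser is precisely what forces $\theta$ to be jointly well defined rather than merely separately strong. The general $n$ is reached either by iterating through closedness and associativity, or by one multi-variable co-Yoneda argument.

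With this universal property in hand, the second step is a direct substitution. Taking $P_i=\suphom{a_i}{a_i'}$ and $R=\suphom{b}{b'}$, and writing $x_i:=c_i$, $x_i':=d_i$, the representing data $\theta$ becomes exactly a family of functions
\begin{equation*}
  \prod_{i=1}^n \cat{C}(a_i\otimes x_i, a_i'\otimes x_i') \morph{} \cat{C}\big(b\otimes\textstyle\bigotimes_{i} x_i,\; b'\otimes\textstyle\bigotimes_{i} x_i'\big),
\end{equation*}
which is precisely the type of family appearing in the definition of a multi-partite locally-applicable transformation $\eta:\vb*{a}_1,\dots,\vb*{a}_n\morph{}\vb*{b}$. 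It then remains to match the two sets of axioms. Naturality of $\theta$ in each $(x_i,x_i')$ unpacks, just as in the single-party computation, to the multi-party naturality law \eqref{eq:multi_nat_law}; and strength of $\theta$ in its $i$-th argument unpacks to the strength law \eqref{eq:multi_strength_law} for the $i$-th hole. The fact that there is one such strength law \emph{per hole} reflects precisely that the separable tensor $\otimes_\cat{C}$ carries an independent strength on each factor, and the separable sliding relation displayed in the Separable example is exactly the strength coequaliser ensuring the family descends to a morphism out of the tensor.

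The main obstacle I anticipate is the first step: verifying that the universal property of $\otimes_\cat{C}$ takes the stated $n$-ary ``one function per factor'' form, with exactly the naturality and per-factor strength conditions and no more. The Day-convolution bookkeeping---tracking the wrapper homs $\cat{C}(-,\bigotimes_i a_i)$ and $\cat{C}(\bigotimes_i a_i',\bl)$ together with the several coend variables---must be shown to reduce cleanly by co-Yoneda, and the strength coequaliser must be shown to correspond to joint (rather than independent) definedness of $\theta$, matching the separable quotient. Carrying the symmetry isomorphisms that make \eqref{eq:multi_strength_law} hold only ``up to the obvious symmetry'' through this reduction is where the real care is required; once the universal property is pinned down in the correct form, the identification of the two axiom systems is a routine multi-variable extension of the argument given for Definition~\ref{defn:basic_supermap}.
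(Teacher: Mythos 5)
Your proposal is correct and takes essentially the same route as the paper: the paper's proof consists precisely of your ``step 1'' carried out explicitly by coend calculus (expanding $\otimes_\cat{C}$ as Day convolution over $\opt(\cat{C})$ and reducing by currying and co-Yoneda to copresheaf morphisms $P_1(-)\times P_2(\bl)\Rightarrow R(-\otimes\bl)$ over $\opt(\cat{C})\times\opt(\cat{C})$), followed by your ``step 2'' of specialising $P_i$ and $R$ to the hom-profunctors $\suphom{a_i}{a_i'}$ and $\suphom{b}{b'}$ and unpacking. The only difference is presentational: the paper states the resulting condition as joint naturality over $\opt(\cat{C})\times\opt(\cat{C})$ (a single comb-commutation diagram packaging both \eqref{eq:multi_nat_law} and \eqref{eq:multi_strength_law}), which renders automatic the coequaliser/symmetry bookkeeping you flag as the main obstacle, since working over $\opt(\cat{C})$ builds the strength actions into the functoriality from the start.
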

\begin{proof}
	Given in Appendix \ref{sec:separable_supermaps_proof}.
\end{proof}

Theorem \ref{thm:separable_supermaps} demonstrates that it is possible to reduce the domain $\vb*{a}_1,\dots,\vb*{a}_n$ of a multi-partite locally-applicable transformation to a single object of $\StProf(\cat{C})$ using the tensor product $\otimes_\cat{C}$.
One may wonder whether this makes the multicategory $\mathbf{lot}(\cat{C})$ representable \cite{hermida_repmulticats}.
The answer is no, and for one simple reason, the objects of the form $\otimes_\cat{C} \suphom{a_i}{a_i'}$ were not included in the original definition of $\mathbf{lot}(\cat{C})$ because the connection with strong profunctors was not known!
By adding these in and considering the full subcategory of $\StProf(\cat{C})$ spanned by the objects of the aforementioned form, one yields a symmetric monoidal category $\mathsf{lot}(\cat{C})$ which is a good replacement for the multicategory $\mathbf{lot}(\cat{C})$.
Indeed, there is a full, faithful, identity on objects embedding of $\mathbf{lot}(\cat{C})$ into the representable multicategory corresponding to $\mathsf{lot}(\cat{C})$.

The identification of multi-partite locally-applicable transformations and this class of strong natural transformations has two key consequences.
First, it can be used to show that multi-input quantum superchannels are exactly the morphisms of strong profunctors on the separable tensor product, opening the door to fully categorical analysis of indefinite causal (or more generally compositional) order.

\begin{theorem}
The quantum supermaps on the non-signalling channels are the morphisms of strong profunctors of type \[S: \bigotimes^{i}_{\cptp} \cptp(a_i \otimes -, a_i' \otimes \bl) \rightarrow \cptp(b \otimes -, b' \otimes \bl).\]
\end{theorem}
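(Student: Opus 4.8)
The plan is to factor the statement through Theorem~\ref{thm:separable_supermaps} and the known operational characterisation of quantum supermaps. First I would specialise Theorem~\ref{thm:separable_supermaps} to $\cat{C}=\cptp$: it identifies the strong natural transformations $S:\bigotimes^{i}_{\cptp}\cptp(a_i\otimes-,a_i'\otimes\bl)\rightarrow\cptp(b\otimes-,b'\otimes\bl)$ with the multi-partite locally-applicable transformations $\eta:\vb*{a}_1,\dots,\vb*{a}_n\morph{}\vb*{b}$ over $\cptp$. This reduces the theorem to a single comparison, carried out entirely inside quantum theory: that the multi-partite locally-applicable transformations on $\cptp$ are precisely the quantum supermaps acting on the non-signalling channels in the sense of the quantum foundations literature.

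For that comparison I would proceed in two halves. In one direction, every quantum supermap on non-signalling channels commutes, by its very conception as a higher-order process acting locally at each party, with pre- and post-composition and with tensorial extension of each environment; this is exactly the content of laws \eqref{eq:multi_nat_law} and \eqref{eq:multi_strength_law}, so it yields a multi-partite locally-applicable transformation, as was already established in \cite{wilson_locality,wilson_polycategories}. For the converse I would need to show that a multi-partite locally-applicable transformation on $\cptp$ automatically enjoys the defining properties of a quantum supermap---most importantly that it is the restriction of a linear, completely-positive map and that it sends channels to channels. The crucial point, and a genuine advantage over the $\caus$-construction, is that linearity is not assumed but must be \emph{derived}; I would obtain it from the naturality and strength laws together with the specific structure of $\cptp$ (convexity of completely positive maps and the Choi--Jamiolkowski correspondence), exactly as in the recovery theorem of \cite{wilson_locality}.

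The remaining, and genuinely quantum, ingredient is the identification of the domain. I must check that the underlying family of channels of the abstractly defined separable tensor $\bigotimes^{i}_{\cptp}\cptp(a_i\otimes-,a_i'\otimes\bl)$---a quotient of a Day convolution over $\opt(\cptp)$---coincides with the non-signalling channels between the parties $\vb*{a}_1,\dots,\vb*{a}_n$. Here I would use Theorem~\ref{thm:cptp_equiv2} to replace the optic equivalence relation by the operational comb equivalence, and then invoke the structural characterisation of non-signalling quantum channels (that a multipartite channel is non-signalling precisely when it factors through a shared ancilla in the separable pattern pictured in \eqref{eq:gen_spaces}) to match the two descriptions. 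I expect this last step to be the main obstacle: the separable tensor is specified by a universal coend quotient, whereas non-signalling is a concrete analytic condition on Choi operators, so bridging them requires the quantum no-signalling decomposition rather than any purely formal categorical manipulation, together with care that the quotient internal to the Day convolution accounts exactly for the physical equivalence of representations.
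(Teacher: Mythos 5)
Your first two paragraphs are, in essence, exactly the paper's proof, which consists of just two citations: specialise Theorem~\ref{thm:separable_supermaps} to $\cat{C}=\cptp$ to identify the morphisms of strong profunctors out of the separable tensor with multi-partite locally-applicable transformations, and then invoke the characterisation theorem of \cite{wilson_locality} to identify those with multi-input quantum supermaps. Your elaboration of the two directions of that second identification (including that linearity is derived, not assumed) is a faithful description of what the cited result provides.

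The problem is your third paragraph, which is both unnecessary and incorrect as stated. It is unnecessary because the phrase ``non-signalling channels'' enters the theorem only through the cited result of \cite{wilson_locality}: a locally-applicable transformation is a family of functions defined on tuples of local channels together with their tensorial environment extensions, and the cited theorem says such a family corresponds to a unique quantum supermap on the non-signalling channels. The passage from the separable-pattern data to all non-signalling channels happens \emph{inside} that theorem (roughly, because a linear supermap is determined by its action on product channels with environments), so no identification of the underlying object of $\bigotimes_{\cptp}$ with the non-signalling channels is ever made, and the paper makes none. It is incorrect because the characterisation you propose to invoke is false in quantum theory: a multipartite channel is \emph{not} non-signalling precisely when it factors through a shared ancilla in the separable pattern of \eqref{eq:gen_spaces}. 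That pattern describes the \emph{localisable} channels, a strictly smaller class than the non-signalling ones---there exist causal-but-not-localisable quantum channels, which is exactly the distinction studied in \cite{beckman_localisable}, and the paper itself warns that the underlying sets of $\otimes_\cat{C}$ do not generally contain all the morphisms in $\suphom{a\otimes b}{a'\otimes b'}$. So had this object-level identification actually been needed, your argument would fail at that point; as it stands, the repair is simply to delete the third step and let the theorem of \cite{wilson_locality} carry that weight, which is precisely what the paper does.
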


\begin{proof}
	From Theorem \ref{thm:separable_supermaps} the morphisms of the above type can be identified with a unique multi-input locally-applicable transformations.
	Following \cite{wilson_locality} each multi-input locally-applicable transformation can be identified with a unique multi-input quantum supermap. 
\end{proof}

Second, this theorem can be used in practice for the purpose of constructing examples on the quite abstract separable tensor product space.
Let us show that for any symmetric monoidal category enriched in convex spaces \cite{stone1949postulates,neumann1970quasivariety,fritz2015convex}, there is a morphism of strong profunctors of this type which is a probabilistic switch. 
\begin{example}
    Let $\cat{C}$ be symmetric monoidally enriched in convex spaces, then there exists a morphism of strong profunctors $\textrm{Switch}: \suphom{a}{a} \otimes_{\cat{C}} \suphom{a}{a} \rightarrow \suphom{a}{a}$. This morphism acts as follows on class representatives. \[\textrm{Switch}(\phi_1,\phi_2) = 
	c_p\left( \tikzfigscale{1}{figs/c_switch_1} , \tikzfigscale{1}{figs/c_switch_2} \right).
\] 
Note that when hom-objects from enrichment here are cancellative, convex combinations may be written more familiarly using convex sums \cite{stone1949postulates} as $c_{p}(f,g) = (1-p)f + pg$. 
\end{example}

\subsection{Definite Causal Order from the Semi-Localisable Tensor Product}

In quantum information theory there are two equivalent formulations of the study of definite causal order \cite{chiribella_networks}. The first, the constructive approach, consists in defining networks of circuit diagrams into which holes have been punctured.
This is the viewpoint on definite causal structures which has been well modelled in the categorical literature using equivalences with respect to operational behaviour \cite{hefford_coend, coecke_resources} and coequalisation \cite{roman_coend, hefford_coend}. 

The second viewpoint on definite causal structure however, has yet to be studied categorically because it relies on the notion of a black-box environment.
In this setting, referred to as the \textit{axiomatic} approach (or the approach of \textit{admissability}) \cite{chiribella_networks} definite causal structures with $n+1$ holes are defined to be the boxes with holes into which $n$-hole definite causal structure can be inserted.
There is, in fact, an even weaker notion which can be written down, the notion of an $n+1$-hole definite causal structure as one into which a concrete circuit diagram with $n$-holes can be inserted.

In quantum theory all of these notions are equivalent, however, there are examples of symmetric monoidal categories in which not even all one-input supermaps decompose as combs \cite{kissinger_caus}.
In this section we find that the abstract notion of supermaps (without a concrete decomposition into nodes) on definitely causally ordered channels, can be captured using morphisms out of the semi-localisable tensor product $\seq$. 
In the case of quantum theory, the category $\cptp$, these morphisms exactly recover the quantum definite causal orders (whether they be equivalently defined as supermaps on no-signalling channels or as concrete combs). 

First, let us see what morphisms out of $\seq$ concretely encode.
Perhaps unsurprisingly, they admit a diagrammatic representation in which they look very much like black-box operations which can be applied to parts of circuit diagrams. 
\begin{lemma}\label{lem:seq}
	On any symmetric monoidal category $\cat{C}$ the morphisms of type \[ \eta : \suphom{b}{b'} \seq \suphom{a}{a'} \rightarrow\suphom{c}{c'} \] in $\StProf(\cat{C})$ are the families of functions of type \[\eta_{x,x',z}: \cat{C}(a \otimes x, a' \otimes z)\times \cat{C}(b \otimes z , b' \otimes x') \rightarrow \cat{C}(c \otimes x, c' \otimes x')\] satisfying the following two equations.
	\begin{equation}\label{eq:vert_naturality}
	\tikzfigscale{1}{figs/multi_vtensor_diagram_1} \ = \ \tikzfigscale{1}{figs/multi_vtensor_diagram_2}
	\end{equation}
	\begin{equation}\label{eq:vert_dinaturality}
	\tikzfigscale{1}{figs/multi_vtensor_diagram_3} \ = \ \tikzfigscale{1}{figs/multi_vtensor_diagram_4}
	\end{equation}
\end{lemma}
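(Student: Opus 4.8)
The plan is to characterise $\StProf(\cat{C})(\suphom{b}{b'}\seq\suphom{a}{a'},\suphom{c}{c'})$ by first rewriting the domain as a coend and then exploiting its universal property. Since $\seq$ is the composition tensor on strong profunctors, its value on these basic types is the coend
\[
	(\suphom{b}{b'}\seq\suphom{a}{a'})(x,x') = \int^{z} \cat{C}(a\otimes x, a'\otimes z)\times \cat{C}(b\otimes z, b'\otimes x'),
\]
carrying the strength induced on the coend by its universal property. A strong natural transformation $\eta$ out of this profunctor is then determined by whiskering each component $\eta_{(x,x')}$ with the universal cowedge injection $\cat{C}(a\otimes x,a'\otimes z)\times\cat{C}(b\otimes z,b'\otimes x')\to(\suphom{b}{b'}\seq\suphom{a}{a'})(x,x')$, producing exactly a family $\eta_{x,x',z}$ of the stated type. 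I would set up the correspondence $\eta\mapsto(\eta_{x,x',z})$ in this way and aim to show it restricts to a bijection onto families satisfying \eqref{eq:vert_naturality} and \eqref{eq:vert_dinaturality}.

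First I would extract the two equations from the hypothesis that $\eta$ is a strong natural transformation. The defining cowedge condition of the coend --- that the injections coequalise the action of each $\cat{C}$-morphism $m:z\to z'$ on the internal wire, post-composed on the $\suphom{a}{a'}$ leg and pre-composed on the $\suphom{b}{b'}$ leg --- forces the whiskered family to be dinatural in $z$, which is precisely the sliding condition \eqref{eq:vert_dinaturality}. Naturality of $\eta$ in the external variables $(x,x')$ together with its compatibility with the induced strength then package into \eqref{eq:vert_naturality}; here it is cleanest to pass through Theorem \ref{thm:optics_presheaves} and regard $\eta$ as a morphism of copresheaves on $\opt(\cat{C})$, so that ordinary naturality and strength are unified as naturality with respect to the single comb-action on the environment, which is exactly what \eqref{eq:vert_naturality} depicts.

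For the converse I would start from an arbitrary family $\eta_{x,x',z}$ satisfying the two equations and reconstruct a strong natural transformation. Dinaturality \eqref{eq:vert_dinaturality} says the family is a cowedge, so by the universal property of the coend it factors uniquely through a well-defined function $\eta_{(x,x')}$ on $(\suphom{b}{b'}\seq\suphom{a}{a'})(x,x')$ for each $(x,x')$. It then remains to verify that the assembled $\eta_{(x,x')}$ is natural and strong; this follows from \eqref{eq:vert_naturality} by checking compatibility with the external actions on class representatives and invoking the universal property once more to see that these checks descend to the coend. Finally I would confirm that the assignments $\eta\mapsto(\eta_{x,x',z})$ and $(\eta_{x,x',z})\mapsto\eta$ are mutually inverse, which is immediate from the uniqueness clause of the coend's universal property.

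The main obstacle I anticipate is the converse direction, specifically showing that \eqref{eq:vert_naturality} and \eqref{eq:vert_dinaturality} are jointly \emph{sufficient}: one must verify that the map induced on the coend is compatible with the strength that the coend \emph{inherits} (rather than with strengths on the individual legs), and that no coherence beyond the two stated equations is needed. Managing the bookkeeping of how the induced strength interacts with the cowedge injections --- ensuring that every instance of strong naturality on the composite reduces to \eqref{eq:vert_naturality} --- is where the real care is required; the dinaturality bookkeeping in $z$, by contrast, is routine once the coend description of $\seq$ is in place.
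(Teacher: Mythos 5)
Your proposal is correct and takes essentially the same route as the paper's proof: the paper likewise unfolds $\suphom{b}{b'}\seq\suphom{a}{a'}$ as the coend $\int^z \suphom{a}{z}\times\cat{C}(b\otimes z,b'\otimes\bl)$, pre-composes the components of $\eta$ with the coend coprojections, and identifies the resulting families as exactly those that are natural in $(x,x')$ over $\opt(\cat{C})$ (which, via Theorem \ref{thm:optics_presheaves}, fuses ordinary naturality with strength and yields \eqref{eq:vert_naturality}) and dinatural in $z$ (yielding \eqref{eq:vert_dinaturality}). The extra care you devote to the converse direction---factoring a dinatural family through the coend and checking that the induced map is compatible with the induced strength---is precisely the content the paper compresses into the word ``equivalently'', and it goes through just as you anticipate because the strength on the composite is defined by the universal property so as to commute with the coprojections.
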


\begin{proof}
	Given in Appendix \ref{sec:seq_proof}.
\end{proof}

Lemma \ref{lem:seq} makes it clear that supermaps out of the tensor product $\seq$ possess a notion of definite causal structure.
If it were the case that $\eta$ could behave like a trace, taking the output of $\phi_2$ back into the input of $\phi_1$, then it would be possible to form a time-loop by choosing both $\phi_1$ and $\phi_2$ to be the swap.
Since the underlying category is not assumed to be compact closed or traced, there would be no good way to interpret such an outcome, meaning at least intuitively that supermaps of this type must have a definite order: information cannot flow back from $\phi_2$ to $\phi_1$, and there is a definite ordering to the holes in the supermap.

Lemma \ref{lem:seq} also simplifies the construction of explicit examples of supermaps on the tensor product $\seq$, by unpacking their behaviour into a pair of laws that generalise the original laws given for locally applicable transformations \cite{wilson_locality}.
We see here the utility of the categorical framework: these laws, while perhaps obvious in retrospect, were not known and appear for free out of the framework.
The good news is that they are also the right laws in the sense that they capture the objects of interest in the study of protocols with definite causal structure in quantum information theory. 

\begin{theorem}\label{thm:seq_supermaps}
	The quantum supermaps on the $n$-combs are the morphisms in $\StProf(\cptp)$ of the following type,
	\begin{equation}\label{eq:supermap_seq_type}
		 \bigseq_{i=1}^n  \cptp(a_i \otimes - ,a_i' \otimes \bl) \rightarrow \cptp(c \otimes - ,c' \otimes \bl).
	\end{equation}
\end{theorem}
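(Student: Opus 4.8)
The plan is to reduce this statement to the two already-available ingredients: the concrete characterisation of morphisms out of the sequential tensor supplied by Lemma~\ref{lem:seq}, and the identification of locally-applicable transformations with quantum supermaps from \cite{wilson_locality}, glued together by the comb-optic coincidence of Theorem~\ref{thm:cptp_equiv2}. Since the domain in \eqref{eq:supermap_seq_type} is an $n$-fold sequential tensor, the first task is to promote Lemma~\ref{lem:seq} from the case $n=2$ to arbitrary $n$. Writing $z_0$ and $z_n$ for the two external wires, I would argue by induction on $n$ (peeling off one $\seq$-factor at a time and invoking the universal property of the coend defining $\seq$) that a morphism of type \eqref{eq:supermap_seq_type} is exactly a family of functions
\[
\eta_{z_0,\dots,z_n}\colon \prod_{i=1}^n \cptp(a_i\otimes z_{i-1}, a_i'\otimes z_i) \longrightarrow \cptp(c\otimes z_0, c'\otimes z_n)
\]
subject to an $n$-hole naturality law (pre- and post-composition of environment maps on the boundary wires $z_0,z_n$, generalising \eqref{eq:vert_naturality}) together with one dinaturality law per internal connecting wire $z_1,\dots,z_{n-1}$ (generalising \eqref{eq:vert_dinaturality}).

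The second step is to read these laws operationally. The $n-1$ dinaturality conditions are precisely the coend/sliding relations: they say that $\eta$ is constant on the equivalence classes generated by sliding maps across the internal wires, so $\eta$ is really a well-defined function on the optic $n$-combs, i.e.\ on the hom of the $n$-fold $\seq$-product in $\opt(\cptp)$. Here Theorem~\ref{thm:cptp_equiv2} does the essential work: because $\comb(\cptp)\cong\opt(\cptp)$ as produoidal categories, these optic $n$-combs coincide with the operationally-defined $n$-combs of \cite{chiribella_networks}, so the domain of $\eta$ is genuinely the space of quantum $n$-combs rather than merely the finer optic quotient. Without this isomorphism one would only obtain supermaps on optic-combs, which need not agree with the quantum $n$-combs in general.

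It then remains to match the residual naturality law with the defining locality/admissibility condition of a quantum supermap on $n$-combs. The boundary naturality law is exactly the $n$-hole instance of the locally-applicable-transformation laws \eqref{eq:nat_law} and \eqref{eq:strength_law} specialised to a definite causal order, so by the identification of locally-applicable transformations with quantum supermaps established in \cite{wilson_locality} the families $\eta$ are in bijection with the quantum supermaps on $n$-combs; complete positivity and trace preservation are automatic, being built into the $\cptp$-valued hom-profunctors forming the (co)domain.

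I expect the main obstacle to be the $n$-hole generalisation of Lemma~\ref{lem:seq}: the inductive unpacking of the iterated coend must be done carefully so that the bookkeeping of the $n-1$ internal wires produces exactly one dinaturality constraint each and no spurious extra coherence conditions, and so that the strengths compose correctly across the staircase. The conceptual subtlety, rather than the calculation, is the reliance on Theorem~\ref{thm:cptp_equiv2}: it is what licenses the passage from the abstract morphisms of $\StProf(\cptp)$, which a priori only see sliding-classes, to the concrete quantum $n$-combs that the physics literature takes as the domain.
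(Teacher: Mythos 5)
Your first two steps track the paper's own proof: the $n$-ary unpacking of Lemma~\ref{lem:seq} (which the paper likewise dismisses as bookkeeping) identifies morphisms of type \eqref{eq:supermap_seq_type} with natural families of functions on the optic $n$-combs, and Theorems~\ref{thm:cptp_equiv} and \ref{thm:cptp_equiv2} upgrade these to the operational combs of \cite{chiribella_networks}. The gap is in your final step. At that point what you possess is a family of functions defined on the profunctor of \emph{environment-extended} combs, $(x,x')\mapsto \comb(\cptp)\big((a_1',a_2),(a_1\otimes x,\, a_2'\otimes x')\big)$, natural in $(x,x')$. But the characterisation theorem of \cite{wilson_locality} that you invoke does not take this profunctor as its domain: it identifies quantum supermaps on a set $M$ of channels with strong-profunctor morphisms out of the \emph{dilation-extension} profunctor $\textrm{dext}_{-,\bl}(M)$, whose $(x,x')$-component consists of all channels $\phi$ such that every closure $(1 \otimes Tr_{x'}) \circ \phi \circ (1 \otimes \rho)$ lands in $M$. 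These two profunctors are not interchangeable by fiat. Showing that they coincide for $M = C_2$ (the set of $2$-combs represented as bipartite channels via the embedding of $\cptp$ into the compact closed category $\mathsf{CP}$) is precisely the central step of the paper's proof: it requires the circuit decomposition theorem of \cite{chiribella_supermaps} plus a causality/discarding string-diagram argument to show that any channel all of whose environment-closures are combs is itself an extended comb. The easy containment (extended combs are dilation extensions) is immediate; the converse carries all the content, and it is exactly the kind of statement that fails in a general symmetric monoidal category --- which is why the paper treats decomposition properties of $\cptp$ as special rather than formal.

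Concretely, then, your argument as written only establishes that morphisms of type \eqref{eq:supermap_seq_type} are natural families of functions on environment-extended quantum combs; it does not yet connect them to quantum supermaps on $n$-combs in the admissibility sense of \cite{chiribella_networks}. To close the argument you must insert the isomorphism of strong profunctors
\begin{equation*}
	C_2\big((a_1',a_2),(a_1 \otimes -,\, a_2' \otimes \bl)\big) \;\cong\; \textrm{dext}_{-,\bl}\Big(C_2\big((a_1',a_2),(a_1 , a_2')\big)\Big),
\end{equation*}
after which the main theorem of \cite{wilson_locality} applies verbatim and yields the desired bijection with quantum supermaps on $2$-combs (and analogously for general $n$).
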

\begin{proof}
	Given in Appendix \ref{sec:seq_supermaps_proof}.
\end{proof}

This establishes that just at the study of $\otimes_{\cat{C}}$ categorifies the study of indefinite causal structure, the study of $\seq$ categorifies the study of definite causal structure. 

\section{Decomposition and Duality}\label{sec:duals}

Analogously to the core realisation theorem of standard quantum information theory, which says that every quantum channel can be decomposed as a partial trace over an isometry, maybe the fundamental theorem of quantum supermaps is the discovery that the primitive 1-input supermaps can be concretely realised in terms of a simple network with a hole \cite{chiribella_supermaps}.

\begin{equation}
	\tikzfigscale{1}{figs/decomp_single_1} \ = \ \tikzfigscale{1}{figs/decomp_single_2} \hspace{1cm} \tikzfigscale{1}{figs/vertical_physics_1} \ = \ \tikzfigscale{1}{figs/vertical_physics_3}
\end{equation}

In fact, this extends to a theorem on all black-box supermaps with a definite causal order.
Any quantum supermap which can be applied to a network, can itself be decomposed as a concrete network of channels and holes.

In the specific setting of quantum information theory, objects on the left hand side are defined as linear maps satisfying nice \textit{preservation} properties (analogous to complete positivity) and objects on the right are interpreted as again linear maps, this time satisfying a nice \textit{decompositonal} property (analogous to the dilation theorem of Stinespring).

\subsection{Supermap Decomposition Theorems as Representability}

In the category $\StProf(\cat{C})$ we now have sufficient tools at our disposal to give a categorical interpretation of decomposition theorems for supermaps.
First, we have a notion of black-box supermap with definite causal order as a morphism with domain $ \seq_i \suphom{a_i}{a_i'}$.
Second, we are able to reference the most fine-grained notion of a network decomposition by the Yoneda embedding of optics in $\StProf(\cat{C})$. 
The Yoneda lemma ensures that $\StProf(\cat{C})(y_{b,b'}, y_{a,a'})$ is naturally isomorphic to $\opt(\cat{C})((a,a') , (b,b'))$ and so the space of one-holed optics is easily accessible in $\StProf(\cat{C})$.
Similarly, the Yoneda lemma implies that the $n$-holed optics are naturally isomorphic to $\StProf(\cat{C})(y_{b,b'}, \seq_i y_{a_i,a_i'})$.

\begin{definition}[Supermap Decomposition Theorem]
A symmetric monoidal category $\cat{C}$ has a $1$-arity supermap decomposition theorem if there is a natural isomorphism
	\begin{align*}
		& \StProf(\cat{C})\big(\suphom{a}{a'},\suphom{b}{b'}\big) \\
		& \cong  \StProf(\cat{C})\big(y_{b,b'},y_{a,a'} \big).
	\end{align*}
More generally $\cat{C}$ has an $n$-arity supermap decomposition theorem if there is a natural isomorphism
	\begin{align*}
		& \StProf(\cat{C})\big( \seq_i \suphom{a_i}{a_i'},\suphom{b}{b'}\big) \\
		& \cong \StProf(\cat{C})(y_{b,b'}, \seq_i y_{a_i,a_i'}).
	\end{align*}
\end{definition}

So, we can phrase the notion of a decomposition theorem for black box supermaps as representability: it is the requirement that one could equivalently have considered those supermaps as acting on representable presheaves, and so by the Yoneda lemma would be equivalent to the concrete morphisms of the underlying category $\opt(\cat{C})$.

Of course, it is important to check that this categorified decomposition property holds for the most important working example, and indeed it does. 
\begin{theorem}\label{thm:cptp_decomp}
The category $\cptp$ has an $n$-arity supermap decomposition theorem for every $n$. 
\end{theorem}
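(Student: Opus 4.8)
The plan is to prove Theorem~\ref{thm:cptp_decomp} by reducing the $n$-arity supermap decomposition isomorphism for $\cptp$ to the two structural facts already available in the excerpt: the identification of morphisms out of $\seq$ with quantum supermaps on $n$-combs (Theorem~\ref{thm:seq_supermaps}), and the produoidal isomorphism $\comb(\cptp)\cong\opt(\cptp)$ (Theorem~\ref{thm:cptp_equiv2}). The key observation is that the right-hand side of the desired isomorphism, $\StProf(\cptp)(y_{b,b'},\seq_i y_{a_i,a_i'})$, is by the Yoneda lemma (as remarked just before the theorem) naturally isomorphic to the set of $n$-holed \emph{optics}, while the left-hand side $\StProf(\cptp)(\seq_i \suphom{a_i}{a_i'},\suphom{b}{b'})$ is by Theorem~\ref{thm:seq_supermaps} the set of quantum supermaps acting on $n$-combs. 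So the whole statement amounts to establishing that, for $\cptp$, every black-box supermap on $n$-combs \emph{is} an $n$-holed optic, naturally.

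First I would set up both sides explicitly. On the right, the Yoneda lemma gives $\StProf(\cptp)(y_{b,b'},\seq_i y_{a_i,a_i'})\cong(\seq_i y_{a_i,a_i'})(b,b')$, and unwinding the promonoidal $\seq$-structure on $\opt(\cptp)$ this is precisely $\opt(\cptp)$'s set of $n$-holed optics with holes $\vb*{a}_i$ and boundary $\vb*{b}$. On the left, Theorem~\ref{thm:seq_supermaps} identifies $\StProf(\cptp)(\seq_i \suphom{a_i}{a_i'},\suphom{b}{b'})$ with the quantum supermaps on $n$-combs in the sense of the quantum foundations literature. The decomposition isomorphism is then the assertion that these two sets coincide, and that the identification is natural in the boundary data.

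Next I would invoke the \textbf{concrete quantum decomposition theorem} for definitely causally ordered supermaps from the physics literature (Chiribella--D'Ariano--Perinotti networks): every admissible quantum supermap on $n$-combs is realised by a concrete quantum circuit with $n$ holes, i.e.\ by an element of $\comb(\cptp)$, and conversely every such network defines an admissible supermap. This is exactly the statement that the comparison map from concrete networks to black-box supermaps on $n$-combs is a bijection for $\cptp$. Composing this bijection with the produoidal isomorphism $\comb(\cptp)\cong\opt(\cptp)$ of Theorem~\ref{thm:cptp_equiv2} converts ``concrete network'' into ``optic,'' matching the right-hand side. Naturality in $\vb*{b}$ and in the $\vb*{a}_i$ would be checked by observing that all three identifications (Yoneda, Theorem~\ref{thm:seq_supermaps}, and Theorem~\ref{thm:cptp_equiv2}) are natural, so their composite is too.

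The \textbf{main obstacle} is the middle step: faithfully transporting the physicists' concrete realisation theorem for $n$-comb supermaps into this categorical language and verifying that the bijection it provides is \emph{exactly} the comparison map induced by Yoneda and the produoidal structure, rather than merely an abstract bijection of sets. In particular one must check that the optic equivalence relation (sliding, via $\opt$) agrees with the operational comb equivalence used in the quantum decomposition theorem---which is precisely why Theorem~\ref{thm:cptp_equiv2} (and not just Theorem~\ref{thm:cptp_equiv}) is needed, since the produoidal $\seq$-structure encoding the $n$-hole nesting must be preserved. I expect the routine parts (the two Yoneda computations and the naturality bookkeeping) to be straightforward, while correctly aligning the admissibility/preservation characterisation on the supermap side with the decompositional/Stinespring-dilation characterisation on the network side is where the real content lies.
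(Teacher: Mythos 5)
Your proposal is correct and follows essentially the same route as the paper: identify the left-hand side with quantum supermaps on $n$-combs (the paper does this via the locally-applicable-transformation characterisation of \cite{wilson_locality}, equivalently Theorem~\ref{thm:seq_supermaps}), invoke the physicists' realisation theorem of \cite{chiribella_supermaps,chiribella_networks} to turn these into concrete combs, pass to optics via Theorem~\ref{thm:cptp_equiv}/\ref{thm:cptp_equiv2}, and finish with Yoneda. The paper only spells out the $1$-arity chain and declares the $n$-arity case ``totally analogous,'' so your explicit attention to the produoidal structure and to naturality fills in exactly the bookkeeping the paper elides.
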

\begin{proof}
	Given in Appendix \ref{sec:cptp_decomp_proof}.
\end{proof}

\begin{remark}
	The previous proof almost holds for the category $\mathsf{Stoch}$ of stochastic matrices.
	It is clear in this setting that morphisms of strong profunctors out of the $\seq$ tensor can also be represented by concrete classical superchannels \cite{wilson_locality}, which furthermore admit a concrete decomposition into combs \cite{chiribella_networks}, yet it is unclear at this stage whether $\opt(\mathsf{Stoch}) \cong \comb(\mathsf{Stoch})$.
	Consequently it is unclear whether the category of deterministic classical channels has a fully categorical supermap decomposition theorem, though deciding on the equivalency of optics and combs would resolve the matter.
\end{remark}

\subsection{Duals in Strong Endoprofunctors}
One of the key applications of constructing connectives such as $(- \otimes_{\cat{C}} - )$ and $( - \seq - )$ is to find safe and increasingly complex ways of reasoning about the quite abstract conceptual objects which are supermaps.
In short, one key reason to develop a categorical framework for supermaps is to find nice logics for abstract supermaps.
With this in mind, a reader familiar with the $\caus$-construction may recall that it sends any compact closed (and pre-causal) category to a $*$-autonomous category. 

The $*$ of a $*$-autonomous category can always be phrased in terms of its closed monoidal structure, and $\StProf(\cat{C})$ is at least closed monoidal, coming equipped with a \textit{weak dual}.
\begin{definition}
  Let $(\cat{C},\otimes,i)$ be a closed monoidal category with internal-hom $[-,-]$.
  We say that the object $a^*:=[a,i]$ is the \textit{weak dual} to the object $a$.
\end{definition}
A $*$-autonomous category can be thought of roughly as a closed monoidal category in which the weak dual $(-)^{*} $ satisfies $(-)^{**} \cong (-)$.
In other words, a $*$-autonomous category is a closed monoidal category with an involutive weak dual.
As we will see, there exist choices of $\cat{C}$ for which $\StProf(\cat{C})$ is not $*$-autonomous, and for good physical reason, the presence of involutivity implies supermap decomposition theorems which \textit{cannot} hold in general. 

Concretely, in the $\mathsf{Caus}$-construction the dual of an object $(A,c_A)$ is given by $(A^*,c_A^*)$, where $c_A^*$ is the set of effects that are causal on all states in $c_A$.
In this case checking involutivity amounts to checking $(A,c_a)^{**} = (A^{**},c_A^{**}) \cong (A,c_a)$, which follows because the underlying category is taken to be compact closed ($A\cong A^{**}$) and the sets $c_A$ are taken to be closed.
This reliance on the underlying compact closure of the category at hand makes it unclear whether we ought to expect any negation we can define on $\StProf(\cat{C})$ to be involutive, not least because, unlike the $\mathsf{Caus}$-construction we do not have the underlying structure of a compact closed category (which in linear algebra is the signature of finite dimensionality) to begin with.

\subsection{Supermap Decomposition from Involutivity}

There is a more concrete reason than dimensionality why we cannot expect $\StProf(\cat{C})$ to be $*$-autonomous. The exchange of states and effects cannot in general be involutive in this higher-order setting due to the lack of a general decomposition theorem for categorical supermaps. For instance, consider the object $y_{\vb*{a}}$ of $\StProf(\cat{C})$, that is the space of optics with input $(a,a')$.
Its weak dual (for the closed monoidal operation $\otimes_\cat{C}$) is given by $\suphom{a}{a'}$. 

\begin{proposition}\label{prop:dualoptics}
  The weak dual of $y_{\vb*{a}}$ is $\suphom{a}{a'}$.
\end{proposition}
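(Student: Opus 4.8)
The plan is to work inside the copresheaf description $\StProf(\cat{C}) \cong \copsh{\opt(\cat{C})}$ of Theorem \ref{thm:optics_presheaves}, under which $y_{\vb*{a}}$ is the corepresentable copresheaf $\opt(\cat{C})(\vb*{a},-)$ and the unit $1_\cat{C}$ of $\otimes_\cat{C}$ is the copresheaf sending $\vb*{d}=(d,d') \mapsto \cat{C}(d,d')$. Since by definition the weak dual is $(y_{\vb*{a}})^* = [y_{\vb*{a}}, 1_\cat{C}]$, the whole statement becomes an instance of representability: I would pin down $[y_{\vb*{a}}, 1_\cat{C}]$ by probing it with the representables $y_{\vb*{c}}$ and reading off its values through the Yoneda lemma $\StProf(\cat{C})(y_{\vb*{c}}, G) \cong G(\vb*{c})$, exactly the identity already invoked in the surrounding discussion.

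The one structural input beyond Yoneda that I would record first is that Day convolution makes the embedding $\vb*{a} \mapsto y_{\vb*{a}}$ strong monoidal for $\otimes_\cat{C}$, so that $y_{\vb*{c}} \otimes_\cat{C} y_{\vb*{a}} \cong y_{\vb*{c} \otimes \vb*{a}}$, where $\vb*{c} \otimes \vb*{a} = (c \otimes a, c' \otimes a')$ is the componentwise monoidal product of $\opt(\cat{C})$. With this in hand the computation is a chain of natural isomorphisms,
\begin{align*}
	[y_{\vb*{a}}, 1_\cat{C}](\vb*{c})
	&\cong \StProf(\cat{C})\big(y_{\vb*{c}}, [y_{\vb*{a}}, 1_\cat{C}]\big) \\
	&\cong \StProf(\cat{C})\big(y_{\vb*{c}} \otimes_\cat{C} y_{\vb*{a}}, 1_\cat{C}\big) \\
	&\cong \StProf(\cat{C})\big(y_{\vb*{c} \otimes \vb*{a}}, 1_\cat{C}\big) \\
	&\cong 1_\cat{C}(\vb*{c} \otimes \vb*{a}) = \cat{C}(c \otimes a, c' \otimes a'),
\end{align*}
using the Yoneda lemma at the first and fourth steps, the closed adjunction of $\otimes_\cat{C}$ at the second, and strong monoidality at the third. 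The symmetry of $\cat{C}$ then gives $\cat{C}(c \otimes a, c' \otimes a') \cong \cat{C}(a \otimes c, a' \otimes c')$, which is precisely the value $\suphom{a}{a'}(\vb*{c})$, so that $[y_{\vb*{a}}, 1_\cat{C}] \cong \suphom{a}{a'}$.

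The routine but necessary bookkeeping is to check that every isomorphism above is natural in $\vb*{c}$, so that the pointwise bijections assemble into an isomorphism of strong profunctors rather than a mere family of bijections; this follows since each step is an instance of a naturally given transformation. The main obstacle I anticipate lies in justifying the two facts taken as inputs: the strong monoidality of the Yoneda embedding with respect to $\otimes_\cat{C}$, and the identification of $1_\cat{C}$ with the copresheaf $\vb*{d} \mapsto \cat{C}(d,d')$ under Theorem \ref{thm:optics_presheaves}. Both are standard features of the Pastro--Street correspondence, but the real calculation sits in establishing $y_{\vb*{c}} \otimes_\cat{C} y_{\vb*{a}} \cong y_{\vb*{c} \otimes \vb*{a}}$, equivalently unwinding the coend defining the Day convolution of two corepresentables and collapsing its bound variables by co-Yoneda. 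Once that reduction is secured, the remainder of the argument is a formal consequence of the Yoneda lemma.
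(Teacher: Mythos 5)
Your proof is correct, and it reaches the same conclusion by a formal Yoneda computation, but it decomposes the argument differently from the paper. The paper's proof is a one-line collapse of the explicit end formula for the Day-convolution internal hom: $[y_{\vb*{a}},1] \cong \int_{\vb*{x}} \set\left( y_{\vb*{a}}\vb*{x}, y_{\vb*{i}} (\vb*{x}\otimes -) \right) \cong y_{\vb*{i}}(\vb*{a}\otimes -) \cong \suphom{a}{a'}$, i.e.\ it plugs the corepresentable into the internal-hom formula displayed earlier in the paper and performs a single Yoneda reduction under the end. You never touch that formula: you instead pin down $[y_{\vb*{a}},1_\cat{C}]$ pointwise by probing with representables, using the Yoneda lemma twice, the tensor-hom adjunction, and the strong monoidality of the embedding, $y_{\vb*{c}}\otimes_\cat{C} y_{\vb*{a}} \cong y_{\vb*{c}\otimes\vb*{a}}$. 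Both routes are sound and the ingredients are interderivable; what each buys is slightly different. The paper's version is shorter given that it has already stated the internal-hom formula; yours avoids end calculus entirely, makes the representability theme explicit (which fits well with the paper's later framing of decomposition theorems as representability), and correctly isolates the only genuine computation as the co-Yoneda collapse proving monoidality of the embedding --- which is the same coend manipulation the paper's Yoneda reduction performs, just relocated. Your observation that a symmetry is needed at the final step to pass from $1_\cat{C}(\vb*{c}\otimes\vb*{a}) = \cat{C}(c\otimes a, c'\otimes a')$ to $\suphom{a}{a'}(\vb*{c})$ is also accurate; the paper's convention sidesteps this because its hom formula places $\vb*{a}$ on the left of the tensor.
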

\begin{proof}
  $[y_{\vb*{a}},1] \cong \int_{\vb*{x}} \set\left( y_{\vb*{a}}\vb*{x}, y_{\vb*{i}} (\vb*{x}\otimes -) \right) \cong y_{\vb*{i}}(\vb*{a}\otimes -) \cong \suphom{a}{a'}$.
\end{proof}

If we interpret $y_{\vb*{a}}$ as the hole $(a,a')$ in the category $\cat{C}$ then we have that the weak dual to a hole is the space of maps $\suphom{a}{a'}$ with which we could fill the hole.

\begin{xlrbox}{optic_shape}
	\begin{tikzpicture}[baseline=(current bounding box.center),{every node/.style}={scale=0.8}]
		\node[pants,bot,top] (p) {};
		\node[tube,bot,anchor=top] (t) at (p.leftleg) {};
		\node[copants,bot,anchor=leftleg] (c) at (t.bot) {};
		\node[bowl] at (p.rightleg) {};
		\node[cobowl] at (c.rightleg) {};
		\begin{pgfonlayer}{nodelayer}
			\node[representable] at (p.rightleg) {$a'$};
			\node[representable] at (c.rightleg) {$a$};
		\end{pgfonlayer}
	\end{tikzpicture}
\end{xlrbox}

\begin{xlrbox}{supermap_basic_shape}
	\begin{tikzpicture}[baseline=(current bounding box.center),{every node/.style}={scale=0.8}]
		\node[copants,bot,widebelt] (c) {};
		\node[pants,bot,widebelt,anchor=belt] (p) at (c.belt) {};
		\node[bowl] at (p.leftleg) {};
		\node[cobowl] at (c.leftleg) {};
		\node[end] at (c.rightleg) {};
		\begin{pgfonlayer}{nodelayer}
			\node[representable] at (c.leftleg) {$a'$};
			\node[representable] at (p.leftleg) {$a$};
		\end{pgfonlayer}
	\end{tikzpicture}
\end{xlrbox}

\begin{equation*}
	\xusebox{optic_shape} \overset{*}{\mapsto} \xusebox{supermap_basic_shape}
\end{equation*}

This gives part of a duality between holes in the category $\cat{C}$ and the maps that fill those holes, but it is not clear that the dual of $\suphom{a}{a'}$ should be $y_{\vb*{a}}$ again.
It seems more reasonable to expect that the dual to $\suphom{a}{a'}$ is the space of all supermaps on the hole $(a,a')$, and there is no reason a priori that this should be $y_{\vb*{a}}$, for instance if there exist supermaps which are not of the form of an optic.
As a result it is not generally the case that the weak dual of $\suphom{a}{a'}$ is $y_{\vb*{a}}$ and this precludes any hope of $\StProf(\cat{C})$ being a $*$-autonomous category in general.

This intuitive idea can be made into a formal statement, which establishes the behaviour of double duals as the key concept in the study of decomposition theorems for categorical supermaps. 

\begin{proposition}\label{prop:decomp_thms}
	A symmetric monoidal category $\cat{C}$ has a $1$-arity decomposition theorem if and only if
	\[\suphom{a}{a'}^* \cong y_{\vb*{a}}, \ \ \textrm{or equivalently}, \ \ y_{\vb*{a}}^{**} \cong y_{\vb*{a}}. \] 
	Furthermore, $\cat{C}$ has an $n$-arity supermap decomposition theorem if and only if 
	\begin{equation*}
		{( \seq_i {y_{\vb*{a}_i }^{*} })}^{*} \cong \seq_i y_{\vb*{a}_i}. 
	\end{equation*}
\end{proposition}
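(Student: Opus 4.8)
The plan is to derive both biconditionals from a single \emph{unconditional} natural isomorphism, after which each direction follows formally. The only inputs are Proposition~\ref{prop:dualoptics}, the closed structure of $(\otimes_\cat{C}, 1_\cat{C})$, the symmetry of $\otimes_\cat{C}$, and the Yoneda lemma. First I would record two consequences of Proposition~\ref{prop:dualoptics}: that $\suphom{a_i}{a_i'} \cong y_{\vb*{a}_i}^{*}$, hence $\seq_i \suphom{a_i}{a_i'} \cong \seq_i y_{\vb*{a}_i}^{*}$ by functoriality of $\seq$ in each argument; and that $\suphom{b}{b'} \cong y_{\vb*{b}}^{*} = [y_{\vb*{b}}, 1_\cat{C}]$. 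This already rephrases the stated dual condition: for $n=1$ the equation ${(\seq_i y_{\vb*{a}_i}^{*})}^{*} \cong \seq_i y_{\vb*{a}_i}$ reads $\suphom{a}{a'}^{*} \cong y_{\vb*{a}}$, and since $y_{\vb*{a}}^{*} \cong \suphom{a}{a'}$ this is the same as $y_{\vb*{a}}^{**} \cong y_{\vb*{a}}$, explaining the ``equivalently'' in the $1$-arity case.

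The heart of the argument is the chain of natural isomorphisms
\begin{align*}
\StProf(\cat{C})\big(\seq_i \suphom{a_i}{a_i'}, \suphom{b}{b'}\big)
&\cong \StProf(\cat{C})\big(\seq_i y_{\vb*{a}_i}^{*}, [y_{\vb*{b}}, 1_\cat{C}]\big) \\
&\cong \StProf(\cat{C})\big((\seq_i y_{\vb*{a}_i}^{*}) \otimes_\cat{C} y_{\vb*{b}}, 1_\cat{C}\big) \\
&\cong \StProf(\cat{C})\big(y_{\vb*{b}} \otimes_\cat{C} (\seq_i y_{\vb*{a}_i}^{*}), 1_\cat{C}\big) \\
&\cong \StProf(\cat{C})\big(y_{\vb*{b}}, {(\seq_i y_{\vb*{a}_i}^{*})}^{*}\big),
\end{align*}
where the middle steps use the tensor--hom adjunction for $\otimes_\cat{C}$ and its symmetry, and the outer ones the identifications above. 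This isomorphism holds for every $\cat{C}$ and is natural in $\vb*{b}$. Comparing its right-hand side with the codomain $\StProf(\cat{C})(y_{\vb*{b}}, \seq_i y_{\vb*{a}_i})$ appearing in the decomposition theorem, the forward implication is immediate: an isomorphism ${(\seq_i y_{\vb*{a}_i}^{*})}^{*} \cong \seq_i y_{\vb*{a}_i}$ transported through $\StProf(\cat{C})(y_{\vb*{b}}, -)$ produces exactly the required natural isomorphism.

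For the converse I would compose the decomposition-theorem isomorphism with the chain above to obtain a natural-in-$\vb*{b}$ isomorphism $\StProf(\cat{C})(y_{\vb*{b}}, {(\seq_i y_{\vb*{a}_i}^{*})}^{*}) \cong \StProf(\cat{C})(y_{\vb*{b}}, \seq_i y_{\vb*{a}_i})$, and then invoke Yoneda. Under Theorem~\ref{thm:optics_presheaves} the representables $y_{\vb*{b}}$ are the corepresentable copresheaves on $\opt(\cat{C})$, so $\StProf(\cat{C})(y_{\vb*{b}}, P) \cong P(\vb*{b})$ by the co-Yoneda lemma; a natural isomorphism of these evaluations in $\vb*{b}$ is therefore an isomorphism of the copresheaves themselves, giving ${(\seq_i y_{\vb*{a}_i}^{*})}^{*} \cong \seq_i y_{\vb*{a}_i}$. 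The $1$-arity statement is the special case $n=1$. I expect the main obstacle to be the bookkeeping around this final Yoneda step: one must check that the variance and structure maps witnessing naturality in $\vb*{b}$ of the decomposition isomorphism agree with those of the unconditional chain, so that the two compose to a genuine natural isomorphism of copresheaves rather than a merely pointwise one; the symmetry and closedness manipulations in the displayed chain are routine by comparison.
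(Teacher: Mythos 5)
Your proposal is correct and takes essentially the same route as the paper's own proof: both rewrite the codomain $\suphom{b}{b'}$ as the weak dual $y_{\vb*{b}}^*$ via Proposition~\ref{prop:dualoptics}, use the closed structure and symmetry of $\otimes_\cat{C}$ to swap the dual across the hom, arriving at the unconditional isomorphism $\StProf(\cat{C})\big(\seq_i \suphom{a_i}{a_i'}, \suphom{b}{b'}\big) \cong \StProf(\cat{C})\big(y_{\vb*{b}}, {(\seq_i y_{\vb*{a}_i}^{*})}^{*}\big)$, and then conclude by Yoneda. The only differences are presentational: the paper writes the $1$-arity chain tersely and asserts the $n$-arity case is identical, whereas you treat general $n$ directly and make the converse (Yoneda fully-faithfulness upgrading the pointwise isomorphism to an isomorphism of copresheaves) explicit.
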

\begin{proof}
	Given in Appendix \ref{sec:decomp_thms_proof}.
\end{proof}

In other words to ask that the dual to $\suphom{a}{a'}$ is given by optics is to ask that the supermaps between the primitive types $\suphom{a}{a'}$ and $\suphom{b}{b'}$ are equivalent to optics.
It follows that the category $\StProf(\cptp)$ has additional nice categorical properties in terms of weak duals, precisely stemming from its supermap decomposition theorems.
In short we can now express supermap decompositions for $\cptp$ fully with the basic categorical condition ${( \seq_i {y_{\vb*{a}_i }^{*} })}^{*} \cong \seq_i y_{\vb*{a}_i}$ involving weak duals of Yoneda embeddings and monoidal products of profunctors. 

For an example of a category where these dualities do not hold consider $\mathsf{Unitaries}$.
In \cite{wilson_polycategories} it is demonstrated that there exist morphisms of strong profunctors on $\suphom{a}{a'}$ which do not admit a representation in terms of even a linear map, let alone an optic. Consequently, $\StProf(\mathsf{Unitaries})$ cannot be $*$-autonomous. 

\subsection{Tensorial Logic}
We have seen that the kind of dual or negation present in $\StProf(\cat{C})$ is not strong enough to give $*$-autonomous structure, and in fact, we have seen that this is a feature not a bug. It is still important on the other hand to extract logical structure that \textit{is} present for fully abstract categorical supermaps. 

To this end, whilst $*$-autonomous categories give models of linear logic, there is another more general and primitive logic known as \textit{tensorial logic} \cite{mellies_tensorlogic,mellies_continuation,mellies_negation} which weakens the involutivity of $(-)^*$, replacing it instead with a tensorial strength.
The models of tensorial logic are known as dialogue categories.

\begin{definition}
	A \textit{dialogue} category is a symmetric monoidal category $(\cat{C},\otimes,i)$ with a negation functor $(-)^*:\opcat{\cat{C}}\morph{}\cat{C}$ with natural isomorphisms $\cat{C}(a\otimes b,c^*) \cong \cat{C}(a,(b\otimes c)^*)$ satisfying a certain commutative diagram \cite{mellies_tensorlogic}.
\end{definition}

Any closed monoidal category becomes straightforwardly a dialogue category where the negation functor is given by taking weak dual objects.
The endofunctor $[[-,i],i]:\cat{C}\morph{}\cat{C}$ (which would have needed to be isomorphic to the identity functor for $*$-autonomy) is well-known in computer science as the \textit{continuation monad}.
One way of seeing that this functor is monadic is to note that it arises from the adjunction between $[-,i]$ and $\opcat{[-,i]}$.

In any dialogue category the negation functor has a parametrised strength $(x\otimes b^*)^* \otimes c \morph{} (x \otimes (b\otimes c)^*)^*$.
Taking the parameter $x$ to be the monoidal unit $i$ returns a more usual strength $b^{**}\otimes c \morph{} (b\otimes c)^{**}$, while taking $x$ to be $a^*$ returns the distributivity between $\otimes$ and $\amp$ of linear logic,
\begin{equation*}
	(a\amp b)\otimes c = (a^* \otimes b^*)^*\otimes c \morph{} (a^* \otimes(b\otimes c)^*)^* = a\amp (b\otimes c).
\end{equation*}
It should be noted that while $-\amp- := (-^*\otimes -^*)^*$ is a functor $\cat{C}\times\cat{C}\morph{}\cat{C}$ it is in general not associative or unital and does not become a tensor product of $\cat{C}$.

Nevertheless, even with weaker tensorial logic we can use this new connective $\amp$ to generate further spaces of maps.
For instance suppose that $\cat{C}$ has a 1-arity supermap decomposition theorem so that $\suphom{a}{a'}^*\cong y_{\vb*{a}}$ for all $\vb*{a}$.
Then we can calculate,
\begin{align*}
	& \suphom{a}{a'}\amp \suphom{b}{b'} \\
	& = (\suphom{a}{a'}^* \otimes_\cat{C} \suphom{b}{b'}^*)^* \\
	& \cong (y_{\vb*{a}} \otimes_\cat{C} y_{\vb*{b}}) ^* \cong y_{\vb*{a}\otimes \vb*{b}}^* \\
	& \cong \suphom{a\otimes b}{a'\otimes b'}
\end{align*}
So the par $\amp$ sends the space of maps on party $\vb*{a}$ and the space of maps on $\vb*{b}$ to the space of all bipartite channels $a\otimes b\morph{}a'\otimes b'$ (plus environmental extension).
The linear distributors then give inclusions between spaces of separable and arbitrary channels.

Conversely, even without any supermap decomposition theorems, the par of two holes is the space of multi-partite indefinitely causally ordered supermaps on those holes.
\begin{align*}
	y_{\vb*{a}} \amp y_{\vb*{b}} & = (y_{\vb*{a}}^* \otimes_\cat{C} y_{\vb*{a}}^*)^* \\
	& \cong (\suphom{a}{a'}\otimes_\cat{C}\suphom{b}{b'})^*
\end{align*}
To see this as a space of supermaps note the following.
\begin{align*}
	& \begin{multlined}[t] \StProf(\cat{C}) \big( \suphom{a}{a'}\otimes_\cat{C}\suphom{b}{b'}, \\
		\suphom{c}{c'} \big) \end{multlined} \\
	& \cong \StProf(\cat{C}) \big( y_{\vb*{c}}, (\suphom{a}{a'}\otimes_\cat{C}\suphom{b}{b'})^* \big) \\
	& \cong (\suphom{a}{a'}\otimes_\cat{C}\suphom{b}{b'})^*(\vb*{c})
\end{align*} 

\section{Conclusion and Future Work}
We have shown how to model supermaps, black-box holes, using the category of strong endoprofunctors. 
This category is rich enough to encompass and organise the abstract study of supermaps with indefinite causal order, supermaps with abstract but definite causal structure, \textit{and} through the Yoneda embedding the supermaps which concretely decompose into networks quotiented by equivalence up to the sliding of boxes.

Our approach builds a bridge between the diagrammatic study of quantum supermaps \cite{wilson_locality, wilson_polycategories} and the categorical approach to the study of concrete diagrams with holes in terms of profunctor and coend optics \cite{pastro_street,clarke_profunctor,roman_coend,roman_comb,roman_optics,roman_thesis,riley_optics,earnshaw,hefford_coend,boisseau_optics}.
This significantly tames the study of supermaps on monoidal categories by providing new logical connectives and a new toolbox, the coend calculus, which is well suited for reasoning in these abstract settings.

The applications to quantum foundations are clear: we have a model for supermaps on separable and non-separable Hilbert spaces which has working tensor products along with a (non-involutive) negation, giving a model for tensorial logic \cite{mellies_tensorlogic,mellies_continuation,mellies_negation}. The next steps towards application hinge on the problem of characterisation of the produced categorical definitions, hopefully in terms of linear maps and other nice concrete properties.
Similar considerations hold for the study of indefinite causal structure in OPTs \cite{chiribella_purification, chiribella_informational}.
Beyond this, there is the question of the study of supermaps on more elaborate kinds of theories and spaces, which optics and strong profunctors are well equipped to handle, being generalisable to actegories \cite{clarke_profunctor}, Freyd-categories \cite{hefford_optics} and even double categories \cite{capucci_seeing}.
In particular, could this approach be further developed, to categorify the study of super-super maps and their iterations \cite{kissinger_caus, Bisio_2019, hoffreumon2022projective, apadula2022nosignalling, SimmonsKissinger2022, Wilson_causal, wilson2023mathematical, Perinotti_2017}? 

More open ended is the possibility of importing ideas from the foundations of physics into theoretical computer science and applied category theory.
Indeed coming from the other side of this bridge, there is a vast and rich literature from the field of quantum foundations on the kinds of non-causally ordered boxes-with-holes which can be constructed \cite{baumeler_logically, wechs, oreshkov, chiribella_switch}. It is natural to wonder whether (a) black-box holes and (b) indefinite causal orders could have a role to play in other areas of computer science.

In retrospect, the category-theoretic interpretation for supermaps we have presented is quite natural: given a category $\cat{D}$, $\opcat{\copsh{\cat{D}}}$ is, up to size issues, the free completion.
Taking this free completion adds in more abstract, less concrete spaces and given that the study of supermaps is all about the study of things which are like diagrams with holes but more abstract and less concrete, it is very pleasing to see the study of supermaps identified categorically as the study of copresheaves over the category $\opt(\cat{C})$ of concrete diagrams with holes. 

\begin{acks}
	We would like to thank Mario Román for discussions about the monoidal products of optics, in which he pointed out their produoidal structure, and for contributing writing and discussions to an early version of this project.
	We would also like to thank Marc de Visme for pointing us towards a useful lemma (Lemma 8 of \cite{carette_delayedtrace}) for the proof of Theorem \ref{thm:cptp_equiv} and Cole Comfort for many helpful discussions.
	The internal string diagrams were made using a package heavily inspired by that of \cite{bartlett_extended,bartlett_modular}.
	This work has been partially funded by the French National Research Agency (ANR) within the framework of ``Plan France 2030'', under the research projects EPIQ ANR-22-PETQ-0007 and HQI-R\&D ANR-22-PNCQ-0002. MW was funded by the Engineering and Physical Sciences Research Council [grant number EP/W524335/1].
\end{acks}

\bibliographystyle{ACM-Reference-Format}
\bibliography{bibliography}


\begin{thebibliography}{72}


\ifx \showCODEN    \undefined \def \showCODEN     #1{\unskip}     \fi
\ifx \showDOI      \undefined \def \showDOI       #1{#1}\fi
\ifx \showISBNx    \undefined \def \showISBNx     #1{\unskip}     \fi
\ifx \showISBNxiii \undefined \def \showISBNxiii  #1{\unskip}     \fi
\ifx \showISSN     \undefined \def \showISSN      #1{\unskip}     \fi
\ifx \showLCCN     \undefined \def \showLCCN      #1{\unskip}     \fi
\ifx \shownote     \undefined \def \shownote      #1{#1}          \fi
\ifx \showarticletitle \undefined \def \showarticletitle #1{#1}   \fi
\ifx \showURL      \undefined \def \showURL       {\relax}        \fi
\providecommand\bibfield[2]{#2}
\providecommand\bibinfo[2]{#2}
\providecommand\natexlab[1]{#1}
\providecommand\showeprint[2][]{arXiv:#2}

\bibitem[Apadula et~al\mbox{.}(2024)]%
        {apadula2022nosignalling}
\bibfield{author}{\bibinfo{person}{Luca Apadula}, \bibinfo{person}{Alessandro Bisio}, {and} \bibinfo{person}{Paolo Perinotti}.} \bibinfo{year}{2024}\natexlab{}.
\newblock \showarticletitle{No-signalling constrains quantum computation with indefinite causal structure}.
\newblock \bibinfo{journal}{\emph{{Quantum}}}  \bibinfo{volume}{8} (\bibinfo{date}{Feb.} \bibinfo{year}{2024}), \bibinfo{pages}{1241}.
\newblock
\showISSN{2521-327X}
\urldef\tempurl%
\url{https://doi.org/10.22331/q-2024-02-05-1241}
\showDOI{\tempurl}


\bibitem[Ara\'ujo et~al\mbox{.}(2014)]%
        {araujo_switch_advantage}
\bibfield{author}{\bibinfo{person}{Mateus Ara\'ujo}, \bibinfo{person}{Fabio Costa}, {and} \bibinfo{person}{\ifmmode \check{C}\else~\v{C}\fi{}aslav Brukner}.} \bibinfo{year}{2014}\natexlab{}.
\newblock \showarticletitle{Computational Advantage from Quantum-Controlled Ordering of Gates}.
\newblock \bibinfo{journal}{\emph{Phys. Rev. Lett.}}  \bibinfo{volume}{113} (\bibinfo{date}{Dec} \bibinfo{year}{2014}), \bibinfo{pages}{250402}.
\newblock
Issue 25.
\urldef\tempurl%
\url{https://doi.org/10.1103/PhysRevLett.113.250402}
\showDOI{\tempurl}


\bibitem[Bartlett(2014)]%
        {bartlett_wirediagrams}
\bibfield{author}{\bibinfo{person}{Bruce Bartlett}.} \bibinfo{year}{2014}\natexlab{}.
\newblock \bibinfo{title}{Quasistrict symmetric monoidal 2-categories via wire diagrams}.
\newblock
\newblock
\urldef\tempurl%
\url{https://doi.org/10.48550/ARXIV.1409.2148}
\showDOI{\tempurl}


\bibitem[Bartlett et~al\mbox{.}(2014)]%
        {bartlett_extended}
\bibfield{author}{\bibinfo{person}{Bruce Bartlett}, \bibinfo{person}{Christopher~L. Douglas}, \bibinfo{person}{Christopher~J. Schommer-Pries}, {and} \bibinfo{person}{Jamie Vicary}.} \bibinfo{year}{2014}\natexlab{}.
\newblock \bibinfo{title}{Extended 3-dimensional bordism as the theory of modular objects}.
\newblock
\newblock
\showeprint{1411.0945}


\bibitem[Bartlett et~al\mbox{.}(2015)]%
        {bartlett_modular}
\bibfield{author}{\bibinfo{person}{Bruce Bartlett}, \bibinfo{person}{Christopher~L. Douglas}, \bibinfo{person}{Christopher~J. Schommer-Pries}, {and} \bibinfo{person}{Jamie Vicary}.} \bibinfo{year}{2015}\natexlab{}.
\newblock \bibinfo{title}{Modular categories as representations of the 3-dimensional bordism 2-category}.
\newblock
\newblock
\urldef\tempurl%
\url{https://doi.org/10.48550/ARXIV.1509.06811}
\showDOI{\tempurl}


\bibitem[Baumeler et~al\mbox{.}(2014)]%
        {baumeler_incompatibility}
\bibfield{author}{\bibinfo{person}{\"Amin Baumeler}, \bibinfo{person}{Adrien Feix}, {and} \bibinfo{person}{Stefan Wolf}.} \bibinfo{year}{2014}\natexlab{}.
\newblock \showarticletitle{Maximal incompatibility of locally classical behavior and global causal order in multiparty scenarios}.
\newblock \bibinfo{journal}{\emph{Phys. Rev. A}}  \bibinfo{volume}{90} (\bibinfo{year}{2014}), \bibinfo{pages}{042106}.
\newblock
Issue 4.
\urldef\tempurl%
\url{https://doi.org/10.1103/PhysRevA.90.042106}
\showDOI{\tempurl}


\bibitem[Baumeler and Wolf(2016)]%
        {baumeler_logically}
\bibfield{author}{\bibinfo{person}{{\"A}min Baumeler} {and} \bibinfo{person}{Stefan Wolf}.} \bibinfo{year}{2016}\natexlab{}.
\newblock \showarticletitle{The space of logically consistent classical processes without causal order}.
\newblock \bibinfo{journal}{\emph{New Journal of Physics}} \bibinfo{volume}{18}, \bibinfo{number}{1} (\bibinfo{year}{2016}), \bibinfo{pages}{013036}.
\newblock
\urldef\tempurl%
\url{https://doi.org/10.1088/1367-2630/18/1/013036}
\showDOI{\tempurl}


\bibitem[Beckman et~al\mbox{.}(2001)]%
        {beckman_localisable}
\bibfield{author}{\bibinfo{person}{David Beckman}, \bibinfo{person}{Daniel Gottesman}, \bibinfo{person}{M.~A. Nielsen}, {and} \bibinfo{person}{John Preskill}.} \bibinfo{year}{2001}\natexlab{}.
\newblock \showarticletitle{Causal and localizable quantum operations}.
\newblock \bibinfo{journal}{\emph{Phys. Rev. A}}  \bibinfo{volume}{64} (\bibinfo{year}{2001}), \bibinfo{pages}{052309}.
\newblock
Issue 5.
\urldef\tempurl%
\url{https://doi.org/10.1103/PhysRevA.64.052309}
\showDOI{\tempurl}


\bibitem[B\'enabou(2000)]%
        {benabou}
\bibfield{author}{\bibinfo{person}{Jean B\'enabou}.} \bibinfo{year}{2000}\natexlab{}.
\newblock \bibinfo{title}{Distributors at Work}.  (\bibinfo{year}{2000}).
\newblock
\newblock
\shownote{Notes for a course at TU Darmstadt}.


\bibitem[Bisio and Perinotti(2019)]%
        {Bisio_2019}
\bibfield{author}{\bibinfo{person}{Alessandro Bisio} {and} \bibinfo{person}{Paolo Perinotti}.} \bibinfo{year}{2019}\natexlab{}.
\newblock \showarticletitle{Theoretical framework for higher-order quantum theory}.
\newblock \bibinfo{journal}{\emph{Proceedings of the Royal Society A: Mathematical, Physical and Engineering Sciences}} \bibinfo{volume}{475}, \bibinfo{number}{2225} (\bibinfo{year}{2019}), \bibinfo{pages}{20180706}.
\newblock
\showISSN{1471-2946}
\urldef\tempurl%
\url{https://doi.org/10.1098/rspa.2018.0706}
\showDOI{\tempurl}


\bibitem[Boisseau(2020)]%
        {boisseau_optics}
\bibfield{author}{\bibinfo{person}{Guillaume Boisseau}.} \bibinfo{year}{2020}\natexlab{}.
\newblock \showarticletitle{{String Diagrams for Optics}}. In \bibinfo{booktitle}{\emph{5th International Conference on Formal Structures for Computation and Deduction (FSCD 2020)}} \emph{(\bibinfo{series}{Leibniz International Proceedings in Informatics (LIPIcs)}, Vol.~\bibinfo{volume}{167})}, \bibfield{editor}{\bibinfo{person}{Zena~M. Ariola}} (Ed.). \bibinfo{publisher}{Schloss Dagstuhl--Leibniz-Zentrum f{\"u}r Informatik}, \bibinfo{address}{Dagstuhl, Germany}, \bibinfo{pages}{17:1--17:18}.
\newblock
\urldef\tempurl%
\url{https://doi.org/10.4230/LIPIcs.FSCD.2020.17}
\showDOI{\tempurl}


\bibitem[Booker and Street(2013)]%
        {booker}
\bibfield{author}{\bibinfo{person}{Thomas Booker} {and} \bibinfo{person}{Ross Street}.} \bibinfo{year}{2013}\natexlab{}.
\newblock \showarticletitle{Tannaka duality and convolution for duoidal categories}.
\newblock \bibinfo{journal}{\emph{TAC}} \bibinfo{volume}{28}, \bibinfo{number}{6} (\bibinfo{year}{2013}), \bibinfo{pages}{166--205}.
\newblock


\bibitem[Braithwaite and Rom{\'a}n(2023)]%
        {braithwaite_collages}
\bibfield{author}{\bibinfo{person}{Dylan Braithwaite} {and} \bibinfo{person}{Mario Rom{\'a}n}.} \bibinfo{year}{2023}\natexlab{}.
\newblock \showarticletitle{Collages of String Diagrams}, In \bibinfo{booktitle}{Proceedings ACT 2023}.
\newblock \bibinfo{journal}{\emph{EPTCS}}  \bibinfo{volume}{397}, \bibinfo{pages}{39--53}.
\newblock
\urldef\tempurl%
\url{https://doi.org/10.4204/EPTCS.397.3}
\showDOI{\tempurl}


\bibitem[Capucci(2022)]%
        {capucci_seeing}
\bibfield{author}{\bibinfo{person}{Matteo Capucci}.} \bibinfo{year}{2022}\natexlab{}.
\newblock \bibinfo{title}{Seeing double through dependent optics}.
\newblock
\newblock
\urldef\tempurl%
\url{https://doi.org/10.48550/arXiv.2204.10708}
\showDOI{\tempurl}


\bibitem[Carette et~al\mbox{.}(2021)]%
        {carette_delayedtrace}
\bibfield{author}{\bibinfo{person}{Titouan Carette}, \bibinfo{person}{Marc de Visme}, {and} \bibinfo{person}{Simon Perdrix}.} \bibinfo{year}{2021}\natexlab{}.
\newblock \showarticletitle{Graphical Language with Delayed Trace: Picturing Quantum Computing with Finite Memory}. In \bibinfo{booktitle}{\emph{Proceedings of the 36th Annual ACM/IEEE Symposium on Logic in Computer Science}} (Rome, Italy) \emph{(\bibinfo{series}{LICS '21})}. \bibinfo{publisher}{Association for Computing Machinery}, \bibinfo{address}{New York, NY, USA}, Article \bibinfo{articleno}{43}.
\newblock
\urldef\tempurl%
\url{https://doi.org/10.1109/LICS52264.2021.9470553}
\showDOI{\tempurl}


\bibitem[Chiribella(2012)]%
        {chiribella_switch2}
\bibfield{author}{\bibinfo{person}{Giulio Chiribella}.} \bibinfo{year}{2012}\natexlab{}.
\newblock \showarticletitle{Perfect discrimination of no-signalling channels via quantum superposition of causal structures}.
\newblock \bibinfo{journal}{\emph{Phys. Rev. A}}  \bibinfo{volume}{86} (\bibinfo{year}{2012}), \bibinfo{pages}{040301}.
\newblock
Issue 4.
\urldef\tempurl%
\url{https://doi.org/10.1103/PhysRevA.86.040301}
\showDOI{\tempurl}


\bibitem[Chiribella et~al\mbox{.}(2021a)]%
        {Chiribella_2021}
\bibfield{author}{\bibinfo{person}{Giulio Chiribella}, \bibinfo{person}{Manik Banik}, \bibinfo{person}{Some~Sankar Bhattacharya}, \bibinfo{person}{Tamal Guha}, \bibinfo{person}{Mir Alimuddin}, \bibinfo{person}{Arup Roy}, \bibinfo{person}{Sutapa Saha}, \bibinfo{person}{Sristy Agrawal}, {and} \bibinfo{person}{Guruprasad Kar}.} \bibinfo{year}{2021}\natexlab{a}.
\newblock \showarticletitle{Indefinite causal order enables perfect quantum communication with zero capacity channels}.
\newblock \bibinfo{journal}{\emph{New Journal of Physics}} \bibinfo{volume}{23}, \bibinfo{number}{3} (\bibinfo{year}{2021}), \bibinfo{pages}{033039}.
\newblock
\urldef\tempurl%
\url{https://doi.org/10.1088/1367-2630/abe7a0}
\showDOI{\tempurl}


\bibitem[Chiribella et~al\mbox{.}(2008a)]%
        {chiribella_circuits}
\bibfield{author}{\bibinfo{person}{Giulio Chiribella}, \bibinfo{person}{Giacomo~Mauro D'Ariano}, {and} \bibinfo{person}{Paolo Perinotti}.} \bibinfo{year}{2008}\natexlab{a}.
\newblock \showarticletitle{Quantum Circuit Architecture}.
\newblock \bibinfo{journal}{\emph{Physical Review Letters}} \bibinfo{volume}{101}, \bibinfo{number}{6} (\bibinfo{year}{2008}), \bibinfo{pages}{060401}.
\newblock
\urldef\tempurl%
\url{https://doi.org/10.1103/physrevlett.101.060401}
\showDOI{\tempurl}


\bibitem[Chiribella et~al\mbox{.}(2008b)]%
        {chiribella_supermaps}
\bibfield{author}{\bibinfo{person}{Giulio Chiribella}, \bibinfo{person}{Giacomo~Mauro D'Ariano}, {and} \bibinfo{person}{Paolo Perinotti}.} \bibinfo{year}{2008}\natexlab{b}.
\newblock \showarticletitle{Transforming quantum operations: Quantum supermaps}.
\newblock \bibinfo{journal}{\emph{{EPL} (Europhysics Letters)}} \bibinfo{volume}{83}, \bibinfo{number}{3} (\bibinfo{year}{2008}), \bibinfo{pages}{30004}.
\newblock
\urldef\tempurl%
\url{https://doi.org/10.1209/0295-5075/83/30004}
\showDOI{\tempurl}


\bibitem[Chiribella et~al\mbox{.}(2009)]%
        {chiribella_networks}
\bibfield{author}{\bibinfo{person}{Giulio Chiribella}, \bibinfo{person}{Giacomo~Mauro D'Ariano}, {and} \bibinfo{person}{Paolo Perinotti}.} \bibinfo{year}{2009}\natexlab{}.
\newblock \showarticletitle{Theoretical framework for quantum networks}.
\newblock \bibinfo{journal}{\emph{Phys. Rev. A}}  \bibinfo{volume}{80} (\bibinfo{year}{2009}), \bibinfo{pages}{022339}.
\newblock
Issue 2.
\urldef\tempurl%
\url{https://doi.org/10.1103/PhysRevA.80.022339}
\showDOI{\tempurl}


\bibitem[Chiribella et~al\mbox{.}(2010)]%
        {chiribella_purification}
\bibfield{author}{\bibinfo{person}{Giulio Chiribella}, \bibinfo{person}{Giacomo~Mauro D'Ariano}, {and} \bibinfo{person}{Paolo Perinotti}.} \bibinfo{year}{2010}\natexlab{}.
\newblock \showarticletitle{Probabilistic theories with purification}.
\newblock \bibinfo{journal}{\emph{Phys. Rev. A}}  \bibinfo{volume}{81} (\bibinfo{year}{2010}), \bibinfo{pages}{062348}.
\newblock
Issue 6.
\urldef\tempurl%
\url{https://doi.org/10.1103/PhysRevA.81.062348}
\showDOI{\tempurl}


\bibitem[Chiribella et~al\mbox{.}(2011)]%
        {chiribella_informational}
\bibfield{author}{\bibinfo{person}{Giulio Chiribella}, \bibinfo{person}{Giacomo~Mauro D'Ariano}, {and} \bibinfo{person}{Paolo Perinotti}.} \bibinfo{year}{2011}\natexlab{}.
\newblock \showarticletitle{Informational derivation of quantum theory}.
\newblock \bibinfo{journal}{\emph{Phys. Rev. A}}  \bibinfo{volume}{84} (\bibinfo{year}{2011}), \bibinfo{pages}{012311}.
\newblock
Issue 1.
\urldef\tempurl%
\url{https://doi.org/10.1103/PhysRevA.84.012311}
\showDOI{\tempurl}


\bibitem[Chiribella et~al\mbox{.}(2013a)]%
        {chiribella_switch}
\bibfield{author}{\bibinfo{person}{Giulio Chiribella}, \bibinfo{person}{Giacomo~Mauro D'Ariano}, \bibinfo{person}{Paolo Perinotti}, {and} \bibinfo{person}{Beno\^it Valiron}.} \bibinfo{year}{2013}\natexlab{a}.
\newblock \showarticletitle{Quantum computations without definite causal structure}.
\newblock \bibinfo{journal}{\emph{Phys. Rev. A}}  \bibinfo{volume}{88} (\bibinfo{year}{2013}), \bibinfo{pages}{022318}.
\newblock
Issue 2.
\urldef\tempurl%
\url{https://doi.org/10.1103/PhysRevA.88.022318}
\showDOI{\tempurl}


\bibitem[Chiribella et~al\mbox{.}(2013b)]%
        {Chiribella2013NormalCP}
\bibfield{author}{\bibinfo{person}{Giulio Chiribella}, \bibinfo{person}{Alessandro Toigo}, {and} \bibinfo{person}{Veronica Umanit{\`a}}.} \bibinfo{year}{2013}\natexlab{b}.
\newblock \showarticletitle{Normal Completely Positive Maps on the Space of Quantum Operations}.
\newblock \bibinfo{journal}{\emph{Open Systems \& Information Dynamics}} \bibinfo{volume}{20}, \bibinfo{number}{01} (\bibinfo{year}{2013}), \bibinfo{pages}{1350003}.
\newblock
\urldef\tempurl%
\url{https://doi.org/10.1142/S1230161213500030}
\showDOI{\tempurl}


\bibitem[Chiribella et~al\mbox{.}(2021b)]%
        {chiribella_wilson_chau_switch}
\bibfield{author}{\bibinfo{person}{Giulio Chiribella}, \bibinfo{person}{Matt Wilson}, {and} \bibinfo{person}{H.~F. Chau}.} \bibinfo{year}{2021}\natexlab{b}.
\newblock \showarticletitle{Quantum and Classical Data Transmission through Completely Depolarizing Channels in a Superposition of Cyclic Orders}.
\newblock \bibinfo{journal}{\emph{Phys. Rev. Lett.}}  \bibinfo{volume}{127} (\bibinfo{year}{2021}), \bibinfo{pages}{190502}.
\newblock
Issue 19.
\urldef\tempurl%
\url{https://doi.org/10.1103/PhysRevLett.127.190502}
\showDOI{\tempurl}


\bibitem[Clarke et~al\mbox{.}(2024)]%
        {clarke_profunctor}
\bibfield{author}{\bibinfo{person}{Bryce Clarke}, \bibinfo{person}{Derek Elkins}, \bibinfo{person}{Jeremy Gibbons}, \bibinfo{person}{Fosco Loregian}, \bibinfo{person}{Bartosz Milewski}, \bibinfo{person}{Emily Pillmore}, {and} \bibinfo{person}{Mario Rom{\'{a}}n}.} \bibinfo{year}{2024}\natexlab{}.
\newblock \showarticletitle{Profunctor {O}ptics, a {C}ategorical {U}pdate}.
\newblock \bibinfo{journal}{\emph{{Compositionality}}}  \bibinfo{volume}{6} (\bibinfo{date}{Feb.} \bibinfo{year}{2024}), \bibinfo{pages}{1}.
\newblock
\showISSN{2631-4444}
\urldef\tempurl%
\url{https://doi.org/10.32408/compositionality-6-1}
\showDOI{\tempurl}


\bibitem[Cockett and Seely(1997)]%
        {cockett}
\bibfield{author}{\bibinfo{person}{J.R.B. Cockett} {and} \bibinfo{person}{R.A.G. Seely}.} \bibinfo{year}{1997}\natexlab{}.
\newblock \showarticletitle{Weakly distributive categories}.
\newblock \bibinfo{journal}{\emph{Journal of Pure and Applied Algebra}} \bibinfo{volume}{114}, \bibinfo{number}{2} (\bibinfo{year}{1997}), \bibinfo{pages}{133--173}.
\newblock
\urldef\tempurl%
\url{https://doi.org/10.1016/0022-4049(95)00160-3}
\showDOI{\tempurl}


\bibitem[Coecke(2016)]%
        {coecke2016terminality}
\bibfield{author}{\bibinfo{person}{Bob Coecke}.} \bibinfo{year}{2016}\natexlab{}.
\newblock \showarticletitle{Terminality implies no-signalling... and much more than that}.
\newblock \bibinfo{journal}{\emph{New Generation Computing}} \bibinfo{volume}{34}, \bibinfo{number}{1-2} (\bibinfo{year}{2016}), \bibinfo{pages}{69--85}.
\newblock
\urldef\tempurl%
\url{https://doi.org/10.1007/s00354-016-0201-6}
\showDOI{\tempurl}


\bibitem[Coecke et~al\mbox{.}(2016)]%
        {coecke_resources}
\bibfield{author}{\bibinfo{person}{Bob Coecke}, \bibinfo{person}{Tobias Fritz}, {and} \bibinfo{person}{Robert~W. Spekkens}.} \bibinfo{year}{2016}\natexlab{}.
\newblock \showarticletitle{A mathematical theory of resources}.
\newblock \bibinfo{journal}{\emph{Information and Computation}}  \bibinfo{volume}{250} (\bibinfo{year}{2016}), \bibinfo{pages}{59--86}.
\newblock
\showISSN{0890-5401}
\urldef\tempurl%
\url{https://doi.org/10.1016/j.ic.2016.02.008}
\showDOI{\tempurl}


\bibitem[Coecke and Kissinger(2017)]%
        {coecke_kissinger_2017}
\bibfield{author}{\bibinfo{person}{Bob Coecke} {and} \bibinfo{person}{Aleks Kissinger}.} \bibinfo{year}{2017}\natexlab{}.
\newblock \bibinfo{booktitle}{\emph{Picturing Quantum Processes: A First Course in Quantum Theory and Diagrammatic Reasoning}}.
\newblock \bibinfo{publisher}{Cambridge University Press}.
\newblock
\urldef\tempurl%
\url{https://doi.org/10.1017/9781316219317}
\showDOI{\tempurl}


\bibitem[Coecke and Perdrix(2012)]%
        {coecke_environment}
\bibfield{author}{\bibinfo{person}{Bob Coecke} {and} \bibinfo{person}{Simon Perdrix}.} \bibinfo{year}{2012}\natexlab{}.
\newblock \showarticletitle{Environment and classical channels in categorical quantum mechanics}.
\newblock \bibinfo{journal}{\emph{Logical Methods in Computer Science}}  \bibinfo{volume}{8} (\bibinfo{year}{2012}).
\newblock
Issue 4.
\urldef\tempurl%
\url{https://doi.org/10.2168/LMCS-8(4:14)2012}
\showDOI{\tempurl}


\bibitem[Day(1970a)]%
        {day_thesis}
\bibfield{author}{\bibinfo{person}{Brian Day}.} \bibinfo{year}{1970}\natexlab{a}.
\newblock \emph{\bibinfo{title}{Construction of Biclosed Categories}}.
\newblock \bibinfo{thesistype}{Ph.\,D. Dissertation}. \bibinfo{school}{University of New South Wales}.
\newblock
\urldef\tempurl%
\url{https://doi.org/10.26190/unsworks/8048}
\showDOI{\tempurl}


\bibitem[Day(1970b)]%
        {day}
\bibfield{author}{\bibinfo{person}{Brian Day}.} \bibinfo{year}{1970}\natexlab{b}.
\newblock \showarticletitle{On closed categories of functors}. In \bibinfo{booktitle}{\emph{Reports of the Midwest Category Seminar IV}}, Vol.~\bibinfo{volume}{137}. \bibinfo{publisher}{Springer Berlin Heidelberg}, \bibinfo{address}{Berlin, Heidelberg}, \bibinfo{pages}{1--38}.
\newblock
\urldef\tempurl%
\url{https://doi.org/10.1007/BFb0060438}
\showDOI{\tempurl}


\bibitem[Earnshaw et~al\mbox{.}(2023)]%
        {earnshaw}
\bibfield{author}{\bibinfo{person}{Matt Earnshaw}, \bibinfo{person}{James Hefford}, {and} \bibinfo{person}{Mario Rom{\'a}n}.} \bibinfo{year}{2023}\natexlab{}.
\newblock \bibinfo{title}{The Produoidal Algebra of Process Decomposition}.
\newblock
\newblock
\urldef\tempurl%
\url{https://doi.org/10.48550/ARXIV.2301.11867}
\showDOI{\tempurl}


\bibitem[Ebler et~al\mbox{.}(2018)]%
        {ebler_comms}
\bibfield{author}{\bibinfo{person}{Daniel Ebler}, \bibinfo{person}{Sina Salek}, {and} \bibinfo{person}{Giulio Chiribella}.} \bibinfo{year}{2018}\natexlab{}.
\newblock \showarticletitle{Enhanced Communication with the Assistance of Indefinite Causal Order}.
\newblock \bibinfo{journal}{\emph{Phys. Rev. Lett.}}  \bibinfo{volume}{120} (\bibinfo{year}{2018}), \bibinfo{pages}{120502}.
\newblock
Issue 12.
\urldef\tempurl%
\url{https://doi.org/10.1103/PhysRevLett.120.120502}
\showDOI{\tempurl}


\bibitem[Felce and Vedral(2020)]%
        {fridge_switch}
\bibfield{author}{\bibinfo{person}{David Felce} {and} \bibinfo{person}{Vlatko Vedral}.} \bibinfo{year}{2020}\natexlab{}.
\newblock \showarticletitle{Quantum Refrigeration with Indefinite Causal Order}.
\newblock \bibinfo{journal}{\emph{Phys. Rev. Lett.}}  \bibinfo{volume}{125} (\bibinfo{year}{2020}), \bibinfo{pages}{070603}.
\newblock
Issue 7.
\urldef\tempurl%
\url{https://doi.org/10.1103/PhysRevLett.125.070603}
\showDOI{\tempurl}


\bibitem[Fritz(2015)]%
        {fritz2015convex}
\bibfield{author}{\bibinfo{person}{Tobias Fritz}.} \bibinfo{year}{2015}\natexlab{}.
\newblock \showarticletitle{Convex Spaces I: Definition and Examples}.
\newblock  (\bibinfo{year}{2015}).
\newblock
\urldef\tempurl%
\url{https://doi.org/10.48550/arXiv.0903.5522}
\showDOI{\tempurl}


\bibitem[Garner and Franco(2016)]%
        {garner}
\bibfield{author}{\bibinfo{person}{Richard Garner} {and} \bibinfo{person}{Ignacio~L{\'o}pez Franco}.} \bibinfo{year}{2016}\natexlab{}.
\newblock \showarticletitle{Commutativity}.
\newblock \bibinfo{journal}{\emph{Journal of Pure and Applied Algebra}} \bibinfo{volume}{220}, \bibinfo{number}{5} (\bibinfo{year}{2016}), \bibinfo{pages}{1707--1751}.
\newblock
\urldef\tempurl%
\url{https://doi.org/10.1016/j.jpaa.2015.09.003}
\showDOI{\tempurl}


\bibitem[Giacomini et~al\mbox{.}(2016)]%
        {Giacomini_2016}
\bibfield{author}{\bibinfo{person}{Flaminia Giacomini}, \bibinfo{person}{Esteban Castro-Ruiz}, {and} \bibinfo{person}{{\v C}aslav Brukner}.} \bibinfo{year}{2016}\natexlab{}.
\newblock \showarticletitle{Indefinite causal structures for continuous-variable systems}.
\newblock \bibinfo{journal}{\emph{New Journal of Physics}} \bibinfo{volume}{18}, \bibinfo{number}{11} (\bibinfo{year}{2016}), \bibinfo{pages}{113026}.
\newblock
\urldef\tempurl%
\url{https://doi.org/10.1088/1367-2630/18/11/113026}
\showDOI{\tempurl}


\bibitem[Gu\'erin et~al\mbox{.}(2016)]%
        {switch_complex}
\bibfield{author}{\bibinfo{person}{Philippe~Allard Gu\'erin}, \bibinfo{person}{Adrien Feix}, \bibinfo{person}{Mateus Ara\'ujo}, {and} \bibinfo{person}{\ifmmode \check{C}\else~\v{C}\fi{}aslav Brukner}.} \bibinfo{year}{2016}\natexlab{}.
\newblock \showarticletitle{Exponential Communication Complexity Advantage from Quantum Superposition of the Direction of Communication}.
\newblock \bibinfo{journal}{\emph{Phys. Rev. Lett.}}  \bibinfo{volume}{117} (\bibinfo{year}{2016}), \bibinfo{pages}{100502}.
\newblock
Issue 10.
\urldef\tempurl%
\url{https://doi.org/10.1103/PhysRevLett.117.100502}
\showDOI{\tempurl}


\bibitem[Hefford and Comfort(2023)]%
        {hefford_coend}
\bibfield{author}{\bibinfo{person}{James Hefford} {and} \bibinfo{person}{Cole Comfort}.} \bibinfo{year}{2023}\natexlab{}.
\newblock \showarticletitle{{Coend Optics for Quantum Combs}}, In \bibinfo{booktitle}{Proceedings ACT 2022}.
\newblock \bibinfo{journal}{\emph{EPTCS}}  \bibinfo{volume}{380}, \bibinfo{pages}{63--76}.
\newblock
\urldef\tempurl%
\url{https://doi.org/10.4204/EPTCS.380.4}
\showDOI{\tempurl}


\bibitem[Hefford and Kissinger(2023)]%
        {hefford_spacetime}
\bibfield{author}{\bibinfo{person}{James Hefford} {and} \bibinfo{person}{Aleks Kissinger}.} \bibinfo{year}{2023}\natexlab{}.
\newblock \showarticletitle{{On the Pre- and Promonoidal Structure of Spacetime}}, In \bibinfo{booktitle}{Proceedings ACT 2022}.
\newblock \bibinfo{journal}{\emph{EPTCS}}  \bibinfo{volume}{380}, \bibinfo{pages}{284--306}.
\newblock
\urldef\tempurl%
\url{https://doi.org/10.4204/EPTCS.380.17}
\showDOI{\tempurl}


\bibitem[Hefford and Rom{\'a}n(2023)]%
        {hefford_optics}
\bibfield{author}{\bibinfo{person}{James Hefford} {and} \bibinfo{person}{Mario Rom{\'a}n}.} \bibinfo{year}{2023}\natexlab{}.
\newblock \showarticletitle{Optics for Premonoidal Categories}, In \bibinfo{booktitle}{Proceedings ACT 2023}.
\newblock \bibinfo{journal}{\emph{EPTCS}}  \bibinfo{volume}{397}, \bibinfo{pages}{152--171}.
\newblock
\showISSN{2075-2180}
\urldef\tempurl%
\url{https://doi.org/10.4204/eptcs.397.10}
\showDOI{\tempurl}


\bibitem[Hermida(2000)]%
        {hermida_repmulticats}
\bibfield{author}{\bibinfo{person}{Claudio Hermida}.} \bibinfo{year}{2000}\natexlab{}.
\newblock \showarticletitle{Representable Multicategories}.
\newblock \bibinfo{journal}{\emph{Advances in Mathematics}} \bibinfo{volume}{151}, \bibinfo{number}{2} (\bibinfo{year}{2000}), \bibinfo{pages}{164--225}.
\newblock
\urldef\tempurl%
\url{https://doi.org/10.1006/aima.1999.1877}
\showDOI{\tempurl}


\bibitem[Hoffreumon and Oreshkov(2022)]%
        {hoffreumon2022projective}
\bibfield{author}{\bibinfo{person}{Timoth{\'e}e Hoffreumon} {and} \bibinfo{person}{Ognyan Oreshkov}.} \bibinfo{year}{2022}\natexlab{}.
\newblock \bibinfo{title}{Projective characterization of higher-order quantum transformations}.
\newblock
\newblock
\urldef\tempurl%
\url{https://doi.org/10.48550/arXiv.2206.06206}
\showDOI{\tempurl}


\bibitem[Kissinger and Uijlen(2019)]%
        {kissinger_caus}
\bibfield{author}{\bibinfo{person}{Aleks Kissinger} {and} \bibinfo{person}{Sander Uijlen}.} \bibinfo{year}{2019}\natexlab{}.
\newblock \showarticletitle{{A categorical semantics for causal structure}}.
\newblock \bibinfo{journal}{\emph{Logical Methods in Computer Science}}  \bibinfo{volume}{15} (\bibinfo{year}{2019}).
\newblock
Issue 3.
\showISSN{18605974}
\urldef\tempurl%
\url{https://doi.org/10.23638/LMCS-15(3:15)2019}
\showDOI{\tempurl}


\bibitem[Loregian(2021)]%
        {loregian_coend}
\bibfield{author}{\bibinfo{person}{Fosco Loregian}.} \bibinfo{year}{2021}\natexlab{}.
\newblock \bibinfo{booktitle}{\emph{(Co)end Calculus}}.
\newblock \bibinfo{publisher}{Cambridge University Press}.
\newblock
\urldef\tempurl%
\url{https://doi.org/10.1017/9781108778657}
\showDOI{\tempurl}


\bibitem[Lorenz and Barrett(2021)]%
        {lorenz_unitary}
\bibfield{author}{\bibinfo{person}{Robin Lorenz} {and} \bibinfo{person}{Jonathan Barrett}.} \bibinfo{year}{2021}\natexlab{}.
\newblock \showarticletitle{Causal and compositional structure of unitary transformations}.
\newblock \bibinfo{journal}{\emph{{Quantum}}}  \bibinfo{volume}{5} (\bibinfo{year}{2021}), \bibinfo{pages}{511}.
\newblock
\showISSN{2521-327X}
\urldef\tempurl%
\url{https://doi.org/10.22331/q-2021-07-28-511}
\showDOI{\tempurl}


\bibitem[Melli{\`e}s(2017a)]%
        {mellies_negation}
\bibfield{author}{\bibinfo{person}{Paul-Andr{\'e} Melli{\`e}s}.} \bibinfo{year}{2017}\natexlab{a}.
\newblock \showarticletitle{A micrological study of negation}.
\newblock \bibinfo{journal}{\emph{Annals of Pure and Applied Logic}} \bibinfo{volume}{168}, \bibinfo{number}{2} (\bibinfo{year}{2017}), \bibinfo{pages}{321--372}.
\newblock
\urldef\tempurl%
\url{https://doi.org/10.1016/j.apal.2016.10.008}
\showDOI{\tempurl}
\newblock
\shownote{Eighth Games for Logic and Programming Languages Workshop (GaLoP)}.


\bibitem[Melli{\`e}s(2017b)]%
        {mellies_continuation}
\bibfield{author}{\bibinfo{person}{Paul-Andr{\'e} Melli{\`e}s}.} \bibinfo{year}{2017}\natexlab{b}.
\newblock \showarticletitle{The parametric continuation monad}.
\newblock \bibinfo{journal}{\emph{Mathematical Structures in Computer Science}} \bibinfo{volume}{27}, \bibinfo{number}{5} (\bibinfo{year}{2017}), \bibinfo{pages}{651--680}.
\newblock
\urldef\tempurl%
\url{https://doi.org/10.1017/S0960129515000328}
\showDOI{\tempurl}


\bibitem[Melli{\`e}s and Tabareau(2010)]%
        {mellies_tensorlogic}
\bibfield{author}{\bibinfo{person}{Paul-Andr{\'e} Melli{\`e}s} {and} \bibinfo{person}{Nicolas Tabareau}.} \bibinfo{year}{2010}\natexlab{}.
\newblock \showarticletitle{Resource modalities in tensor logic}.
\newblock \bibinfo{journal}{\emph{Annals of Pure and Applied Logic}} \bibinfo{volume}{161}, \bibinfo{number}{5} (\bibinfo{year}{2010}), \bibinfo{pages}{632--653}.
\newblock
\urldef\tempurl%
\url{https://doi.org/10.1016/j.apal.2009.07.018}
\showDOI{\tempurl}
\newblock
\shownote{The Third workshop on Games for Logic and Programming Languages (GaLoP)}.


\bibitem[Neumann(1970)]%
        {neumann1970quasivariety}
\bibfield{author}{\bibinfo{person}{Walter Neumann}.} \bibinfo{year}{1970}\natexlab{}.
\newblock \showarticletitle{On the quasivariety of convex subsets of affine spaces}.
\newblock \bibinfo{journal}{\emph{Archiv der Mathematik}} \bibinfo{volume}{21}, \bibinfo{number}{1} (\bibinfo{year}{1970}), \bibinfo{pages}{11--16}.
\newblock


\bibitem[Oreshkov et~al\mbox{.}(2012)]%
        {oreshkov}
\bibfield{author}{\bibinfo{person}{Ognyan Oreshkov}, \bibinfo{person}{Fabio Costa}, {and} \bibinfo{person}{{\v C}aslav Brukner}.} \bibinfo{year}{2012}\natexlab{}.
\newblock \showarticletitle{Quantum correlations with no causal order}.
\newblock \bibinfo{journal}{\emph{Nature Communications}} \bibinfo{volume}{3}, \bibinfo{number}{1092} (\bibinfo{year}{2012}).
\newblock
\urldef\tempurl%
\url{https://doi.org/10.1038/ncomms2076}
\showDOI{\tempurl}


\bibitem[Pastro and Street(2008)]%
        {pastro_street}
\bibfield{author}{\bibinfo{person}{Craig Pastro} {and} \bibinfo{person}{Ross Street}.} \bibinfo{year}{2008}\natexlab{}.
\newblock \showarticletitle{Doubles for Monoidal Categories}.
\newblock \bibinfo{journal}{\emph{Theory and Applications of Categories}} \bibinfo{volume}{21}, \bibinfo{number}{4} (\bibinfo{year}{2008}), \bibinfo{pages}{61--75}.
\newblock


\bibitem[Paunkovic and Vojinovic(2023)]%
        {paunkovic2023challenges}
\bibfield{author}{\bibinfo{person}{Nikola Paunkovic} {and} \bibinfo{person}{Marko Vojinovic}.} \bibinfo{year}{2023}\natexlab{}.
\newblock \bibinfo{title}{Challenges for extensions of the process matrix formalism to quantum field theory}.
\newblock
\newblock
\urldef\tempurl%
\url{https://doi.org/10.48550/arXiv.2310.04597}
\showDOI{\tempurl}


\bibitem[Perinotti(2017)]%
        {Perinotti_2017}
\bibfield{author}{\bibinfo{person}{Paolo Perinotti}.} \bibinfo{year}{2017}\natexlab{}.
\newblock \bibinfo{booktitle}{\emph{Causal Structures and the Classification of Higher Order Quantum Computations}}.
\newblock \bibinfo{publisher}{Springer International Publishing}, \bibinfo{pages}{103--127}.
\newblock
\showISBNx{9783319686554}
\showISSN{2522-0977}
\urldef\tempurl%
\url{https://doi.org/10.1007/978-3-319-68655-4_7}
\showDOI{\tempurl}


\bibitem[Riley(2018)]%
        {riley_optics}
\bibfield{author}{\bibinfo{person}{Mitchell Riley}.} \bibinfo{year}{2018}\natexlab{}.
\newblock \bibinfo{title}{Categories of Optics}.
\newblock
\newblock
\urldef\tempurl%
\url{https://doi.org/10.48550/arXiv.1809.00738}
\showDOI{\tempurl}


\bibitem[Rom{\'a}n(2020a)]%
        {roman_comb}
\bibfield{author}{\bibinfo{person}{Mario Rom{\'a}n}.} \bibinfo{year}{2020}\natexlab{a}.
\newblock \bibinfo{title}{Comb Diagrams for Discrete-Time Feedback}.
\newblock
\newblock
\urldef\tempurl%
\url{https://doi.org/10.48550/arXiv.2003.06214}
\showDOI{\tempurl}


\bibitem[Rom{\'a}n(2020b)]%
        {roman_optics}
\bibfield{author}{\bibinfo{person}{Mario Rom{\'a}n}.} \bibinfo{year}{2020}\natexlab{b}.
\newblock \bibinfo{title}{Profunctor optics and traversals}.
\newblock
\newblock
\urldef\tempurl%
\url{https://doi.org/10.48550/arXiv.2001.08045}
\showDOI{\tempurl}


\bibitem[Rom{\'a}n(2021)]%
        {roman_coend}
\bibfield{author}{\bibinfo{person}{Mario Rom{\'a}n}.} \bibinfo{year}{2021}\natexlab{}.
\newblock \showarticletitle{Open Diagrams via Coend Calculus}, In \bibinfo{booktitle}{Proceedings ACT 2020}.
\newblock \bibinfo{journal}{\emph{EPTCS}}  \bibinfo{volume}{333}, \bibinfo{pages}{65--78}.
\newblock
\showISSN{2075-2180}
\urldef\tempurl%
\url{https://doi.org/10.4204/eptcs.333.5}
\showDOI{\tempurl}


\bibitem[Rom{\'a}n(2023)]%
        {roman_thesis}
\bibfield{author}{\bibinfo{person}{Mario Rom{\'a}n}.} \bibinfo{year}{2023}\natexlab{}.
\newblock \emph{\bibinfo{title}{Monoidal Context Theory}}.
\newblock \bibinfo{thesistype}{Ph.\,D. Dissertation}. \bibinfo{school}{Tallinn University of Technology}.
\newblock
\urldef\tempurl%
\url{https://doi.org/10.23658/taltech.54/2023}
\showDOI{\tempurl}


\bibitem[Schumacher and Westmoreland(2005)]%
        {schumacher_locality}
\bibfield{author}{\bibinfo{person}{Benjamin Schumacher} {and} \bibinfo{person}{Michael~D. Westmoreland}.} \bibinfo{year}{2005}\natexlab{}.
\newblock \showarticletitle{Locality and Information Transfer in Quantum Operations}.
\newblock \bibinfo{journal}{\emph{Quantum Inf Process}}  \bibinfo{volume}{4} (\bibinfo{year}{2005}), \bibinfo{pages}{13--34}.
\newblock
\urldef\tempurl%
\url{https://doi.org/10.1007/s11128-004-3193-y}
\showDOI{\tempurl}


\bibitem[Shapiro and Spivak(2022)]%
        {shapiro_duoidal}
\bibfield{author}{\bibinfo{person}{Brandon~T. Shapiro} {and} \bibinfo{person}{David~I. Spivak}.} \bibinfo{year}{2022}\natexlab{}.
\newblock \bibinfo{title}{Duoidal Structures for Compositional Dependence}.
\newblock
\newblock
\urldef\tempurl%
\url{https://doi.org/10.48550/ARXIV.2210.01962}
\showDOI{\tempurl}


\bibitem[Simmons and Kissinger(2022)]%
        {SimmonsKissinger2022}
\bibfield{author}{\bibinfo{person}{Will Simmons} {and} \bibinfo{person}{Aleks Kissinger}.} \bibinfo{year}{2022}\natexlab{}.
\newblock \showarticletitle{{Higher-Order Causal Theories Are Models of BV-Logic}}. In \bibinfo{booktitle}{\emph{47th International Symposium on Mathematical Foundations of Computer Science (MFCS 2022)}} \emph{(\bibinfo{series}{Leibniz International Proceedings in Informatics (LIPIcs)}, Vol.~\bibinfo{volume}{241})}, \bibfield{editor}{\bibinfo{person}{Stefan Szeider}, \bibinfo{person}{Robert Ganian}, {and} \bibinfo{person}{Alexandra Silva}} (Eds.). \bibinfo{publisher}{Schloss Dagstuhl -- Leibniz-Zentrum f{\"u}r Informatik}, \bibinfo{address}{Dagstuhl, Germany}, \bibinfo{pages}{80:1--80:14}.
\newblock
\showISBNx{978-3-95977-256-3}
\showISSN{1868-8969}
\urldef\tempurl%
\url{https://doi.org/10.4230/LIPIcs.MFCS.2022.80}
\showDOI{\tempurl}


\bibitem[Stone(1949)]%
        {stone1949postulates}
\bibfield{author}{\bibinfo{person}{Marshall~Harvey Stone}.} \bibinfo{year}{1949}\natexlab{}.
\newblock \showarticletitle{Postulates for the barycentric calculus}.
\newblock \bibinfo{journal}{\emph{Ann. Mat. Pura. Appl.}} \bibinfo{volume}{4}, \bibinfo{number}{29} (\bibinfo{year}{1949}), \bibinfo{pages}{25--30}.
\newblock


\bibitem[Tambara(2006)]%
        {tambara}
\bibfield{author}{\bibinfo{person}{Daisuke Tambara}.} \bibinfo{year}{2006}\natexlab{}.
\newblock \showarticletitle{{Distributors on a tensor category}}.
\newblock \bibinfo{journal}{\emph{Hokkaido Mathematical Journal}} \bibinfo{volume}{35}, \bibinfo{number}{2} (\bibinfo{year}{2006}), \bibinfo{pages}{379 -- 425}.
\newblock
\urldef\tempurl%
\url{https://doi.org/10.14492/hokmj/1285766362}
\showDOI{\tempurl}


\bibitem[Wechs et~al\mbox{.}(2021)]%
        {wechs}
\bibfield{author}{\bibinfo{person}{Julian Wechs}, \bibinfo{person}{Hippolyte Dourdent}, \bibinfo{person}{Alastair~A. Abbott}, {and} \bibinfo{person}{Cyril Branciard}.} \bibinfo{year}{2021}\natexlab{}.
\newblock \showarticletitle{Quantum Circuits with Classical Versus Quantum Control of Causal Order}.
\newblock \bibinfo{journal}{\emph{PRX Quantum}}  \bibinfo{volume}{2} (\bibinfo{year}{2021}), \bibinfo{pages}{030335}.
\newblock
Issue 3.
\urldef\tempurl%
\url{https://doi.org/10.1103/PRXQuantum.2.030335}
\showDOI{\tempurl}


\bibitem[Wilson and Chiribella(2021)]%
        {Wilson_causal}
\bibfield{author}{\bibinfo{person}{Matt Wilson} {and} \bibinfo{person}{Giulio Chiribella}.} \bibinfo{year}{2021}\natexlab{}.
\newblock \showarticletitle{Causality in Higher Order Process Theories}, In \bibinfo{booktitle}{Proceedings QPL 2021}.
\newblock \bibinfo{journal}{\emph{EPTCS}}  \bibinfo{volume}{343}, \bibinfo{pages}{265--300}.
\newblock
\showISSN{2075-2180}
\urldef\tempurl%
\url{https://doi.org/10.4204/eptcs.343.12}
\showDOI{\tempurl}


\bibitem[Wilson and Chiribella(2022)]%
        {wilson_polycategories}
\bibfield{author}{\bibinfo{person}{Matt Wilson} {and} \bibinfo{person}{Giulio Chiribella}.} \bibinfo{year}{2022}\natexlab{}.
\newblock \bibinfo{title}{Free Polycategories for Unitary Supermaps of Arbitrary Dimension}.
\newblock
\newblock
\urldef\tempurl%
\url{https://doi.org/10.48550/ARXIV.2207.09180}
\showDOI{\tempurl}


\bibitem[Wilson and Chiribella(2023)]%
        {wilson2023mathematical}
\bibfield{author}{\bibinfo{person}{Matt Wilson} {and} \bibinfo{person}{Giulio Chiribella}.} \bibinfo{year}{2023}\natexlab{}.
\newblock \bibinfo{title}{A Mathematical Framework for Transformations of Physical Processes}.
\newblock
\newblock
\urldef\tempurl%
\url{https://doi.org/10.48550/arXiv.2204.04319}
\showDOI{\tempurl}


\bibitem[Wilson et~al\mbox{.}(2022)]%
        {wilson_locality}
\bibfield{author}{\bibinfo{person}{Matt Wilson}, \bibinfo{person}{Giulio Chiribella}, {and} \bibinfo{person}{Aleks Kissinger}.} \bibinfo{year}{2022}\natexlab{}.
\newblock \bibinfo{title}{Quantum Supermaps are Characterized by Locality}.
\newblock
\newblock
\urldef\tempurl%
\url{https://doi.org/10.48550/ARXIV.2205.09844}
\showDOI{\tempurl}


\bibitem[Zhao et~al\mbox{.}(2020)]%
        {metrology_switch}
\bibfield{author}{\bibinfo{person}{Xiaobin Zhao}, \bibinfo{person}{Yuxiang Yang}, {and} \bibinfo{person}{Giulio Chiribella}.} \bibinfo{year}{2020}\natexlab{}.
\newblock \showarticletitle{Quantum Metrology with Indefinite Causal Order}.
\newblock \bibinfo{journal}{\emph{Phys. Rev. Lett.}}  \bibinfo{volume}{124} (\bibinfo{year}{2020}), \bibinfo{pages}{190503}.
\newblock
Issue 19.
\urldef\tempurl%
\url{https://doi.org/10.1103/PhysRevLett.124.190503}
\showDOI{\tempurl}


\end{thebibliography}

\appendix

\section{Proofs}
\subsection{Proof of Theorem \ref{thm:cptp_equiv}}\label{sec:cptp_proof}
\begin{theorem*}(Restated for clarity).
	There is a symmetric monoidal equivalence of categories $\comb(\cptp)\cong\opt(\cptp)$.
\end{theorem*}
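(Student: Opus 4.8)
The plan is to study the canonical comparison functor $\opt(\cptp)\morph{}\comb(\cptp)$ whose existence was already observed above: it is the identity on objects, and it is full because the comb equivalence relation is coarser than the optic one, so every comb class is the image of the optic class of any of its representatives. Since this functor is bijective on objects and surjective on each hom-set, upgrading it to an isomorphism (hence to the claimed symmetric monoidal equivalence) reduces entirely to \emph{faithfulness}: one must show that two optic representatives $(f,g)$ and $(f',g')$ of morphisms $(a,a')\morph{}(b,b')$ — with $f:b\morph{}x\otimes a$, $g:x\otimes a'\morph{}b'$ and $f':b\morph{}x'\otimes a$, $g':x'\otimes a'\morph{}b'$ — that agree as combs on all extended inputs are already identified by the sliding equivalence \eqref{eq:sliding}.

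To attack faithfulness I would first reduce representatives to a purified normal form. By Stinespring dilation each CPTP half of a comb factors as an isometry followed by a partial trace over a fresh ancilla, and the sliding equations of \eqref{eq:sliding} let one transport any CPTP map along the residual wire $x$ from one half to the other. Hence every representative is optic-equivalent to one whose residual has been chosen minimal (least Stinespring dimension). The crux is then a uniqueness statement: two \emph{minimal} purified representatives inducing the same channel-with-a-hole can differ only by a unitary $V$ acting on the residual wire, just as in the uniqueness clause of the Stinespring theorem. Such a $V$ is a morphism of $\cptp$ on the residual wire, and transporting it across that wire is exactly a sliding move, so the two minimal representatives are optic-equivalent and therefore so are the originals.

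The main obstacle is making this uniqueness rigorous in the memory-carrying comb setting rather than for a single CP map: extensional (comb) equality must be shown to force the two minimal dilations into a common form related by a \emph{slideable} isometry, and one must check that the required reshuffling genuinely lies inside the optic equivalence relation and does not secretly demand a coarser move. This is precisely where I would invoke Lemma~8 of \cite{carette_delayedtrace} on delayed traces, which supplies the normal-form-plus-uniqueness package for circuits with a deferred discard; combined with the Choi--Jamio\l{}kowski correspondence, which converts ``equal on all extended inputs'' into an equality of concrete operators, it pins down the connecting isometry and closes the faithfulness argument. Finally, the induced bijection on hom-sets respects the monoidal data on the nose — it sends juxtaposition of optics to juxtaposition of combs, fixes the unit $(i,i)$, and commutes with the symmetry inherited from $\cptp$ — so the bijective, identity-on-objects functor is strictly symmetric monoidal, yielding the symmetric monoidal isomorphism $\comb(\cptp)\cong\opt(\cptp)$ (its inverse being automatically symmetric monoidal).
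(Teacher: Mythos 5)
Your high-level scaffolding coincides with the paper's: both arguments reduce the statement to faithfulness of the identity-on-objects full functor $\opt(\cptp)\morph{}\comb(\cptp)$, both purify the comb halves via Stinespring, both end with a chain of sliding moves of the form \eqref{eq:sliding}, and both cite Lemma~8 of \cite{carette_delayedtrace}. But there is a genuine gap at the centre of your argument. Identifying the two \emph{bottom} halves is only half of faithfulness: even granting that the minimal purifications $p_{\min}$ and $p'_{\min}$ of the common marginal differ by a unitary $V$ on the residual wire, sliding $V$ across produces two optic representatives $(p_{\min},\,g)$ and $(p_{\min},\,g'\circ(V\otimes 1))$ with equal bottom halves and \emph{a priori unrelated top halves}. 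Nothing in your sketch forces $g = g'\circ(V\otimes 1)$; extensional comb equality only yields agreement of the top halves on (roughly) the range of the bottom half, and converting that partial agreement into an honest sliding chain is exactly the hard step. Moreover, the lemma you invoke to close this hole does not say what you claim: Lemma~8 of \cite{carette_delayedtrace} provides \emph{shadows} of pure maps --- for a pure $p$ an idempotent CPTP map $\pi$ on the dilating wire with the absorption properties the paper records --- not a ``normal-form-plus-uniqueness package for circuits with a deferred discard.'' The uniqueness-up-to-slideable-unitary of minimal comb realisations, together with the induced identification of top halves, is a genuine theorem about combs (not a corollary of single-channel Stinespring uniqueness), and your proposal leaves it unproven while outsourcing it to a lemma that cannot supply it.

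It is instructive to see how the paper routes around precisely this obstruction. It never introduces minimality: instead it inserts the \emph{swap} as the extended input, turning ``equal on all extended inputs'' into one concrete operator equation; discarding an output and invoking the Stinespring comparison theorem then yields a single isometry $v$ (not necessarily unitary) with $(v\otimes 1)p = p'$; and the shadow idempotent $\pi$ of $p$ is what bridges the top halves, since $\pi$ can be introduced for free (it absorbs into $p$), witnesses that $q$ and $q'\circ(v\otimes 1)$ agree after precomposition with $\pi\otimes 1$, and --- being CPTP --- can itself be slid. That is, the shadow lemma is doing in the paper exactly the job your unproven uniqueness claim is supposed to do, but with weaker, provable inputs. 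A secondary weakness of your sketch is that the appeal to Choi--Jamio\l{}kowski is never cashed out; the swap insertion plays that role in the paper and has the advantage of living directly inside $\cptp$ as a symmetric monoidal category.
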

\begin{xlrbox}{cptp_proof1}
	\begin{tikzpicture}[baseline=(current bounding box.center)]
		\node[pants,bot,top] (p) {};
		\node[tube,bot,anchor=top] (t) at (p.leftleg) {};
		\node[copants,bot,anchor=leftleg] (c) at (t.bot) {};
		\node[end] at (c.rightleg) {};
		\begin{pgfonlayer}{nodelayer}
			\node[label] (g) at (p.center) {$g'$};
			\node[label] (f) at (c.center) {$f'$};
		\end{pgfonlayer}
		\begin{pgfonlayer}{strings}
			\draw (g) to[out=-135,in=90] (p.leftleg) {};
			\draw (g) to[out=-45,in=90] (p.rightleg) {};
			\draw (f) to[out=135,in=-90] (c.leftleg) {};
			\draw (f) to[out=45,in=-90] (c.rightleg) {};
			\draw (g) to (p.belt) {};
			\draw (f) to (c.belt) {};
			\draw (t.top) to (t.bot) {};
		\end{pgfonlayer}
	\end{tikzpicture}
\end{xlrbox}

\begin{xlrbox}{cptp_proof2}
	\begin{tikzpicture}[baseline=(current bounding box.center)]
		\node[pants,bot,top,widebelt] (p) {};
		\node[tube,bot,anchor=top] (t) at (p.leftleg) {};
		\node[copants,bot,widebelt,anchor=leftleg] (c) at (t.bot) {};
		\node[end] at (c.rightleg) {};
		\begin{pgfonlayer}{nodelayer}
			\node[label] (g) at (p.center) {$q'$};
			\node[label] (f) at ([yshift=-2]c.center) {$p'$};
			\node[smalldiscarding,scale=0.4] (d1) at ([xshift=-8,yshift=6]f) {};
			\node[smalldiscarding,scale=0.4] (d2) at ([xshift=-8,yshift=6]g) {};
		\end{pgfonlayer}
		\begin{pgfonlayer}{strings}
			\draw (g) to[out=-135,in=90] (p.leftleg) {};
			\draw (g) to[out=-45,in=90] (p.rightleg) {};
			\draw (f) to[out=90,in=-90] (c.leftleg) {};
			\draw (f) to[out=45,in=-90] (c.rightleg) {};
			\draw (f) to[out=180,in=-90] (d1.west) {};
			\draw (g) to[out=180,in=-90] (d2.west) {};
			\draw (g) to (p.belt) {};
			\draw (f) to (c.belt) {};
			\draw (t.top) to (t.bot) {};
		\end{pgfonlayer}
	\end{tikzpicture}
\end{xlrbox}

\begin{xlrbox}{cptp_proof3}
	\begin{tikzpicture}[baseline=(current bounding box.center)]
		\node[pants,bot,top,widebelt] (p) {};
		\node[tube,bot,anchor=top] (t) at (p.leftleg) {};
		\node[copants,bot,widebelt,anchor=leftleg] (c) at (t.bot) {};
		\node[end] at (c.rightleg) {};
		\begin{pgfonlayer}{nodelayer}
			\node[label] (g) at (p.center) {$q'$};
			\node[label] (f) at ([yshift=-2]c.center) {$p$};
			\node[label,inner sep=0.06cm] (v) at ([xshift=-8,yshift=6]f) {$v$};
			\node[smalldiscarding,scale=0.4] (d1) at ([xshift=-8,yshift=5]v) {};
			\node[smalldiscarding,scale=0.4] (d2) at ([xshift=-8,yshift=6]g) {};
		\end{pgfonlayer}
		\begin{pgfonlayer}{strings}
			\draw (g) to[out=-135,in=90] (p.leftleg) {};
			\draw (g) to[out=-45,in=90] (p.rightleg) {};
			\draw (f) to[out=90,in=0] (v) {};
			\draw (f) to[out=180,in=-90] (v) {};
			\draw (f) to[out=45,in=-90] (c.rightleg) {};
			\draw (v) to[out=180,in=-90] (d1.west) {};
			\draw (g) to[out=180,in=-90] (d2.west) {};
			\draw (g) to (p.belt) {};
			\draw (f) to (c.belt) {};
			\draw (v) to[out=90,in=-90] (c.leftleg) to[out=90,in=-90] (t.top) {};
		\end{pgfonlayer}
	\end{tikzpicture}
\end{xlrbox}

\begin{xlrbox}{cptp_proof4}
	\begin{tikzpicture}[baseline=(current bounding box.center)]
		\node[pants,bot,top,widebelt] (p) {};
		\node[tube,bot,anchor=top] (t) at (p.leftleg) {};
		\node[copants,bot,widebelt,anchor=leftleg] (c) at (t.bot) {};
		\node[end] at (c.rightleg) {};
		\begin{pgfonlayer}{nodelayer}
			\node[label] (g) at ([yshift=2]p.center) {$q'$};
			\node[label] (f) at ([yshift=-2]c.center) {$p$};
			\node[label,inner sep=0.06cm] (v) at ([yshift=4]p.leftleg) {$v$};
			\node[smalldiscarding,scale=0.4] (d1) at ([xshift=1,yshift=6]v) {};
			\node[smalldiscarding,scale=0.4] (d2) at ([xshift=-8,yshift=6]g) {};
		\end{pgfonlayer}
		\begin{pgfonlayer}{strings}
			\draw (g) to[out=-45,in=90] (p.rightleg) {};
			\draw (f) to[out=45,in=-90] (c.rightleg) {};
			\draw (f) to[out=90,in=-90] ([xshift=2]c.leftleg) to ([xshift=2]p.leftleg) to (v) {};
			\draw (f) to[out=180,in=-90] ([xshift=-2]c.leftleg) to ([xshift=-2]p.leftleg) to (v) {};
			\draw (v) to[out=90,in=-90] (d1.west) {};
			\draw (g) to[out=180,in=-90] (d2.west) {};
			\draw (g) to (p.belt) {};
			\draw (f) to (c.belt) {};
			\draw (v) to[out=45,in=-135] (g) {};
		\end{pgfonlayer}
	\end{tikzpicture}
\end{xlrbox}

\begin{xlrbox}{cptp_proof5}
	\begin{tikzpicture}[baseline=(current bounding box.center)]
		\node[pants,bot,top,widebelt] (p) {};
		\node[tube,bot,anchor=top] (t) at (p.leftleg) {};
		\node[copants,bot,widebelt,anchor=leftleg] (c) at (t.bot) {};
		\node[end] at (c.rightleg) {};
		\begin{pgfonlayer}{nodelayer}
			\node[label] (g) at ([yshift=2]p.center) {$q'$};
			\node[label] (f) at ([yshift=-2]c.center) {$p$};
			\node[label,inner sep=0.06cm] (v) at ([yshift=4]p.leftleg) {$v$};
			\node[label,inner sep=0.06cm] (pi) at ([xshift=-8,yshift=6]f) {$\pi$};
			\node[smalldiscarding,scale=0.4] (d1) at ([xshift=1,yshift=6]v) {};
			\node[smalldiscarding,scale=0.4] (d2) at ([xshift=-8,yshift=6]g) {};
		\end{pgfonlayer}
		\begin{pgfonlayer}{strings}
			\draw (g) to[out=-45,in=90] (p.rightleg) {};
			\draw (f) to[out=45,in=-90] (c.rightleg) {};
			\draw (pi) to[out=135,in=-90] ([xshift=2]c.leftleg) to ([xshift=2]p.leftleg) to (v) {};
			\draw (pi) to[out=180,in=-90] ([xshift=-2]c.leftleg) to ([xshift=-2]p.leftleg) to (v) {};
			\draw (v) to[out=90,in=-90] (d1.west) {};
			\draw (g) to[out=180,in=-90] (d2.west) {};
			\draw (g) to (p.belt) {};
			\draw (f) to (c.belt) {};
			\draw (v) to[out=45,in=-135] (g) {};
			\draw (f) to[out=90,in=0] (pi) {};
			\draw (f) to[out=180,in=-90] (pi) {};
		\end{pgfonlayer}
	\end{tikzpicture}
\end{xlrbox}

\begin{xlrbox}{cptp_proof6}
	\begin{tikzpicture}[baseline=(current bounding box.center)]
		\node[pants,top,widebelt] (p) {};
		\node[tube,bot,anchor=top] (t) at (p.leftleg) {};
		\node[copants,bot,widebelt,anchor=leftleg] (c) at (t.bot) {};
		\node[end] at (c.rightleg) {};
		\node[end_dot] at (p.rightleg) {};
		\node[end_dot] (e) at ([yshift=-6]p.leftleg) {};
		\begin{pgfonlayer}{nodelayer}
			\node[label] (g) at ([yshift=2]p.center) {$q'$};
			\node[label] (f) at ([yshift=-2]c.center) {$p$};
			\node[label,inner sep=0.06cm] (v) at ([yshift=4]p.leftleg) {$v$};
			\node[label,inner sep=0.06cm] (pi) at (e) {$\pi$};
			\node[smalldiscarding,scale=0.4] (d1) at ([xshift=1,yshift=6]v) {};
			\node[smalldiscarding,scale=0.4] (d2) at ([xshift=-8,yshift=6]g) {};
		\end{pgfonlayer}
		\begin{pgfonlayer}{strings}
			\draw (g) to[out=-45,in=90] (p.rightleg) {};
			\draw (f) to[out=45,in=-90] (c.rightleg) {};
			\draw (f) to[out=135,in=-90] ([xshift=2]c.leftleg) to ([xshift=2]e) to (pi) {};
			\draw (f) to[out=180,in=-90] ([xshift=-2]c.leftleg) to ([xshift=-2]e) to (pi) {};
			\draw (v) to[out=90,in=-90] (d1.west) {};
			\draw (g) to[out=180,in=-90] (d2.west) {};
			\draw (g) to (p.belt) {};
			\draw (f) to (c.belt) {};
			\draw (v) to[out=45,in=-135] (g) {};
			\draw (pi) to[out=135,in=-135] (v) {};
			\draw (pi) to[out=45,in=-45] (v) {};
		\end{pgfonlayer}
	\end{tikzpicture}
\end{xlrbox}

\begin{xlrbox}{cptp_proof7}
	\begin{tikzpicture}[baseline=(current bounding box.center)]
		\node[pants,bot,top,widebelt] (p) {};
		\node[tube,bot,anchor=top] (t) at (p.leftleg) {};
		\node[copants,bot,widebelt,anchor=leftleg] (c) at (t.bot) {};
		\node[end] at (c.rightleg) {};
		\begin{pgfonlayer}{nodelayer}
			\node[label] (g) at ([yshift=2]p.center) {$q$};
			\node[label] (f) at ([yshift=-2]c.center) {$p$};
			\node[label,inner sep=0.06cm] (v) at ([yshift=4]p.leftleg) {$\pi$};
			\node[smalldiscarding,scale=0.4] (d1) at ([xshift=1,yshift=6]v) {};
			\node[smalldiscarding,scale=0.4] (d2) at ([xshift=-8,yshift=6]g) {};
		\end{pgfonlayer}
		\begin{pgfonlayer}{strings}
			\draw (g) to[out=-45,in=90] (p.rightleg) {};
			\draw (f) to[out=45,in=-90] (c.rightleg) {};
			\draw (f) to[out=90,in=-90] ([xshift=2]c.leftleg) to ([xshift=2]p.leftleg) to (v) {};
			\draw (f) to[out=180,in=-90] ([xshift=-2]c.leftleg) to ([xshift=-2]p.leftleg) to (v) {};
			\draw (v) to[out=90,in=-90] (d1.west) {};
			\draw (g) to[out=180,in=-90] (d2.west) {};
			\draw (g) to (p.belt) {};
			\draw (f) to (c.belt) {};
			\draw (v) to[out=45,in=-135] (g) {};
		\end{pgfonlayer}
	\end{tikzpicture}
\end{xlrbox}

\begin{xlrbox}{cptp_proof8}
	\begin{tikzpicture}[baseline=(current bounding box.center)]
		\node[pants,bot,top,widebelt] (p) {};
		\node[tube,bot,anchor=top] (t) at (p.leftleg) {};
		\node[copants,bot,widebelt,anchor=leftleg] (c) at (t.bot) {};
		\node[end] at (c.rightleg) {};
		\begin{pgfonlayer}{nodelayer}
			\node[label] (g) at (p.center) {$q$};
			\node[label] (f) at ([yshift=-2]c.center) {$p$};
			\node[label,inner sep=0.06cm] (v) at ([xshift=-8,yshift=6]f) {$\pi$};
			\node[smalldiscarding,scale=0.4] (d1) at ([xshift=-8,yshift=5]v) {};
			\node[smalldiscarding,scale=0.4] (d2) at ([xshift=-8,yshift=6]g) {};
		\end{pgfonlayer}
		\begin{pgfonlayer}{strings}
			\draw (g) to[out=-135,in=90] (p.leftleg) {};
			\draw (g) to[out=-45,in=90] (p.rightleg) {};
			\draw (f) to[out=90,in=0] (v) {};
			\draw (f) to[out=180,in=-90] (v) {};
			\draw (f) to[out=45,in=-90] (c.rightleg) {};
			\draw (v) to[out=180,in=-90] (d1.west) {};
			\draw (g) to[out=180,in=-90] (d2.west) {};
			\draw (g) to (p.belt) {};
			\draw (f) to (c.belt) {};
			\draw (v) to[out=90,in=-90] (c.leftleg) to[out=90,in=-90] (t.top) {};
		\end{pgfonlayer}
	\end{tikzpicture}
\end{xlrbox}

\begin{xlrbox}{cptp_proof9}
	\begin{tikzpicture}[baseline=(current bounding box.center)]
		\node[pants,bot,top,widebelt] (p) {};
		\node[tube,bot,anchor=top] (t) at (p.leftleg) {};
		\node[copants,bot,widebelt,anchor=leftleg] (c) at (t.bot) {};
		\node[end] at (c.rightleg) {};
		\begin{pgfonlayer}{nodelayer}
			\node[label] (g) at (p.center) {$q$};
			\node[label] (f) at ([yshift=-2]c.center) {$p$};
			\node[smalldiscarding,scale=0.4] (d1) at ([xshift=-8,yshift=6]f) {};
			\node[smalldiscarding,scale=0.4] (d2) at ([xshift=-8,yshift=6]g) {};
		\end{pgfonlayer}
		\begin{pgfonlayer}{strings}
			\draw (g) to[out=-135,in=90] (p.leftleg) {};
			\draw (g) to[out=-45,in=90] (p.rightleg) {};
			\draw (f) to[out=90,in=-90] (c.leftleg) {};
			\draw (f) to[out=45,in=-90] (c.rightleg) {};
			\draw (f) to[out=180,in=-90] (d1.west) {};
			\draw (g) to[out=180,in=-90] (d2.west) {};
			\draw (g) to (p.belt) {};
			\draw (f) to (c.belt) {};
			\draw (t.top) to (t.bot) {};
		\end{pgfonlayer}
	\end{tikzpicture}
\end{xlrbox}

\begin{xlrbox}{cptp_proof10}
	\begin{tikzpicture}[baseline=(current bounding box.center)]
		\node[pants,bot,top] (p) {};
		\node[tube,bot,anchor=top] (t) at (p.leftleg) {};
		\node[copants,bot,anchor=leftleg] (c) at (t.bot) {};
		\node[end] at (c.rightleg) {};
		\begin{pgfonlayer}{nodelayer}
			\node[label] (g) at (p.center) {$g$};
			\node[label] (f) at (c.center) {$f$};
		\end{pgfonlayer}
		\begin{pgfonlayer}{strings}
			\draw (g) to[out=-135,in=90] (p.leftleg) {};
			\draw (g) to[out=-45,in=90] (p.rightleg) {};
			\draw (f) to[out=135,in=-90] (c.leftleg) {};
			\draw (f) to[out=45,in=-90] (c.rightleg) {};
			\draw (g) to (p.belt) {};
			\draw (f) to (c.belt) {};
			\draw (t.top) to (t.bot) {};
		\end{pgfonlayer}
	\end{tikzpicture}
\end{xlrbox}

\begin{proof}
	Suppose that $(f,g)\sim(f',g')$ in $\comb(\cptp)$.
	The maps $f,f',g$ and $g'$ each possess purifications say $p,p',q$ and $q'$ respectively.
	Since $(f,g)\sim(f',g')$, we know that the combs are equal on the swap.

	\begin{equation}\label{eq:cptp_proof1}
		\tikzfigscale{1}{figs/cptp_proof1}
	\end{equation}

	Upon discarding the left-hand output and noting that $q$ and $q'$ are causal, we find that 

	\begin{equation}\label{eq:cptp_proof2}
		\tikzfigscale{1}{figs/cptp_proof2}
	\end{equation}

	By the Stinespring dilation theorem, this means that either there exists an isometry $v$ on the dilating space such that $(v\otimes 1)p=p'$ or an isometry $v'$ such that $p=(v'\otimes 1)p'$.
	Without loss of generality we assume the former.

	Thus we now know that the following two maps are equal.

	\begin{equation}\label{eq:cptp_proof3}
		\tikzfigscale{1}{figs/cptp_proof3}
	\end{equation}

	We can invoke Lemma 8 of \cite{carette_delayedtrace} which demonstrates that all pure maps of $\cptp$ have \textit{shadows}.
	Since $p$ is pure, this means there exists an idempotent CPTP map $\pi$ such that 

	\begin{equation}\label{eq:cptp_proof4}
		\tikzfigscale{1}{figs/cptp_proof4} \qquad \text{and} \qquad \tikzfigscale{1}{figs/cptp_proof5}
	\end{equation}

	Finally we can demonstrate that $(f,g)\sim(f',g')$ in $\opt(\cptp)$.

	\begin{equation*}
		\xusebox{cptp_proof1} = \xusebox{cptp_proof2} = \xusebox{cptp_proof3} \sim \xusebox{cptp_proof4}
	\end{equation*}
	\begin{equation*}
		= \xusebox{cptp_proof5} \sim \xusebox{cptp_proof6} = \xusebox{cptp_proof7} \sim \xusebox{cptp_proof8} 
	\end{equation*}
	\begin{equation*}
		= \xusebox{cptp_proof9} = \xusebox{cptp_proof10}
	\end{equation*}
\end{proof}

\subsection{Proof of Theorem \ref{thm:cptp_equiv2}}\label{sec:cptp_proof2}
\begin{theorem*}(Restated for clarity).
	There is an isomorphism of produoidal categories $\comb(\cptp)\cong\opt(\cptp)$.
\end{theorem*}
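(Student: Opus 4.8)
The plan is to promote the symmetric monoidal isomorphism $F:\opt(\cptp)\to\comb(\cptp)$ of Theorem~\ref{thm:cptp_equiv}, which is the identity on objects and a bijection on hom-sets, to an isomorphism of produoidal categories. A produoidal structure consists of two promonoidal structures together with duoidal interchange data, so it suffices to exhibit, compatibly with $F$, isomorphisms of the two tensor profunctors and their units, and then check these are compatible with the associators, unitors, and interchangers. Since $F$ is the identity on objects, each such comparison is between two quotients of the \emph{same} underlying data (strings of $\cptp$-maps forming a multi-holed comb), so the whole statement reduces to showing that, at every arity, the optic-style quotient and the comb-style quotient coincide.

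First I would dispense with the separable promonoidal structure. This structure is induced by the monoidal product $\otimes$ of $\opt(\cptp)$ (resp.\ $\comb(\cptp)$) via Day convolution and is therefore representable by that product; its comparison, and that of the shared unit $(i,i)$, is already supplied by the fact that $F$ is a symmetric \emph{monoidal} isomorphism. The genuinely new content is the sequential promonoidal tensor $\seq$, whose value $\seq(\vb*{a}_1,\dots,\vb*{a}_n;\vb*{c})$ is the set of $n$-holed combs from $\vb*{a}_1,\dots,\vb*{a}_n$ to $\vb*{c}$. Here the optic and comb descriptions share exactly the same representatives and differ only in their equivalence relation, so what must be proved is the key lemma: in $\cptp$, two $n$-holed combs are equal on all tuples of fillings (the comb relation) if and only if they are related by sliding maps along the internal wires (the optic relation).

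The core of the argument is to extend the single-hole computation of Appendix~\ref{sec:cptp_proof} wire-by-wire, by induction on the number $n$ of holes. Given two operationally-equal $n$-holed combs, I would fix a generic filling of all but one hole so as to present the remaining data as a single-holed comb in one internal wire (with the filled tail absorbed into the output-processing half), and then apply the single-hole machinery verbatim: purify the two halves, discard and use causality of the purifications to compare them, invoke Stinespring dilation to obtain a sliding isometry $v$ on that wire, and use the shadow lemma (Lemma~8 of \cite{carette_delayedtrace}) to produce the idempotent $\pi$ realising the slide as a genuine optic move. Repeating wire-by-wire and recombining yields the optic equivalence of the two $n$-holed combs. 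Once the bijections of tensor profunctors are in hand, the coherence conditions are automatic: the associators, unitors, and the duoidal interchanger are given by identical comb manipulations on representatives in both categories, so $F$ transports them strictly.

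The main obstacle I anticipate is the uniformity of the single-hole step across wires. Operational equality of $n$-holed combs is a statement about all fillings simultaneously, whereas the single-hole argument is applied at one wire with the others filled; I must ensure that the sliding isometry and shadow idempotent extracted at each internal wire can be chosen compatibly, and naturally in the environment systems, so that the separate slides compose into one optic equivalence rather than merely a list of hole-local equalities. Checking that the dilation data obtained at a given wire does not depend on the auxiliary fillings used to isolate it—so that the slides genuinely glue—is where the real work lies; the remaining verifications of promonoidal naturality and duoidal coherence are then routine.
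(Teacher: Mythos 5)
Your framing of the problem is the same as the paper's: since both categories share objects and the separable (Day/$\otimes$) structure is already handled by the symmetric monoidal isomorphism of Theorem~\ref{thm:cptp_equiv}, everything reduces to showing that the $n$-holed comb relation and the $n$-holed optic relation coincide for every $n$. But your inductive step is genuinely different from the paper's, and the obstacle you flag at the end is not a residual verification --- it is the entire content of the theorem, and your plan does not contain an idea for overcoming it. Filling all but one hole with a generic filling $\Phi$ and running the single-hole machinery on the resulting one-holed comb produces slide data (purifications of the two halves, a Stinespring isometry $v$, a shadow idempotent $\pi$) that is extracted from composites into which $\Phi$ has been absorbed; it is therefore $\Phi$-dependent. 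What you end up with is, for each filling, a filling-dependent chain of slides relating the two \emph{filled} combs. Passing from this pointwise, per-filling data to a single uniform chain of slides on the \emph{unfilled} $n$-holed comb is a quantifier-exchange problem of exactly the same character as the comb-versus-optic question you started with, so the induction as you set it up makes no real progress.

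The paper's proof is structured precisely to avoid this dependence. It never uses generic fillings: it inserts swaps into \emph{all} $n$ holes simultaneously (a canonical, information-preserving filling), then discards the left-most $n$ outputs so that the resulting comparison is between the purifications $p_0$ and $p_0'$ of the actual bottom maps $f_0$ and $f_0'$ alone, with no auxiliary data entering. Stinespring then gives one isometry $v$ with $(v\otimes 1)p_0=p_0'$, the shadow lemma gives one idempotent $\pi$, and (using the faithful embedding of $\cptp$ into the compact closed category of all CP maps) peeling off the bottom tooth reduces the operational equivalence of the two $n$-holed combs to the operational equivalence of the $(n-1)$-holed combs $(p_1\circ(\pi\otimes 1),p_2,\dots,p_n)\sim(p_1'\circ(v\pi\otimes 1),p_2',\dots,p_n')$. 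The induction is thus on stripping the bottom map of the comb --- with the inductive hypothesis applied to \emph{comb} equivalence at arity $n-1$ --- rather than on isolating holes one at a time; the final optic equivalence is then assembled from the slides of $v$ and $\pi$ together with the optic equivalence supplied by the inductive hypothesis. If you want to rescue your wire-by-wire plan, you would have to prove exactly the independence-of-fillings statement you defer, and it is unclear how to do so without reorganising the argument into something like the paper's.
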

\begin{proof}
	Since the categories $\comb(\cat{C})$ and $\opt(\cat{C})$ are quite so similar, the main thing we need to prove is that the spaces of $n$-holed combs and $n$-holed optics are isomorphic for $\cptp$.
	This will then establish that the promonoidal structures are isomorphic.

	We proceed by induction.
	We already know from Theorem \ref{thm:cptp_equiv} that the one-holed combs and optics are isomorphic.
	This establishes the base case.

	Now suppose that equivalence of $(n-1)$-holed combs implies equivalence of $(n-1)$-holed optics.
	Consider two equivalent $n$-holed combs $(f_0,f_1,\dots,f_n)\sim (f'_0,f'_1,\dots,f'_n)$.
	By essentially the same argument as for Theorem \ref{thm:cptp_equiv} we will reduce this down to an equivalence of $(n-1)$-combs, thereby establishing the result.

	Purifying each $f_i$ and $f_i'$ to $p_i$ and $p_i'$, and inserting swaps into the $n$-combs gives an equality analogous to equation \eqref{eq:cptp_proof1}.
	Upon discarding the left-most $n$ outputs one finds an analogous equation to \eqref{eq:cptp_proof2} showing that the bottom maps of the combs $p_0$ and $p_0'$ are equal up to discarding their left-hand output.
	By Stinespring, they are therefore equal up to an isometry $v$ on the environment and we retrieve an equation analogous to \eqref{eq:cptp_proof3}.

	Lemma 8 of \cite{carette_delayedtrace} allows us to do a similar trick to remove the map $p_0$ at the bottom of the combs in exchange for a CPTP map $\pi$, much like \eqref{eq:cptp_proof4}.
	Since $\cptp$ embeds faithfully into the category of all CP maps which is compact closed, the analogous equation to \eqref{eq:cptp_proof4} implies that the following $(n-1)$-combs are equivalent 
	\begin{equation}\label{eq:cptp2_proof}
		(p_1\circ (\pi\otimes 1), p_2,\dots,p_n) \sim (p'_1\circ (v\pi\otimes 1), p'_2,\dots,p'_n)
	\end{equation}

	A similar argument to the final part of the proof of Theorem \ref{thm:cptp_equiv} completes the proof.
	Starting with the optic $(f'_0,f'_1,\dots,f'_n)$ one rewrites this using the purifications $(p'_0, p'_1,\dots,p'_n)$.
	One can then introduce the maps $v$ and $\pi$ thereby converting $p_0'$ into $p_0$.
	Upon sliding $v$ and $\pi$ using the coend equivalence relation, one obtains the optic $(p'_1\circ (v\pi\otimes 1), p'_2,\dots,p'_n)$ which is equivalent to the optic $(p_1\circ (\pi\otimes 1), p_2,\dots,p_n)$ by equation \eqref{eq:cptp2_proof} and the inductive hypothesis.
\end{proof}

\subsection{Proof of Theorem \ref{thm:separable_supermaps}}\label{sec:separable_supermaps_proof}
\begin{theorem*}(Restated for clarity).
	On any symmetric monoidal category $\cat{C}$, the strong natural transformations of type 
	\begin{equation*}
		\eta: \modtensor^{\hspace{-0.7em}n}_{\hspace{-0.7em}i=1} \suphom{a_i}{a_i'} \rightarrow \suphom{b}{b'}
	\end{equation*}
	are the multi-partite locally-applicable transformations of type $\eta:\vb*{a}_1,\dots,\vb*{a}_n\morph{}\vb*{b}$.
\end{theorem*}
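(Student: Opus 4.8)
The plan is to identify morphisms out of the separable tensor $\otimes_\cat{C}$ by exploiting the universal property of Day convolution, and then to read off the resulting data and side-conditions as exactly the definition of a multi-partite locally-applicable transformation. Since Definition~\ref{defn:basic_supermap} already settles the case $n=1$, the real content lies in understanding how $\otimes_\cat{C}$ packages the $n$ holes together, and the single-party argument (``natural'' unpacks to the naturality law, ``strong'' unpacks to the strength law) serves as the template to be lifted to the multilinear setting.

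First I would recall that $\modtensor_i \suphom{a_i}{a_i'}$ is the Day convolution over the monoidal structure of $\opt(\cat{C})$, equivalently a quotient of the Day convolution over $\opcat{\cat{C}}\times\cat{C}$ that coequalises the adjacent strengths. Writing this tensor in its coend presentation and using the universal property of the coend together with the (co)Yoneda lemma to absorb the outer representable factors $\cat{C}(-,\bigotimes_i x_i)$ and $\cat{C}(\bigotimes_i x_i',\bl)$, a strong natural transformation $\modtensor_i \suphom{a_i}{a_i'} \Rightarrow \suphom{b}{b'}$ reduces to a family of functions
\[
  \prod_{i=1}^{n} \cat{C}(a_i\otimes x_i, a_i'\otimes x_i') \longrightarrow \cat{C}\Big(b\otimes \textstyle\bigotimes_i x_i,\ b'\otimes \textstyle\bigotimes_i x_i'\Big),
\]
which is precisely the family of functions appearing in the definition of a multi-partite locally-applicable transformation. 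I would then match the side-conditions: extranaturality of this family in each pair $(x_i,x_i')$ is exactly the multi-input naturality law \eqref{eq:multi_nat_law}, and compatibility with the strength of the codomain $\suphom{b}{b'}$, transported back through each slot, is exactly the per-hole strength law \eqref{eq:multi_strength_law}. Conversely, any family satisfying \eqref{eq:multi_nat_law} and \eqref{eq:multi_strength_law} assembles into a cowedge and hence a well-defined map out of the coend, and I would check these two passages are mutually inverse.

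The main obstacle is the coend quotient. A morphism out of $\modtensor_i \suphom{a_i}{a_i'}$ must respect the coequalisation that lets separable maps slide between adjacent holes (the relation displayed in the Separable example). I would show this constraint is automatically discharged by the two explicit laws: sliding a map from one hole's environment into the next is a composite of a naturality move \eqref{eq:multi_nat_law} followed by a strength move \eqref{eq:multi_strength_law}, so no equation beyond these two is imposed. Care is needed here to track the symmetric-monoidal coherences, namely the reindexing of the $x_i$ under the symmetry that arises when pulling a tensorial extension into a \emph{chosen} hole, which is what the definition means by ``up to the obvious symmetry required'' in \eqref{eq:multi_strength_law}. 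Once the sliding relation is seen to be redundant, the bijection between strong natural transformations and multi-partite locally-applicable transformations is complete.
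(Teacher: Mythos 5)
Your proposal is correct and follows essentially the same route as the paper: both reduce strong natural transformations out of $\otimes_\cat{C}$ via the universal property of Day convolution and the Yoneda lemma to families of functions $\prod_i \cat{C}(a_i\otimes x_i, a_i'\otimes x_i')\to\cat{C}\big(b\otimes\bigotimes_i x_i,\ b'\otimes\bigotimes_i x_i'\big)$, and then identify the residual conditions with laws \eqref{eq:multi_nat_law} and \eqref{eq:multi_strength_law}. The only difference is presentational: the paper's coend-calculus chain of isomorphisms lands on natural families in $\copsh{\opt(\cat{C})\times\opt(\cat{C})}$, so that naturality in $\opt(\cat{C})$ in each slot packages at once the three things you verify separately (naturality, compatibility with the strengths, and well-definedness on the sliding quotient).
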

\begin{proof}
	For simplicity we consider the case $n=2$ as it is straightforward to generalise to larger $n$.
	Let $P_1$, $P_2$ and $R$ be strong profunctors.
	We have the following natural isomorphism arising from the adjunction between the representable and corepresentable embeddings of the tensor product of $\opt(\cat{C})$ in $\Prof$.
	\begin{align*}
		& \StProf(\cat{C})( P_1 \otimes_{\cat{C}} P_2 , R) \\
		& \cong \StProf(\cat{C}) \left( \int^{\vb*{a}\vb*{b}}  P_1\vb*{a} \times P_2\vb*{b} \times \opt(\cat{C}) (\vb*{a} \otimes \vb*{b},  -) , R(-) \right) \\
		& \cong \int_{\vb*{c}} \set \left( \int^{\vb*{a}\vb*{b}}  P_1\vb*{a} \times P_2\vb*{b} \times \opt(\cat{C}) (\vb*{a} \otimes \vb*{b},  \vb*{c}) , R\vb*{c} \right) \\
		& \cong \int_{\vb*{a}\vb*{b}\vb*{c}} \set \big( P_1\vb*{a} \times P_2\vb*{b} \times \opt(\cat{C}) (\vb*{a} \otimes \vb*{b},  \vb*{c}) , R\vb*{c} \big) \\
		& \cong \int_{\vb*{a}\vb*{b}\vb*{c}} \set \big( P_1\vb*{a} \times P_2\vb*{b} , \set( \opt(\cat{C})(\vb*{a} \otimes \vb*{b},  \vb*{c}), R\vb*{c}) \big) \\
		& \cong \int_{\vb*{a}\vb*{b}} \set \left( P_1\vb*{a} \times P_2\vb*{b} , \int_{\vb*{c}} \set( \opt(\cat{C}) (\vb*{a} \otimes \vb*{b}, \vb*{c}), R\vb*{c} ) \right) \\    
		& \cong \int_{\vb*{a}\vb*{b}} \set ( P_1\vb*{a} \times P_2\vb*{b} , R(\vb*{a} \otimes \vb*{b})) \\
		& \cong \copsh{\opt(\cat{C}) \times \opt(\cat{C})}(P_1(-) \times P_2(\bl) , R(- \otimes \bl))
	\end{align*}
	An element of the above set is an assignment of a family of functions 
	\begin{equation*}
		\eta_{(x_1 , x_1')(x_2 , x_2')} : P_1(x_1 , x_1') \times P_2(x_2 , x_2') \rightarrow R(x_1 \otimes x_2 , x_1' \otimes x_2')
	\end{equation*} 
	to each pair of objects in $\opt(\cat{C})$, in other words, to each $4$-tuple of objects of $\cat{C}$, such that this assignment is natural in $\opt(\cat{C})$.
	Taking $P_1=\suphom{a}{a'}$, $P_2=\suphom{b}{b'}$ and $R=\suphom{c}{c'}$ and unpacking the naturality square with respect to functoriality in $\opt(\cat{C})$ encodes the following condition
	\begin{equation*}
  		\tikzfigscale{1}{figs/multi_htensor_proof_2} = \tikzfigscale{1}{figs/multi_htensor_proof_1}  
	\end{equation*}
	where we have informally drawn wires running over $\eta$ to save vertical space. It is clear that this diagrammatic equation packages both \eqref{eq:multi_nat_law} and \eqref{eq:multi_strength_law}. 
\end{proof}

\subsection{Proof of Lemma \ref{lem:seq}}\label{sec:seq_proof}
\begin{lemma*}(Restated for clarity).
	On any symmetric monoidal category $\cat{C}$ the morphisms of type \[ \eta : \suphom{b}{b'} \seq \suphom{a}{a'} \rightarrow\suphom{c}{c'} \] in $\StProf(\cat{C})$ are the families of functions of type \[\eta_{x,x',z}: \cat{C}(a \otimes x, a' \otimes z)\times \cat{C}(b \otimes z , b' \otimes x') \rightarrow \cat{C}(c \otimes x, c' \otimes x')\] satisfying the equations \eqref{eq:vert_naturality} and \eqref{eq:vert_dinaturality}.
\end{lemma*}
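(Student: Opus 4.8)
The plan is to unpack the sequential tensor $\seq$ as profunctor composition and then apply the universal property of the coend. By definition $\suphom{b}{b'} \seq \suphom{a}{a'}$ is the composite of the two hom-profunctors, so its value at $(x,x')$ is the coend
\[(\suphom{b}{b'}\seq\suphom{a}{a'})(x,x') = \int^{z} \cat{C}(a\otimes x, a'\otimes z) \times \cat{C}(b\otimes z, b'\otimes x').\]
The internal variable $z$ is precisely the wire joining the two holes, and an element is a pair $(\phi_1,\phi_2)$ with $\phi_1: a\otimes x \to a'\otimes z$ and $\phi_2: b\otimes z \to b'\otimes x'$.

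First I would use the mapping-out property of the coend: a function out of $\int^{z}(\cdots)$ is exactly a cowedge, i.e.\ a family $\eta_{x,x',z}$ of functions of the stated type that is \emph{dinatural} in $z$. Spelling out dinaturality against a morphism $h:z\to z'$ means that postcomposing $\phi_1$ with $1_{a'}\otimes h$ has the same effect as precomposing $\phi_2$ with $1_{b}\otimes h$; this is exactly the sliding equation \eqref{eq:vert_dinaturality}. Thus the coend quotient is precisely captured by \eqref{eq:vert_dinaturality}, and any family obeying \eqref{eq:vert_dinaturality} descends to a well-defined map on the coend.

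Second I would impose the two remaining conditions that make $\eta$ a morphism of $\StProf(\cat{C})$: naturality in $\opcat{\cat{C}}\times\cat{C}$ (compatibility with pre- and post-composition on the outer wires $x,x'$) and strength (compatibility with the action induced on the composite profunctor by the universal property of the coend). Exactly as in the single-party case, where laws \eqref{eq:nat_law} and \eqref{eq:strength_law} fuse into the comb-commutation law \eqref{eq:commute_comb}, and as in the proof of Theorem \ref{thm:separable_supermaps} where naturality over $\opt(\cat{C})$ packages ordinary naturality together with strength, these two requirements combine into the single diagrammatic law \eqref{eq:vert_naturality}. I would verify both directions: that a strong natural transformation restricts to a family satisfying \eqref{eq:vert_naturality} and \eqref{eq:vert_dinaturality}, and conversely that any such family assembles into a genuine strong natural transformation, using \eqref{eq:vert_dinaturality} to pass through the coend and \eqref{eq:vert_naturality} to secure naturality and strength.

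The main obstacle I expect is the careful diagrammatic matching in this last step: one must check that the strength induced on $\suphom{b}{b'}\seq\suphom{a}{a'}$ by the coend's universal property, combined with profunctorial naturality on the outer wires, reproduces precisely the single drawn equation \eqref{eq:vert_naturality} and no more. Keeping track of the variances, of which hole an environment extension is pulled into, and of the fact that the induced strength must itself respect the $z$-quotient is the delicate bookkeeping; the abstract content is routine once the coend universal property is in hand. An alternative, fully formal route would mirror the coend-calculus chain in the proof of Theorem \ref{thm:separable_supermaps}, rewriting $\StProf(\cat{C})(\suphom{b}{b'}\seq\suphom{a}{a'},\suphom{c}{c'})$ via Theorem \ref{thm:optics_presheaves}, Fubini and the co-Yoneda lemma until the stated family of functions appears.
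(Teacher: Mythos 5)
Your proposal is correct and follows essentially the same route as the paper's proof: unpack $\seq$ as the coend $\int^z \cat{C}(a\otimes -, a'\otimes z)\times\cat{C}(b\otimes z, b'\otimes \bl)$, use its universal property (precomposition with coprojections) to trade $\eta$ for a family $\eta_{x,x',z}$ dinatural in $z$, which is exactly \eqref{eq:vert_dinaturality}, and then observe that strong naturality in $(x,x')$ --- i.e.\ naturality over $\opt(\cat{C})$ in the sense of Theorem \ref{thm:optics_presheaves} --- is exactly \eqref{eq:vert_naturality}. The paper states this more tersely but with the same two ingredients, so no further comparison is needed.
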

\begin{proof}
	Let $P_1$, $P_2$ and $Q$ be strong endoprofunctors on $\cat{C}$ and consider a strong natural transformation $\eta:P_1\seq P_2= \int^z P_1(-,z)\times P_2(z,\bl) \morph{}R(-,\bl)$.
	Upon pre-composing the components of $\eta$ with the coprojections into the coend, we see that $\eta$ is equivalently a family of functions $\eta_{x,x',z}: P_1(x,z) \times P_2(z,x') \rightarrow Q(x,x')$ which are natural in $(x,x')$ in $\opt(\cat{C})$ and dinatural in $z$ in $\cat{C}$.
	Picking $P_1$, $P_2$ and $R$ to be of the form $\suphom{a}{a'}$, the naturality and dinaturality of $\eta$ are the data given in equations \eqref{eq:vert_naturality} and \eqref{eq:vert_dinaturality} respectively.
\end{proof}

\subsection{Proof of Theorem \ref{thm:seq_supermaps}}\label{sec:seq_supermaps_proof}
\begin{theorem*}(Restated for clarity).
	The quantum supermaps on the $n$-combs are the morphisms of strong profunctors of the following type in $\StProf(\cptp)$.
	\begin{equation}\label{eq:supermap_seq_type_app}
		 \bigseq_{i=1}^n  \cptp(a_i \otimes - ,a_i' \otimes \bl) \rightarrow \cptp(c \otimes - ,c' \otimes \bl)
	\end{equation}
\end{theorem*}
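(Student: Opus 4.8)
The plan is to proceed in two stages: first an abstract unpacking of morphisms out of $\seq$ valid for any symmetric monoidal $\cat{C}$, and then a $\cptp$-specific identification with the operational notion of a supermap on combs. For the first stage I would generalise Lemma \ref{lem:seq} from two parties to $n$ by iterating the same coend manipulation. Since $\seq$ is profunctor composition, the domain is a single iterated coend $\int^{z_1\cdots z_{n-1}} \prod_{i=1}^n \cptp(a_i \otimes z_{i-1}, a_i' \otimes z_i)$ (with the conventions $z_0 = -$, $z_n = \bl$), by Fubini for coends. Pre-composing a morphism $\eta$ with the coprojections into this coend, exactly as in the proof of Lemma \ref{lem:seq}, presents $\eta$ as a family of functions
\[ \eta_{x,x',z_1,\dots,z_{n-1}} : \prod_{i=1}^n \cptp(a_i \otimes z_{i-1}, a_i' \otimes z_i) \rightarrow \cptp(c \otimes x, c' \otimes x') \]
(now with $z_0 = x$, $z_n = x'$) that is natural in $(x,x')$ over $\opt(\cptp)$ and dinatural in each internal wire $z_j$ over $\cptp$. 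The naturality packages the $n$-ary generalisation of \eqref{eq:vert_naturality}, while dinaturality in the $z_j$ packages the generalisation of \eqref{eq:vert_dinaturality}.

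The second stage is to read this data operationally and match it against the quantum-foundations definition of a supermap on $n$-combs. Here I would invoke Theorem \ref{thm:cptp_equiv2}, the produoidal isomorphism $\comb(\cptp) \cong \opt(\cptp)$, to replace the optic quotient on the input slots and internal wires with the coarser operational comb quotient; this ensures that the family $\eta$ descends to a well-defined transformation on genuine quantum $n$-combs rather than on their finer optic refinements. Under this identification the naturality over $\opt(\cptp)$ becomes the statement that $\eta$ commutes with pre- and post-composition on each environment and with combs acting on the environment, while the dinaturality in each $z_j$ becomes the statement that $\eta$ commutes with the insertion of channels on the wires joining consecutive slots.

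To close the argument I would invoke the correspondence established in \cite{wilson_locality}, which identifies locally-applicable transformations on $\cptp$ with quantum supermaps, now applied to the sequentially-structured input spaces: the natural-and-dinatural families above are exactly the sequential analogue of the multi-party locally-applicable transformations treated in Theorem \ref{thm:separable_supermaps}, and \cite{chiribella_networks} guarantees that these coincide with the admissible transformations on $n$-combs. Chaining these identifications yields a bijection between morphisms of the stated type and quantum supermaps on the $n$-combs, natural in all the data.

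I expect the main obstacle to be the compatibility of the two quotients arising in the second stage. The categorical domain quotients by the coend (optic) sliding relation \eqref{eq:sliding}, whereas the quantum-foundations definition of a supermap on combs quotients by operational equality of combs; showing that $\eta$ respects the coarser operational relation, and hence genuinely defines a map on quantum combs rather than merely on optics, is precisely where Theorem \ref{thm:cptp_equiv2} is indispensable. Verifying that the resulting conditions then line up slot-by-slot with the admissibility conditions of \cite{chiribella_networks} is the delicate book-keeping that the categorical reformulation is designed to streamline.
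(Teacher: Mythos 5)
Your first stage is sound and matches the paper's use of Lemma \ref{lem:seq}, but the paper only needs that lemma for the easy direction (every quantum supermap on $n$-combs induces a morphism of the stated type); the substance of the theorem is the converse, and there your second and third stages have a genuine gap. The correspondence of \cite{wilson_locality} that you invoke identifies quantum supermaps on a set of channels $M \subseteq \cptp(m,m')$ with strong natural transformations out of the \emph{dilation-extension} profunctor $\textrm{dext}_{-,\bl}(M)$, i.e.\ out of spaces of channels with a tensorial environment; it says nothing about families that are dinatural in sequential composition wires $z_j$. Your unpacked data is of the latter kind, so that theorem does not apply to it as stated, and appealing to \cite{chiribella_networks} to assert that such families ``coincide with the admissible transformations on $n$-combs'' presupposes exactly the bridge between the categorical and the operational notions that the theorem is supposed to establish --- the argument is circular at this point. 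The ``sequential analogue'' of the locally-applicable-transformation correspondence is not available in the cited literature; it is the content being proved.

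The paper closes this gap with a step your proposal is missing entirely. After using Theorem \ref{thm:cptp_equiv} (you correctly anticipate Theorem \ref{thm:cptp_equiv2} for general $n$) to pass from optics to combs, it uses the embedding of $\cptp$ into the compact closed category $\mathsf{CP}$ to insert swaps and reshuffle the comb-with-environment profunctor into a set $C_2((a_1',a_2),(a_1\otimes -, a_2'\otimes\bl))$ of honest bipartite channels, and then proves --- by an operational string-diagram argument using causality and discarding --- the isomorphism of strong profunctors
\begin{equation*}
  C_2((a_1',a_2),(a_1 \otimes -, a_2' \otimes \bl)) \cong \textrm{dext}_{-,\bl}\big(C_2((a_1',a_2),(a_1,a_2'))\big).
\end{equation*}
Only after this identification does the main theorem of \cite{wilson_locality} become applicable, turning strong profunctor morphisms out of the domain into supermaps on the set of $2$-combs, which \cite{chiribella_supermaps,chiribella_networks} then identify with the quantum supermaps on $n$-combs. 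This dilation-extension isomorphism is the technical heart of the proof; without it, or some replacement for it, your chain of identifications does not go through.
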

\begin{proof}
	Note that the convention of \cite{chiribella_networks} is adopted, in which the phrase $n$-comb refers to combs with $(n-1)$-inputs, so that a $1$-comb is a channel. It is clear from Lemma \ref{lem:seq} that any quantum supermap on $n$-combs can be used to construct a morphism of the type \eqref{eq:supermap_seq_type_app}, what remains is the converse.

	We focus on the $n = 2$ case, the generalisation to larger $n$ only requires more bookkeeping.
	First, observe that the equivalence between optics and combs of $\cptp$ (Theorem \ref{thm:cptp_equiv}), gives the following string of isomorphisms.
	\begin{align*}
		& \cptp(a_2 \otimes - ,a_2' \otimes \bl)  \seq  \cptp(a_1 \otimes - ,a_1' \otimes \bl) \\
		& \cong \opt(\cptp)((a_1',a_2), (a_1 \otimes -, a_2' \otimes \bl)) \\
		& \cong \comb(\cptp)((a_1',a_2),(a_1 \otimes -, a_2' \otimes \bl)) \\
		& \cong C_2 ((a_1',a_2),(a_1 \otimes -, a_2' \otimes \bl))
	\end{align*}
	$C_2 ((a_1',a_2),(a_1 , a_2' ))$ is defined to be the set of all morphisms of type $a_2 \otimes a_1 \rightarrow a_1 '  \otimes a_2 ' $ given by insertion of the swap into a comb of type $\comb(\cptp)((a_1',a_2),(a_1 \otimes -, a_2' \otimes \bl)) $.
	(That this last step is an isomorphism follows from the fact that $\cptp$ embeds into the compact closed category $\mathsf{CP}$ of all CP channels).
	By the basic circuit decomposition theorem for supermaps \cite{chiribella_supermaps}, $C_2$ is equivalently the set of all channels such that for every channel $C_1$,
	\begin{equation}
		\tikzfigscale{1}{figs/definite_proof_1} \in \cptp(a_1,a_2 ')
	\end{equation}
	where this is a diagram in $\mathsf{CP}$.

	Now consider the set $\textrm{dext}_{x,x'}(M)$ of dilation extensions of a set $M \subseteq \cptp(m,m')$ \cite{wilson_locality}.
	\begin{align*} 
		\textrm{dext}_{x,x'}(M) : = \{ &  \phi \in \cptp(m\otimes x,m'\otimes x')  \textrm{ s.t. } \forall \rho : i \rightarrow x,   \\
		& (1 \otimes Tr_{x'}) \circ \phi \circ (1 \otimes \rho) \in M  \} .  
	\end{align*} 
	These dilation extensions can be easily verified to be strong profunctors, functorial in $(x,x')$.

	We will now show that there is an isomorphism of strong profunctors \begin{equation*}
		C_2 ((a_1',a_2),(a_1 \otimes -, a_2' \otimes \bl)) \cong \textrm{dext}_{-,\bl}(C_2 ((a_1',a_2),(a_1 , a_2' )))
	\end{equation*}
	Indeed, see that
	\begin{align*}
		& \forall \rho \ \text{causal}: \quad	\tikzfigscale{1}{figs/definite_proof_2} \in \cptp(a_1,a_2 ') \\
		\implies & \forall \rho, \sigma \ \text{causal}: \quad	\tikzfigscale{1}{figs/definite_proof_3} = 1 \\
		\implies & \tikzfigscale{1}{figs/definite_proof_4} =  \tikzfigscale{1}{figs/definite_proof_5} \\ 
		\implies & \tikzfigscale{1}{figs/definite_proof_6} \in \cptp(a_1 \otimes x,a_2 ' \otimes x')  \\ 
	\end{align*}
	Consequently, we have that:
	\begin{multline*}
		\StProf(\cptp)( \opt(\cptp)((a_1',a_2), (a_1 \otimes -, a_2' \otimes \bl)), \\
		\cptp(c \otimes - , c' \otimes \bl))
	\end{multline*}
	\begin{multline*}
		\cong \StProf(\cptp) (\textrm{dext}_{-,\bl}(C_2 ((a_1',a_2),(a_1 , a_2' ))) , \\
		\cptp(c \otimes - , c' \otimes \bl))
	\end{multline*}
	which by the main theorem of \cite{wilson_locality} is the set of supermaps on the space $C_2 ((a_1',a_2),(a_1 , a_2' ))$ of $2$-combs.
 
	The proof generalises to the $n$-input setting using the following facts.
	\begin{enumerate}
		\item The $n$-arity $\seq$ tensor product gives $n-1$ input optics,
		\item for $\cptp$ those optics are isomorphic to $n-1$ input combs ($n$-combs) by Theorem \ref{thm:cptp_equiv2},
		\item those combs are equivalent (through the equivalency of networks and axiomatic combs of \cite{chiribella_networks}) to supermaps on $n-2$ input combs,
		\item by identical string diagrammatic manipulations the concrete extensions are equivalent to dilation extensions,
		\item by \cite{wilson_locality} the strong morphisms between dilations extensions internalise as standard definition supermaps on the set being dilation extended, completing the proof.
	\end{enumerate}
\end{proof}
	
\subsection{Proof of Theorem \ref{thm:cptp_decomp}}\label{sec:cptp_decomp_proof}
\begin{theorem*}(Restated for clarity).
	The category $\cptp$ has an $n$-arity supermap decomposition theorem for every $n$. 
\end{theorem*}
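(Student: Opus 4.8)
The plan is to relate both sides of the claimed isomorphism to the concrete $n$-holed combs over $\cptp$ and then conclude by transitivity. Writing $\vb*{a}_1,\dots,\vb*{a}_n$ for the input parties and $\vb*{b}$ for the output, I would treat the left-hand side $\StProf(\cptp)\big(\seq_i\suphom{a_i}{a_i'},\suphom{b}{b'}\big)$ using Theorem~\ref{thm:seq_supermaps}, and the right-hand side $\StProf(\cptp)(y_{b,b'},\seq_i y_{a_i,a_i'})$ using the Yoneda lemma together with Theorem~\ref{thm:cptp_equiv2}. The statement then becomes a matter of identifying the axiomatic and constructive notions of definite causal order, which is exactly the content of the quantum realization theorem.

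For the right-hand side, since $y_{b,b'}=y_{\vb*{b}}$ is corepresentable, the (co)Yoneda lemma for $\StProf(\cptp)\cong\copsh{\opt(\cptp)}$ gives
\[
  \StProf(\cptp)\big(y_{\vb*{b}},\textstyle\seq_i y_{a_i,a_i'}\big)\;\cong\;\big(\textstyle\seq_i y_{a_i,a_i'}\big)(\vb*{b}),
\]
and this copresheaf, evaluated at $\vb*{b}$, is by construction the set of $n$-holed optics in $\opt(\cptp)$ with input slots $\vb*{a}_1,\dots,\vb*{a}_n$ and boundary $\vb*{b}$. By Theorem~\ref{thm:cptp_equiv2} the produoidal isomorphism $\opt(\cptp)\cong\comb(\cptp)$ identifies these $n$-holed optics with the $n$-holed combs over $\cptp$, i.e.\ with the concrete, constructive notion of a definitely-ordered circuit network.

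For the left-hand side, Theorem~\ref{thm:seq_supermaps} identifies $\StProf(\cptp)\big(\seq_i\suphom{a_i}{a_i'},\suphom{b}{b'}\big)$ with the quantum supermaps on $n$-combs, i.e.\ the axiomatic (admissible) notion. The bridge between the two notions is precisely the realization theorem of \cite{chiribella_networks}: every quantum supermap admissible on $n$-combs decomposes as a concrete network of channels with $n$ holes. Threading this equivalence through the two reductions above yields the desired natural isomorphism, with the induction on $n$ (base case $n=1$ being the basic circuit decomposition theorem of \cite{chiribella_supermaps}) already carried out inside the proof of Theorem~\ref{thm:seq_supermaps} via the reduction of admissible extensions to dilation extensions.

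The main obstacle is not the existence of a bijection of underlying sets---that is furnished almost immediately by the realization theorem---but rather checking that the resulting correspondence is \emph{natural} in the parties $\vb*{a}_i$ and $\vb*{b}$, and that the concrete decomposition produced by \cite{chiribella_networks} genuinely agrees with the profunctorial Yoneda identification of the right-hand side rather than being merely abstractly isomorphic to it. Here I would lean on the fact that the isomorphisms in Theorems~\ref{thm:seq_supermaps} and~\ref{thm:cptp_equiv2} are themselves natural and monoidal, so that naturality transports along the composite; the only genuinely analytic input, the Stinespring-dilation argument relating admissible extensions to concrete combs, has already been isolated in the proofs of Theorems~\ref{thm:cptp_equiv} and~\ref{thm:seq_supermaps}, so the present proof reduces to assembling these established equivalences in the correct order.
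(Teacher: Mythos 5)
Your proposal is correct and takes essentially the same route as the paper: both identify the left-hand side with quantum supermaps on $n$-combs (via Theorem~\ref{thm:seq_supermaps}, equivalently the characterisation of locally-applicable transformations from \cite{wilson_locality}), invoke the quantum realization theorem of \cite{chiribella_networks} (for $n=1$, \cite{chiribella_supermaps}) to land in concrete combs, and then pass through $\comb(\cptp)\cong\opt(\cptp)$ (Theorems~\ref{thm:cptp_equiv} and~\ref{thm:cptp_equiv2}) and the Yoneda lemma to reach the right-hand side. The paper writes out only the $1$-arity chain and declares the $n$-arity case ``totally analogous''; your write-up is precisely that analogous case made explicit, with the additional (and welcome) remark that naturality transports along the composite because each cited isomorphism is itself natural.
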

\begin{proof}
	Beginning with the $1$-arity case, we know that 
	\begin{equation*}
		\StProf(\cptp)\big(\suphom{a}{a'},\suphom{b}{b'}\big)
	\end{equation*}
	is the space of locally applicable transformations $\vb*{a}\morph{}\vb*{b}$.
	By the characterisation theorem of \cite{wilson_locality} these are equivalent to the single-party quantum supermaps which by the decomposition theorem of \cite{chiribella_supermaps} are combs.
	Finally one notes that by Theorem \ref{thm:cptp_equiv} combs of $\cptp$ are equivalent to optics of $\cptp$.
	All together,
	\begin{align*}
		& \StProf(\cptp)\big(\suphom{a}{a'},\suphom{b}{b'}\big) \\
		& \cong \mathsf{lot}(\cat{C})((a, a'),  (b, b')) \\
		& \cong \comb(\cptp)((a , a') , (b , b')) \\
		& \cong \opt(\cptp)((a , a') , (b , b')) \\
		& \cong \StProf(\cat{C})\big(y_{b,b'},y_{a,a'} \big) 
	\end{align*}
	The general $n$-arity case works totally analogously.
\end{proof}

\subsection{Proof of Proposition \ref{prop:decomp_thms}}\label{sec:decomp_thms_proof}
\begin{proposition*}(Restated for clarity).
	A symmetric monoidal category $\cat{C}$ has a $1$-arity decomposition theorem if and only if
	\[\suphom{a}{a'}^* \cong y_{\vb*{a}}, \ \ \textrm{or equivalently}, \ \ y_{\vb*{a}}^{**} \cong y_{\vb*{a}}. \] 
	Furthermore, $\cat{C}$ has an $n$-arity supermap decomposition theorem if and only if 
	\begin{equation*}
		{( \seq_i {y_{\vb*{a}_i }^{*} })}^{*} \cong \seq_i y_{\vb*{a}_i}. 
	\end{equation*}
\end{proposition*}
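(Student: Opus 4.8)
The plan is to establish both equivalences by combining the definition of the supermap decomposition theorem with Proposition~\ref{prop:dualoptics}, which computes the weak dual of a Yoneda-embedded optic as $y_{\vb*{a}}^* \cong \suphom{a}{a'}$. The central mechanism throughout is the defining adjunction of the internal hom for $\otimes_\cat{C}$, namely $\StProf(\cat{C})(P \otimes_\cat{C} Q, R) \cong \StProf(\cat{C})(P, [Q,R])$, specialised to $R = 1_\cat{C}$ so that $[Q, 1_\cat{C}] = Q^*$.

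First I would treat the $1$-arity case. By definition, $\cat{C}$ has a $1$-arity decomposition theorem precisely when $\StProf(\cat{C})\big(\suphom{a}{a'}, \suphom{b}{b'}\big) \cong \StProf(\cat{C})\big(y_{\vb*{b}}, y_{\vb*{a}}\big)$ naturally. Using Proposition~\ref{prop:dualoptics} to rewrite $\suphom{a}{a'} \cong y_{\vb*{a}}^*$ and $\suphom{b}{b'} \cong y_{\vb*{b}}^*$, the left-hand side becomes $\StProf(\cat{C})(y_{\vb*{a}}^*, y_{\vb*{b}}^*)$. The task is then to show this hom-set is isomorphic to $\StProf(\cat{C})(y_{\vb*{b}}, y_{\vb*{a}})$ exactly when $\suphom{a}{a'}^* \cong y_{\vb*{a}}$. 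The forward direction I would obtain from the contravariant action of the negation functor $(-)^*$: applying $(-)^*$ sends a map $y_{\vb*{b}} \to y_{\vb*{a}}$ to a map $y_{\vb*{a}}^* \to y_{\vb*{b}}^*$, and I would argue this assignment is a bijection on hom-sets precisely under the double-dual condition $y_{\vb*{a}}^{**} \cong y_{\vb*{a}}$ (natural in $\vb*{a}$), since that is what allows the continuation-style unit $y_{\vb*{a}} \to y_{\vb*{a}}^{**}$ to be inverted. The equivalence of the two stated conditions, $\suphom{a}{a'}^* \cong y_{\vb*{a}}$ and $y_{\vb*{a}}^{**} \cong y_{\vb*{a}}$, is immediate once one substitutes $\suphom{a}{a'} \cong y_{\vb*{a}}^*$ into the first.

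For the $n$-arity case I would run the analogous computation starting from the definition $\StProf(\cat{C})\big(\seq_i \suphom{a_i}{a_i'}, \suphom{b}{b'}\big) \cong \StProf(\cat{C})(y_{\vb*{b}}, \seq_i y_{\vb*{a}_i})$. Rewriting each $\suphom{a_i}{a_i'}$ as $y_{\vb*{a}_i}^*$ turns the domain into $\seq_i y_{\vb*{a}_i}^*$, and then applying the adjunction with $R = 1_\cat{C}$ gives $\StProf(\cat{C})(y_{\vb*{b}}, (\seq_i y_{\vb*{a}_i}^*)^*)$. Comparing this against the right-hand side $\StProf(\cat{C})(y_{\vb*{b}}, \seq_i y_{\vb*{a}_i})$ and invoking the fact that $y_{\vb*{b}}$ ranges over all corepresentables (so that the Yoneda lemma makes the two internal-hom targets isomorphic iff they agree as objects), I would conclude that the $n$-arity decomposition theorem holds iff $(\seq_i y_{\vb*{a}_i}^*)^* \cong \seq_i y_{\vb*{a}_i}$, which is the claimed condition.

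The main obstacle I anticipate is the careful handling of the negation functor $(-)^*$ on hom-sets: because $(-)^*$ is only a \emph{weak} (non-involutive) dual and $\StProf(\cat{C})$ is merely a dialogue category rather than $*$-autonomous, I cannot simply assert that $(-)^*$ is fully faithful. The delicate point is showing that the double-dualisation unit restricted to the relevant objects is invertible exactly under the stated isomorphism condition, and that this invertibility upgrades to a bijection of hom-sets natural in the parameters. I would secure this by tracking the adjunction $[-,i] \dashv \opcat{[-,i]}$ mentioned earlier in the excerpt and showing that the decomposition isomorphism, which must be natural, forces the object-level double-dual isomorphism and conversely; the Yoneda lemma applied to corepresentables $y_{\vb*{b}}$ is what converts the hom-set statement into the clean object-level statement in the final line.
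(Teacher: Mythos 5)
Your proposal is correct and follows essentially the same route as the paper's own proof: rewrite the supermap types via Proposition~\ref{prop:dualoptics}, pass across the closed-monoidal (dialogue) adjunction for $\otimes_\cat{C}$ so that the hom-set becomes $\StProf(\cat{C})(y_{\vb*{b}}, y_{\vb*{a}}^{**})$, and then apply the Yoneda lemma on the corepresentables $y_{\vb*{b}}$ to convert the natural hom-set isomorphism into the object-level condition, with the $n$-arity case handled by the identical computation. The only cosmetic difference is that you dualise both the domain and codomain and reason about invertibility of the double-dual unit, whereas the paper rewrites only the codomain as $y_{\vb*{b}}^*$ and flips it across the adjunction in one step; the decisive mechanism (adjunction followed by Yoneda) is the same, and your fallback to exactly that mechanism in the final paragraph closes the gap you correctly flagged about $(-)^*$ not being fully faithful.
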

\begin{proof}
We again begin with the $1$-arity case:
	\begin{align*} 
		&\StProf(\cat{C})\big(\suphom{a}{a'},\suphom{b}{b'}\big) \\
		&\cong \StProf(\cat{C})\big(\suphom{a}{a'},y_{\vb*{b}}^*\big) \\
		&\cong \StProf(\cat{C})\big(y_{\vb*{b}}, \suphom{a}{a'}^*)\\
		&\cong \suphom{a}{a'}^*(\vb*{b}) \cong y_{\vb*{a}}^{\vb*{b}}
	\end{align*} 
	The $n$-arity case is proven identically. 
\end{proof}

\end{document}